\newtheorem{lemma}{Lemma}
\newtheorem{theorem}{Theorem}
\newtheorem{claim}{Claim}
\theoremstyle{definition}
\newtheorem{definition}{Definition}
\newcommand{\R}{\mathbb R}
\newcommand{\Z}{\mathbb Z}
\newcommand{\1}{\mathbbm 1}
\newcommand{\size}[1]{\ensuremath{\left|#1\right|}}
\newcommand{\ceil}[1]{\ensuremath{\left\lceil#1\right\rceil}}
\newcommand{\set}[1]{\ensuremath{\left\{#1\right\}}}
\newcommand{\pr}[1]{\ensuremath{\left(#1\right)}}
\newcommand{\br}[1]{\ensuremath{\left[#1\right]}}
\DeclareMathOperator*{\argmin}{arg\,min}
\DeclareMathOperator{\supp}{supp}
\tikzstyle{vertex}=[circle,draw,align=center]
\tikzstyle{weight} = [font=\small, black]
\tikzstyle{edge} = [draw,-]
\tikzstyle{selected edge} = [draw,line width=3pt,-,gray!70]
\tikzstyle{matched edge} = [draw,line width=3pt,-]
\tikzstyle{dashed edge} = [draw,dashed]
\tikzstyle{selected vertex} = [vertex, fill=black!70, text=white]
\tikzstyle{remove selected vertex} = [vertex, fill=white!100]
\tikzstyle{blue colored edge} = [draw,line width=3pt,-,blue!80]
\tikzstyle{blue colored curved edge left} = [draw,line width=3pt,-,blue!80,bend left]
\tikzstyle{blue colored curved edge right} = [draw,line width=3pt,-,blue!80,bend right]
\tikzstyle{green colored edge} = [draw,line width=3pt,-,green!80]
\tikzstyle{remove selected edge} = [draw,line width=4pt,-,white!100]
\tikzstyle{remove matched edge} = [draw,line width=4pt,-,white!80,bend right]
\newlength{\rad}
\title{\textbf{Stabilizing Weighted Graphs}}
\author{Zhuan Khye Koh \and Laura Sanit\`{a}\\}
\date{\normalsize{Combinatorics and Optimization, University of Waterloo\\
Waterloo, ON N2L 3G1, Canada\\
\tt{$\{$zkkoh,lsanita$\}$@uwaterloo.ca}}}
\begin{document}

\maketitle

\begin{abstract}
An edge-weighted graph $G=(V,E)$ is called \emph{stable} if  the value of a maximum-weight matching equals the value of a maximum-weight \emph{fractional} matching. Stable graphs play an important role in some interesting game theory problems, such as \emph{network bargaining} games and \emph{cooperative matching} games, because they characterize instances which admit stable outcomes. Motivated by this, in the last few years many researchers have investigated the algorithmic problem of turning a given graph into a stable one, via edge- and vertex-removal operations. However, all the algorithmic results developed in the literature so far only hold for \emph{unweighted} instances, i.e., assuming unit weights on the edges of $G$.

We give the first polynomial-time algorithm to find a minimum cardinality subset of vertices whose removal from $G$ yields a stable graph, for any weighted graph $G$. The algorithm is combinatorial and exploits new structural properties of basic fractional matchings, which are of independent interest. In particular, one of the main ingredients of our result is the development of a polynomial-time algorithm to compute
a basic maximum-weight fractional matching with minimum number of \emph{odd cycles} in its support. This generalizes
a fundamental and classical result on unweighted matchings given by Balas more than 30 years ago, which we expect to prove useful beyond this particular application.

In contrast, we show that the problem of finding a minimum cardinality subset of edges whose removal from a weighted graph $G$ yields a stable graph, does not admit any constant-factor approximation algorithm, unless $P=NP$. In this setting, we develop an $O(\Delta)$-approximation algorithm for the problem, where $\Delta$ is the maximum degree of a node in $G$.
\end{abstract}

\section{Introduction}

Several interesting game theory problems are defined on networks, where the vertices 
represent players and the edges model the way players 
can interact with each other. In many such games, the structure of the underlying graph that describes the interactions among players is essential in determining the existence of stable outcomes for the corresponding games, i.e., outcomes where players have no incentive to deviate.
Popular examples are \emph{cooperative matching} games, introduced by Shapley and Shubik~\cite{journals/ijgt/Shapley71}, and 
\emph{network bargaining} games, defined by Kleinberg and Tardos~\cite{conf/stoc/KleinbergT08}, both extensively studied in the game theory community. 
Instances of such games are described by a graph $G=(V,E)$ with edge weights $w \in \mathbb R^E_{\geq 0}$, where $V$ represents a set of players, and the value of a \emph{maximum-weight matching}, denoted as $\nu(G)$, is the total value that the players could get by interacting with each other. 

An important role in such games is played by so-called \emph{stable} graphs.  An edge-weighted graph $G=(V,E)$ is called stable if 
the value $\nu(G)$ of a maximum-weight matching equals the value of a maximum-weight \emph{fractional} matching, denoted as $\nu_f(G)$. Formally, $\nu_f(G)$ is given by the optimal value of the standard linear programming relaxation of the matching problem, defined as
\begin{equation}\tag{P}
  \nu_f(G) := \max\set{w^{\top}x:x(\delta(v))\leq 1 \; \forall v\in V, x\geq 0}
\end{equation}
Here $x$ is a vector in $\mathbb R^E$, $\delta(v)$ denotes the set of edges incident to the node $v$, and for a set $F\subseteq E$, 
$x(F)=\sum_{e\in F} x_e$. Feasible solutions of the above LP are called \emph{fractional matchings}.

The relation that interplays between stable graphs and network games is as follows. In cooperative matching games \cite{journals/ijgt/Shapley71}, the goal is to find an allocation of the value $\nu(G)$ among the vertices, given as a vector $y \in \mathbb R^V_{\geq 0}$, such that no subset $S \subseteq V$ has an incentive to form a \emph{coalition} to deviate. This condition is formally defined by the constraints $\sum_{v \in S} y_v \geq \nu(G[S]), \forall S \subseteq V$, where $G[S]$ denotes the subgraph induced by $S$, and an allocation $y$ that satisfies the above set of constraints is called \emph{stable}. Deng et al.~\cite{journals/mor/DengIN99} proved that a stable allocation exists if and only if the graph describing the game is a stable graph. This is an easy consequence from LP duality. If $y$ is a stable allocation, then $y$ is a feasible dual solution of value $\nu(G)$, showing that $\nu_f(G)=\nu(G)$. Conversely, if $\nu_f(G)=\nu(G)$, then an optimal dual solution yields a stable allocation of $\nu(G)$.

In network bargaining games \cite{conf/stoc/KleinbergT08}, each edge $e$ represents a deal of value $w_e$. A player can enter in a deal with at most one neighbor, and when a deal is made, the players have to agree on how to split the value of the deal between them. An outcome of the game is given by a pair $(M, y)$, where $M$ is a matching of $G$ and stands for the set of deals made by the players, and $y \in \mathbb R^V_{\geq 0}$ is an allocation vector representing how the deal values have been split. Kleinberg and Tardos have defined a notion of \emph{stable} outcome for such games, as well as a notion of \emph{balanced} outcome, that are outcomes where players have no incentive to deviate, and in addition the deal values are ``fairly'' split among players. They proved that a balanced outcome exists if and only if a stable outcome  exists, and this happens if and only if the graph $G$ describing the game is stable.

Motivated by the above connection, in the last few years many researchers have investigated the algorithmic problem of turning a given graph into a stable one, by performing a minimum number of modifications on the input graph \cite{journals/mp/BockCKPS15,conf/ipco/AhmadianHS16,journals/tcs/ItoKKKO17,arxiv/ChandrasekaranG16,journals/mst/KonemannL015,journals/tcs/BiroBGKP,journals/ijgt/BiroKP12}. Two natural operations which have a nice network game interpretation, are vertex-deletion and edge-deletion. They correspond to \emph{blocking players} and \emph{blocking deals}, respectively, in order to achieve stability in the corresponding games. Formally,  a subset of vertices $S \subseteq V$ is called a \emph{vertex-stabilizer} if the graph $G\setminus S := G[V\setminus S]$ is stable.
Similarly, a subset of edges $F \subseteq E$ is called an \emph{edge-stabilizer} if the graph $G\setminus F := (V, E\setminus F)$ is stable.
The corresponding optimization problems, which are the focus of this paper, are: 

\smallskip
\noindent
{\bf Minimum Vertex-stabilizer:} \emph{Given an edge-weighted graph $G=(V,E)$, find a minimum-cardinality vertex-stabilizer.}\\
{\bf Minimum Edge-stabilizer:} \emph{Given an edge-weighted graph $G=(V,E)$, find a minimum-cardinality edge-stabilizer.}

\smallskip
The above problems have been studied quite intensively in the last few years on unweighted graphs.
In particular, Bock et al.~\cite{journals/mp/BockCKPS15} have showed that finding a minimum-cardinality edge-stabilizer is hard to approximate within a factor of $(2-\varepsilon)$, assuming Unique Game Conjecture (UGC) \cite{conf/stoc/Khot02}. On the positive side, 
they have given an approximation algorithm for the edge-stabilizer problem, whose approximation factor depends on the sparsity of the input graph $G$. In other work, Ahmadian et al.~\cite{conf/ipco/AhmadianHS16} and Ito et al.~\cite{journals/tcs/ItoKKKO17} have shown independently that finding a minimum-cardinality vertex-stabilizer is a polynomial-time solvable problem. These (exact and approximate) algorithmic results, developed for unweighted instances, do not easily generalize when dealing with arbitrary edge-weights, since they heavily rely on the structure of maximum matchings in unweighted graphs. In fact, unweighted instances of the above problems exhibit a very nice property, as shown in \cite{journals/mp/BockCKPS15,conf/ipco/AhmadianHS16}: the removal of any inclusion-wise minimal edge-stabilizer (resp. vertex-stabilizer) from a graph $G$ \emph{does not decrease} the cardinality of a maximum matching in the resulting graph. This property ensures that there is at least one maximum-cardinality matching that survives in the modified graph, and this insight can be successfully exploited when designing (exact and approximate) algorithms. Unfortunately, it is not difficult to realize that this crucial property does not hold anymore when dealing with edge-weighted graphs (see Appendix I), and in fact, the development of algorithmic results for weighted graphs requires substantial new ideas.

\paragraph{Our results and techniques.} Our main results are as follows.

\smallskip
\noindent
{\bf Vertex-stabilizers}. We give the first polynomial-time algorithm to find a minimum-cardinality vertex-stabilizer $S$, in any weighted graph $G$. Our algorithm also ensures that $\nu(G\setminus S) \geq \frac{2}{3}\nu(G)$, i.e., the value of a maximum-weight matching
is preserved up to a factor of $\frac{2}{3}$, and we show that this factor is tight in general. Specifically, as previously mentioned, a minimum-cardinality vertex-stabilizer for a weighted graph might decrease the value of a maximum-weight matching in the resulting graph. From a network bargaining perspective, this means we are decreasing the total value which the players are able to get, which is of course undesirable. However, we can show this is inevitable, since deciding whether there exists \emph{any} vertex-stabilizer $S$ that preserves the value of a maximum-weight matching (i.e., such that $\nu(G\setminus S)=\nu(G)$) is an NP-hard problem. Furthermore, we give an example of a graph $G$ where \emph{any} vertex-stabilizer $S$ decreases the value of a maximum-weight matching by a factor of essentially $\frac{1}{3}$, i.e. $\nu(G\setminus S) \leq \pr{\frac{2}{3} + \varepsilon} \nu(G)$ (for an arbitrarily small $\varepsilon >0$). This shows that the bounds of our algorithm are essentially best possible: the algorithm finds a vertex-stabilizer $S$ whose cardinality is the \emph{smallest} possible, and preserves the value of a maximum-weight matching up to a factor of $\frac{2}{3}$, that is the tightest factor that holds for all instances.

The above result is based on two main ingredients. The first one is giving a lower bound on the cardinality of a minimum vertex-stabilizer, which generalizes the lower bound used in the unweighted setting, and is based on the structure of optimal basic solutions of (P). In particular, it was shown in \cite{conf/ipco/AhmadianHS16} that a lower bound on the cardinality of a vertex-stabilizer for unweighted graphs is given by the minimum number of \emph{odd-cycles} in the support of an optimal basic solution to (P). We show that this lower bound holds also for weighted graphs, though 
this generalization is not obvious (in fact, as we will show later, the same generalization does \emph{not} hold for edge-stabilizers). Consequently, our proof is much more involved, and requires different ideas. 
The second main ingredient is giving a polynomial-time algorithm for computing an optimal basic solution to (P) with the smallest number of odd-cycles in its support, which is of independent interest, as highlighted in the next paragraph.

\smallskip
\noindent
{\bf Computing maximum fractional matchings with minimum cycle support.} 
The fractional matching polytope given by (P) has been
extensively studied in the literature, and characterizing 
instances for which a maximum fractional matching equals an integral one is a natural graph theory question
(see \cite{journals/mp/BockCKPS15,conf/ipco/AhmadianHS16}). It is well-known that basic solutions of (P)
are half-integral, and the support of a basic solution is the disjoint union of a matching (given by 1-valued entries) and a set of odd-cycles (given by half-valued entries). 
Balas \cite{journal/nms/Balas81} gave a nice polynomial-time algorithm to compute a basic maximum fractional matching in an unweighted graph, with minimum number of odd-cycles in its support. This is a classical  result on matching theory, which has been known for more than 30 years. In this paper, we generalize this result to arbitrary weighted instances, exploiting nice structural properties of basic fractional matchings. Our algorithm is based on combinatorial techniques, and we expect that this result will prove useful beyond this particular application. 

\smallskip
\noindent
{\bf Edge-stabilizers}. When dealing with edge-removal operations, the stabilizer problem becomes harder, already in the unweighted setting.
It is shown in \cite{journals/mp/BockCKPS15} that finding a minimum edge-stabilizer is as hard as vertex cover, and whether the problem admits a constant factor approximation algorithm is an interesting open question. We here show that the answer to this question is negative for weighted graphs, since we prove that the minimum edge-stabilizer problem for a weighted graph $G$ does not admit any constant-factor approximation algorithm, unless $P=NP$. From an approximation point of view, we show that the algorithm we developed for the vertex-stabilizer problem translates into a $O(\Delta)$-approximation algorithm for the edge-stabilizer problem, where $\Delta$ is the maximum degree of a node in $G$.

Once again, the analysis relies on proving a lower bound on the cardinality of a minimum edge-stabilizer. It was shown in \cite{journals/mp/BockCKPS15} that a lower-bound on the cardinality of a minimum edge-stabilizer for unweighted graphs is again given by the minimum number of odd-cycles in the support of an optimal solution to (P) (called $\gamma(G)$). Interestingly, we show that, differently from the vertex-stabilizer setting, here this lower bound does not generalize, and $\gamma(G)$ \emph{is not} a lower bound on the cardinality of an edge-stabilizer for arbitrary weighted graphs. However, we are able to show that $\ceil{\gamma(G)/2}$ \emph{is} a lower bound on the cardinality of a minimum edge-stabilizer, and this is enough for our approximation purposes.

\smallskip
\noindent
{\bf Additional results}. Lastly, we also generalize a result given in \cite{conf/ipco/AhmadianHS16} on finding a 
 minimum vertex-stabilizer which avoids a fixed maximum matching $M$, on unweighted graphs. We prove that if $M$ is a maximum-weight matching of a weighted graph $G$, then finding a minimum vertex-stabilizer that is element-disjoint from $M$ is a polynomial-time solvable problem. Otherwise, if $M$ is not a maximum-weight matching, the problem is at least as hard as vertex cover. We supplement this result with a 2-approximation algorithm for this case, that is best possible assuming UGC.

\paragraph{Related work.}
Bir\'o et al.~\cite{journals/tcs/BiroBGKP} were the first to consider the edge-stabilizer problem in weighted graphs,
and they showed NP-hardness for this case. Stabilizing a graph
via different operations on the input graph (other than removing edges/vertices) has also been
studied. In particular, Ito et al.~\cite{journals/tcs/ItoKKKO17} have given polynomial-time algorithms to stabilize an unweighted graph by adding edges and by adding vertices. Chandrasekaran et al.~\cite{arxiv/ChandrasekaranG16} have recently studied 
the problem of stabilizing unweighted graphs by fractionally increasing edge weights.
Ahmadian et al.~\cite{conf/ipco/AhmadianHS16} have also studied the vertex-stabilizer problem on unweighted
graphs, but in the more-general setting where there are (non-uniform) costs for removing vertices, and gave approximation algorithms for this case. 

Bir\'o et al.~\cite{journals/ijgt/BiroKP12} and K\"onemann et al.~\cite{journals/mst/KonemannL015} studied a variant of the problem where the goal is to compute a minimum-cardinality set of \emph{blocking pairs}, that are edges whose removal from the graph yield the existence of a fractional vertex cover of size at most $\nu(G)$ (but note that the resulting graph might not be stable). Mishra et al.~\cite{journals/algorithmica/MishraRSSS11} studied the problem of converting a graph into a \emph{K\"onig-Egerv\'ary graph}, via vertex-deletion and edge-deletion operations. A K\"onig-Egerv\'ary graph is a graph where the size of a maximum matching equals the size of an (integral) minimum vertex cover. They gave an $O(\log n \log \log n)$-approximation algorithm for the vertex-removal setting in unweighted graphs, and showed constant-factor hardness of approximation (assuming UGC) for 
both the minimum vertex-removal and edge-removal problem.

\paragraph{Paper Organization.} In Section \ref{sec:preliminaries}, we give some preliminaries and discuss notation. 
In Section \ref{sec:frac_matching}, we give a polynomial-time algorithm to compute an optimal basic solution to (P) with minimum number
of odd cycles in its support. This algorithm will be crucially used in Section \ref{sec:vertex_stabilizer}, where we give our
results on vertex-stabilizers. Section \ref{sec:edge_stabilizer} reports our
results on edge-stabilizers. Finally, our additional results can be found in Section \ref{sec:additional}.

\section{Preliminaries and notation}
\label{sec:preliminaries}

A key concept that we will use is LP duality. The dual of (P) is given by
\begin{equation}\tag{D}
  \tau_f(G) := \min\set{\1^{\top}y:y_u+y_v\geq w_{uv} \; \forall uv\in E, y\geq 0}.
\end{equation}
As feasible solutions to (P) are called fractional matchings, we call feasible solutions to (D) \emph{fractional $w$-vertex covers}. In fact,
(D) is the standard LP-relaxation of the problem of finding a minimum $w$-vertex cover, obtained by 
adding integrality constraints on (D). We also call basic feasible solutions to (P) as \emph{basic fractional matchings}. An application of duality theory yields the following relationship
$\nu(G) \leq \nu_f(G) = \tau_f(G)$. Recall that a graph $G$ is \emph{stable} if $\nu(G) = \nu_f(G) = \tau_f(G)$.

For a vector $x\in \R^{E}$ and any subset $F\subseteq E$, we denote $x_{-F}\in \R^{E -F}$ as the subvector obtained by dropping the entries corresponding to $F$. For any multisubset $F\subseteq E$, we define $x(F):=\sum_{e\in F}x_e$. Note that an element may be accounted for multiple times in the sum if it appears more than once in $F$. We denote $\supp(x):=\set{e\in E:x_e\neq 0}$ as the support of $x$. For any positive integer $k$, $[k]$ represents the set $\set{1,2,\dots,k}$.

Given an undirected graph $G$, we denote by $n$ the number of vertices and by $m$ the number of edges. For edge weights $w\in\R^m_+$ and a matching $M$ in $G$, a path/walk is called \emph{$M$-alternating} if its edges alternately belong to $M$ and $E\setminus M$. Recall that a walk is a path that is non-simple. We say that an $M$-alternating path/walk is \emph{valid} if it starts with an $M$-exposed vertex or an edge in $M$, and ends with an $M$-exposed vertex or an edge in $M$ (see Figure \ref{fig:valid} for an example). A valid $M$-alternating path/walk $P$ is called \emph{$M$-augmenting} if $w(P\setminus M)>w(P\cap M)$. A cycle is also called $M$-alternating if its edges alternately belong to $M$ and $E\setminus M$. Note that an $M$-alternating cycle has even length. An $M$-alternating cycle $C$ is said to be $M$-augmenting if $w(C\setminus M)>w(C\cap M)$.

\begin{figure}[H]
\centering
\def\dist{1.2}
\begin{tikzpicture}[node distance=\dist cm, inner sep=2.5pt, minimum size=2.5pt, auto]
\node[vertex] (a1) {};
\node[vertex] (a2) [right of=a1] {};
\node[vertex] (a3) [right of=a2] {};
\node[vertex] (a4) [right of=a3] {};
\node[vertex] (a5) [right of=a4] {};
\node[vertex] (a6) [right of=a5] {};	

\path[matched edge] (a1) -- (a2);
\path[edge] (a2) -- (a3);
\path[matched edge] (a3) -- (a4);
\path[edge] (a4) -- (a5);
\path[matched edge] (a5) -- (a6);

\draw[decoration={brace,raise=12pt},decorate] (0,0) -- node[weight,above=18pt] {$P$} (5*\dist,0);
\draw[decoration={brace,raise=12pt},decorate] (4*\dist,0) -- node[weight,below=18pt] {$P'$} (0,0);
\end{tikzpicture}

\caption{An example of a valid and invalid alternating path. Here, $P$ is valid while $P'$ is invalid.}
\label{fig:valid}
\end{figure}
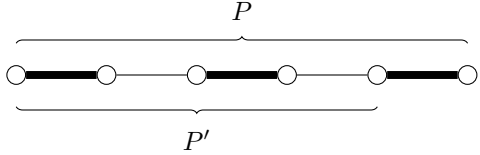

\begin{definition}
An odd cycle $C=(e_1,e_2,\dots,e_{2k+1})$ is called an $M$-\emph{blossom} if $e_i\in M$ for all even $i$ and $e_i\notin M$ for all odd $i$. The vertex $v:=e_1\cap e_{2k+1}$ is called the \emph{base} of the blossom. The blossom is \emph{augmenting} if $v$ is $M$-exposed and $w(C\setminus M)>w(C\cap M)$.
\end{definition}

\begin{definition}\label{def:flower}
An \emph{$M$-flower} $C\cup P$ consists of an $M$-blossom $C$ with base $v_1$ and a valid $M$-alternating path $P=(v_1,v_2,\dots,v_k)$ where $v_1v_2\in M$. The vertex $v_k$ is called the \emph{root} of the flower. The flower is \emph{augmenting} if 
\[w(C\setminus M)+2w(P\setminus M)>w(C\cap M)+2w(P\cap M).\]
\end{definition}

Given an $M$-augmenting flower $C\cup P$, if we replace the vector which places 1 on the edges of $M\cap (C\cup P)$, with the vector that places $\frac{1}{2}$ on the edges of C and 1 on the edges of $P\setminus M$, then the change in weight is exactly $\frac{1}{2}$ times LHS $-$ RHS of the above inequality. So the inequality means that this operation increases the weight. 

\begin{definition}\label{def:bicycle}
An \emph{$M$-bi-cycle} $C\cup P\cup D$ consists of two $M$-blossoms $C,D$ with bases $v_1,v_k$ respectively and an odd $M$-alternating path $P=(v_1,v_2,\dots,v_k)$ where $v_1v_2,v_{k-1}v_k\in M$. The bi-cycle is \emph{augmenting} if 
\[w(C\setminus M)+2w(P\setminus M)+w(D\setminus M)>w(C\cap M)+2w(P\cap M)+w(D\cap M).\]
\end{definition}

Note that the structures defined in Definition \ref{def:flower} and \ref{def:bicycle} might not be simple. For example, in a flower $C\cup P$, the path $P$ might intersect the blossom $C$ more than once. Figure \ref{fig:simple} illustrates some simple examples of these structures. Notice that a blossom is always simple. 

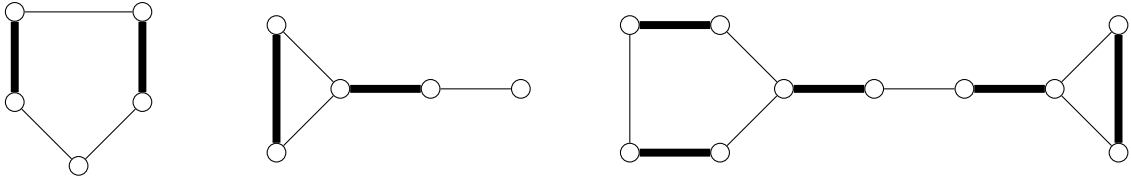
\begin{figure}[ht]
\centering
\def\dist{1.2}
\begin{minipage}{0.2\textwidth}
\centering
\begin{tikzpicture}[node distance=\dist cm, inner sep=2.5pt, minimum size=2.5pt, auto]
\node[vertex] (a1) {};
\node[vertex] (a2) [above left of=a1] {};
\node[vertex] (a3) [above right of=a1] {};
\node[vertex] (a4) [above of=a2] {};
\node[vertex] (a5) [above of=a3] {};	

\path[edge] (a1) -- (a2);
\path[edge] (a1) -- (a3);
\path[matched edge] (a2) -- (a4);
\path[matched edge] (a3) -- (a5);
\path[edge] (a4) -- (a5);
\end{tikzpicture}
\end{minipage}
\begin{minipage}{0.3\textwidth}
\centering
\begin{tikzpicture}[node distance=\dist cm, inner sep=2.5pt, minimum size=2.5pt, auto]
\node[vertex] (a1) {};
\node[vertex] (a2) [above left of=a1] {};
\node[vertex] (a3) [below left of=a1] {};
\node[vertex] (a4) [right of=a1] {};
\node[vertex] (a5) [right of=a4] {};	

\path[edge] (a1) -- (a2);
\path[edge] (a1) -- (a3);
\path[matched edge] (a2) -- (a3);
\path[matched edge] (a1) -- (a4);
\path[edge] (a4) -- (a5);
\end{tikzpicture}
\end{minipage}
\begin{minipage}{0.45\textwidth}
\centering
\begin{tikzpicture}[node distance=\dist cm, inner sep=2.5pt, minimum size=2.5pt, auto]
\node[vertex] (a1) {};
\node[vertex] (a2) [above left of=a1] {};
\node[vertex] (a3) [below left of=a1] {};
\node[vertex] (a4) [left of=a2] {};
\node[vertex] (a5) [left of=a3] {};
\node[vertex] (a6) [right of=a1] {};
\node[vertex] (a7) [right of=a6] {};
\node[vertex] (a8) [right of=a7] {};
\node[vertex] (a9) [above right of=a8] {};
\node[vertex] (a10) [below right of=a8] {};	

\path[edge] (a1) -- (a2);
\path[edge] (a1) -- (a3);
\path[matched edge] (a2) -- (a4);
\path[matched edge] (a3) -- (a5);
\path[edge] (a4) -- (a5);
\path[matched edge] (a1) -- (a6);
\path[edge] (a6) -- (a7);
\path[matched edge] (a7) -- (a8);
\path[edge] (a8) -- (a9);
\path[edge] (a8) -- (a10);
\path[matched edge] (a9) -- (a10);
\end{tikzpicture}
\end{minipage}

\caption{Simple examples of a blossom, a flower and a bi-cycle.}
\label{fig:simple}
\end{figure}
\
The significance of the structures defined above is given by the following theorem:

\begin{theorem}[\cite{conf/stoc/KleinbergT08}]\label{thm:stable}
If a graph is stable, then it does not have an $M$-augmenting flower or bi-cycle for every maximum-weight matching $M$. Otherwise, it has an $M$-augmenting flower or bi-cycle for every maximum-weight matching $M$.
\end{theorem}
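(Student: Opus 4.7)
The plan is to handle the two directions separately. For the first direction, assume $G$ is stable and let $M$ be any maximum-weight matching, so that the characteristic vector $\1_M$ is an optimal solution of (P). Suppose for contradiction that an $M$-augmenting flower $C\cup P$ exists. Following the construction in the remark after Definition~\ref{def:flower}, I define $x'$ by setting $x'_e=\frac{1}{2}$ on the edges of $C$, $x'_e=1$ on the edges of $P\setminus M$, $x'_e=0$ on the edges of $M\cap(C\cup P)$, and $x'_e=\1_M(e)$ elsewhere. A vertex-by-vertex check---the base $v_1$ receives $\frac{1}{2}+\frac{1}{2}=1$ from its two non-$M$ edges of $C$ while the $M$-edge $v_1v_2\in P$ is set to $0$; internal $C$-vertices receive $\frac{1}{2}+\frac{1}{2}=1$; internal $P$-vertices receive $0+1=1$; the endpoint of $P$ receives at most $1$---shows that $x'$ is a feasible fractional matching. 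Its weight exceeds $w^\top\1_M$ by exactly $\frac{1}{2}\bigl(w(C\setminus M)+2w(P\setminus M)-w(C\cap M)-2w(P\cap M)\bigr)>0$ by the augmenting condition, contradicting the optimality of $\1_M$. The bi-cycle case is entirely analogous, using value $\frac{1}{2}$ on both blossoms and $1$ on the non-$M$ edges of the joining path; in both cases a little extra bookkeeping is needed when the flower/bi-cycle fails to be simple.

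For the second direction, fix a maximum-weight matching $M$ and let $x^*$ be an optimal basic solution of (P), so $w^\top x^*>w^\top\1_M$. By half-integrality, $\supp(x^*)=M_1\cup C_1\cup\cdots\cup C_\ell$, where $M_1$ is a matching of $1$-edges and the $C_j$ are vertex-disjoint odd cycles of $\frac{1}{2}$-edges. I encode the integer vector $2x^*-2\1_M\in\{-2,-1,0,1,2\}^E$ as a signed multigraph $H$ on $V$, taking $|2x^*_e-2\1_M(e)|$ copies of each edge $e$ colored red where $x^*$ exceeds $\1_M$ and blue where $\1_M$ exceeds $x^*$. Casework on whether a vertex $v$ lies in $M_1$, in some $C_j$, or outside $\supp(x^*)$, combined with whether $v$ is $M$-exposed or $M$-matched, shows that at every vertex the red-degree minus blue-degree lies in $\{-2,-1,0,1,2\}$ and the total degree is at most $4$, with the imbalanced vertices playing the role of endpoints of augmenting structures. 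Decomposing $H$ into edge-disjoint alternating walks and alternating cycles whose total signed weight equals $2(w^\top x^*-w^\top\1_M)>0$, at least one component carries strictly positive signed weight.

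The main obstacle is to identify that positive component as (or to extract from it) an $M$-augmenting flower or bi-cycle. Components meeting no odd cycle of $x^*$ are ordinary $M$-alternating paths or even alternating cycles, and classical weighted matching theory rules them out as augmenting because $M$ is maximum-weight; hence the positive component must contain at least one of the $C_j$. If it meets exactly one $C_j$ the component is either the blossom itself with an exposed base, or a blossom with a valid $M$-alternating stem attached, so it is an $M$-augmenting flower; if it meets exactly two $C_j$ connected by an alternating path, it is an $M$-augmenting bi-cycle. The remaining case, components with three or more half-integral cycles linked by alternating walks, is handled by a pruning/recombination argument that writes such a component as a non-negative combination of elementary flowers and bi-cycles (in the spirit of the Balas decomposition in the unweighted setting), forcing at least one elementary piece to be strictly augmenting. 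This reduction, together with the careful parity accounting required to obtain the augmenting inequalities of Definitions~\ref{def:flower} and~\ref{def:bicycle} with the correct factor $2$ on the path part, is the most delicate step.
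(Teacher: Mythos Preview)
The paper does not prove Theorem~\ref{thm:stable}; it is quoted from Kleinberg and Tardos and used as a black box. There is therefore no proof here to compare your attempt against.

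Your sketch nonetheless has real gaps. In the first direction, the non-simple case is not ``a little extra bookkeeping.'' If the stem $P$ revisits an interior vertex $v_i$ of the blossom $C$, then the $M$-edge at $v_i$ lies in both $C$ and $P$, so your four rules for $x'$ already conflict on that edge; and even under the natural reading (edges of $C$ get $\tfrac12$), the non-$M$ edge of $P$ incident to $v_i$ receives value $1$ while the two $C$-edges at $v_i$ contribute $\tfrac12+\tfrac12$, giving $x'(\delta(v_i))\geq 2$. One actually has to extract from a non-simple flower a simple augmenting substructure, and that is a genuine combinatorial step, not bookkeeping.

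In the second direction, ``decomposing $H$ into edge-disjoint alternating walks and cycles'' and then reading off a flower or bi-cycle is precisely the hard part, and you have only asserted it. A vertex lying on some $C_j$ whose $M$-edge is outside $C_j$ has degree $4$ in $H$ (two unit red edges from $C_j$ and a doubled blue $M$-edge), and you must specify how these four edge-copies pair up into walks that are $M$-alternating. Even granting such a decomposition, a connected piece of $H$ that meets a single $C_j$ need not be ``a blossom with one stem'': several paths of $M_1\triangle M$ can attach to $C_j$ at different vertices. Turning such a piece into an honest $M$-flower satisfying the inequality of Definition~\ref{def:flower} (with the factor $2$ on the stem), and handling pieces that meet three or more of the $C_j$, requires an explicit recombination lemma that you have not supplied; the appeal to a ``Balas-style'' argument is a pointer, not a proof.
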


We will need the following classical result on the structure of basic fractional matchings:

\begin{theorem}[\cite{conf/smp/Balinski70}]\label{thm:basic}
A fractional matching $x$ in $G=(V,E)$ is basic if and only if $x_e=\set{0,\frac{1}{2},1}$ for all $e\in E$ and the edges $e$ having $x_e=\frac{1}{2}$ induce vertex-disjoint odd cycles in $G$.
\end{theorem}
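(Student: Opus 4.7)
\medskip
\noindent\textbf{Proof proposal.} The plan is to characterize basicness via the linear algebra of tight constraints at $x$, and then translate the resulting rank condition into the graph-theoretic structure of $\supp(x)$. Recall that a feasible $x$ is basic iff the set of constraints tight at $x$ contains $|E|$ linearly independent ones. Let $S:=\supp(x)$. The non-negativity constraints $x_e=0$ for $e\in E\setminus S$ contribute $|E|-|S|$ linearly independent tight constraints, so $x$ is basic iff the remaining vertex constraints tight at $x$, restricted to the columns indexed by $S$, contain $|S|$ linearly independent rows.

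For the forward direction, let $H=(V,S)$ be the support graph and let $C$ be a connected component of $H$. By a standard fact, the vertex-edge incidence matrix of $C$ over $\R$ has rank $|V(C)|-1$ if $C$ is bipartite and $|V(C)|$ otherwise (the kernel is spanned by a $\pm 1$ signing of a bipartition, which exists only in the bipartite case). Hence we need either $|E(C)|\leq|V(C)|-1$ with $C$ bipartite, so $C$ is a tree, or $|E(C)|\leq|V(C)|$ with $C$ non-bipartite, so $C$ is unicyclic with a unique odd cycle. In either case, if $C$ has a leaf $v$ with pendant edge $e=uv$, then tightness at $v$ forces $x_e=1$; if $u$ has another support edge $e'$, tightness at $u$ then forces $x_{e'}=0$, contradicting $e'\in S$. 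So either $C$ is a single edge (on which $x_e=1$), or $C$ is leafless and thus equals a single odd cycle; on an odd cycle of length $2k+1$ the tight system $x_{e_i}+x_{e_{i+1}}=1$ admits the unique solution $x_e=\tfrac{1}{2}$. Distinct components share no vertices, giving the claimed vertex-disjointness.

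For the reverse direction, suppose $x\in\set{0,\tfrac{1}{2},1}^E$ has support decomposing into a matching $M$ together with a collection of vertex-disjoint odd cycles. Each vertex $v$ incident to a support edge satisfies $x(\delta(v))=1$ and so its vertex constraint is tight. Collect one such tight constraint per endpoint; the resulting sub-incidence matrix restricted to columns $S$ has full column rank on each component of $H$, namely rank $1$ on every matching edge and rank $|V(C)|=|E(C)|$ on every odd cycle $C$ by the rank formula quoted above. Together with the $|E|-|S|$ constraints $x_e=0$, this yields $|E|$ linearly independent tight constraints, so $x$ is basic.

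The main obstacle is the leaf-pruning step: the rank inequalities alone still permit, for example, a unicyclic component consisting of an odd cycle with a pendant tree attached, and one genuinely needs the combined feasibility-and-tightness argument (pendant edge forced to $1$ kills all other support edges at its other endpoint) to rule such configurations out. The remainder of the proof is a direct assembly of the incidence-matrix rank formula with the standard basic-feasible-solution characterization.
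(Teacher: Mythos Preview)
The paper does not prove this statement; Theorem~\ref{thm:basic} is simply quoted from Balinski with a citation and no argument. So there is no ``paper's proof'' to compare against, and your proposal must be judged on its own.

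Your overall approach is the standard and correct one: reduce basicness to a rank condition on the tight constraints, bound the rank componentwise via the incidence-matrix rank formula, and then use feasibility plus tightness to prune leaves. The reverse direction is clean and complete.

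There is, however, one genuine soft spot in the forward direction: you write ``if $C$ has a leaf $v$ with pendant edge $e=uv$, then tightness at $v$ forces $x_e=1$,'' but you never justify that the vertex constraint at $v$ is tight. In the odd-unicyclic case this is automatic, since $|E(C)|=|V(C)|$ and the incidence matrix of a non-bipartite connected graph has full row rank $|V(C)|$, so \emph{every} vertex constraint of $C$ must be tight to reach rank $|E(C)|$. In the tree case it is not automatic: the rank is $|V(C)|-1=|E(C)|$, and any $|V(C)|-1$ rows of a tree's incidence matrix are independent (the left kernel is spanned by the bipartition signing, which is nowhere zero), so one vertex constraint may fail to be tight while the rank condition still holds. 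The fix is short: a tree with at least two edges has at least two leaves, and since at most one vertex of $C$ can be non-tight, \emph{some} leaf has a tight constraint; apply your pruning argument to that leaf. For a single-edge tree, at least one endpoint is tight (again by the rank count), which forces $x_e=1$. With this adjustment the forward direction is complete.
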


Let $\hat{x}$ be a basic fractional matching in $G$. We partition the support of $\hat{x}$ into two parts. Define
\[\mathscr{C}(\hat{x}) := \set{C_1,\dots,C_q} \quad \text{and} \quad M(\hat{x}) := \set{e\in E:\hat{x}_e=1}\]
as the set of odd cycles such that $\hat{x}_e=\frac{1}{2}$ for all $e\in E(C_i)$ and the set of matched edges in $\hat{x}$ respectively. For ease of notation, we use $V(\mathscr{C}(\hat{x})) = \cup_{C\in \mathscr{C}(\hat{x})}V(C)$ and $E(\mathscr{C}(\hat{x})) = \cup_{C\in \mathscr{C}(\hat{x})}E(C)$ to denote the vertex set and edge set of $\mathscr{C}(\hat{x})$ respectively. We define two operations on the entries of $\hat{x}$ associated with certain edge sets of $G$:

\begin{definition}
By \emph{complementing} on $E'\subseteq E$, we mean replacing $\hat{x}_e$ by $\bar{x}_e=1-\hat{x}_e$ for all $e\in E'$.
\end{definition}

\begin{definition}
By \emph{alternate rounding} on $C\in \mathscr{C}(\hat{x})$ at $v$ where $C=\set{e_1,\dots,e_{2k+1}}$ and $v=e_1\cap e_{2k+1}$, we mean replacing $\hat{x}_e$ by $\bar{x}_e = 0$ for all $e\in \set{e_1,e_3,\dots,e_{2k+1}}$ and $\bar{x}_e=1$ for all $e\in \set{e_2,e_4,\dots,e_{2k}}$. When $v$ is clear from the context, we just say alternate rounding on $C$.
\end{definition}

Let $\mathcal{X}$ be the set of basic maximum-weight fractional matchings in $G$. Define
\[\gamma(G) := \min_{\hat{x}\in \mathcal{X}}\size{\mathscr{C}(\hat{x})}.\]
Note that $G$ is stable if and only if $\gamma(G)=0$.

We will use the following terminology given in \cite{book/CookCPS98} for the description of Edmonds' maximum matching algorithm. Given a graph $G$ and a matching $M$, let $T$ be an $M$-alternating tree rooted at a vertex $r$. We denote by $A(T)$ and $B(T)$ the sets of nodes in $T$ at odd and even distance respectively from $r$. We call $T$ \emph{frustrated} if every edge of $G$ having one end in $B(T)$ has the other end in $A(T)$.

Finally, the following theorem gives a sufficient condition for a graph to be Hamiltonian.

\begin{theorem}[Ore's Theorem \cite{journals/amm/Ore60}]
Let $G$ be a finite and simple graph with $n\geq 3$ vertices. If $\deg(u)+\deg(v)\geq n$ for every pair of distinct non-adjacent vertices $u$ and $v$, then $G$ is Hamiltonian.
\end{theorem}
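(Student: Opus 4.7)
The plan is to argue by contradiction via the classical edge-maximality trick. Suppose $G$ satisfies the degree hypothesis but admits no Hamiltonian cycle. I would iteratively add edges to $G$ while preserving non-Hamiltonicity until I obtain an edge-maximal non-Hamiltonian supergraph $G^*$ on the same vertex set. Since adding edges only increases vertex degrees, $G^*$ still satisfies $\deg_{G^*}(u)+\deg_{G^*}(v)\geq n$ for every non-adjacent pair $u,v$. Moreover, by maximality, adding any missing edge $uv$ to $G^*$ creates a Hamiltonian cycle, and this cycle must use $uv$; equivalently, $G^*$ contains a Hamiltonian path from $u$ to $v$ for every non-adjacent pair.

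Next, since the complete graph on $n\geq 3$ vertices is trivially Hamiltonian, $G^*$ cannot be complete, so there exist non-adjacent vertices $u,v$ in $G^*$. I would fix such a pair and label a Hamiltonian $u$-$v$ path as $v_1,v_2,\dots,v_n$ with $v_1=u$ and $v_n=v$. The aim is to locate an index $i$ such that both $v_1v_{i+1}$ and $v_iv_n$ are edges of $G^*$: given such an $i$, the rotation
\[v_1,v_2,\dots,v_i,v_n,v_{n-1},\dots,v_{i+1},v_1\]
is a Hamiltonian cycle in $G^*$, contradicting the choice of $G^*$.

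To find such an $i$, I would introduce the index sets
\[S:=\{i\in\{1,\dots,n-1\}:v_1v_{i+1}\in E(G^*)\}\quad\text{and}\quad T:=\{i\in\{1,\dots,n-1\}:v_iv_n\in E(G^*)\},\]
and observe that $|S|=\deg_{G^*}(u)$ and $|T|=\deg_{G^*}(v)$. The degree hypothesis applied to $u$ and $v$ yields $|S|+|T|\geq n$, whereas both sets sit inside $\{1,\dots,n-1\}$, a set of size $n-1$. A pigeonhole argument then forces $S\cap T\neq\emptyset$, and any common index produces the desired Hamiltonian cycle.

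The only delicate point—and what I expect to be the main obstacle—is the index bookkeeping: one must choose the conventions for $S$ and $T$ so that a common index genuinely yields a well-defined Hamiltonian cycle. In particular, one should rule out the degenerate case $i=n-1$, which would require $v_1v_n\in E(G^*)$ and thus contradict the non-adjacency of $u$ and $v$; this is handled automatically by the fact that $n-1\notin S$. Beyond this, the argument is pure combinatorial counting and requires no further structural input from the earlier sections.
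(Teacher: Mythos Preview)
The paper does not prove Ore's Theorem; it merely states it with a citation to \cite{journals/amm/Ore60} and invokes it once (in the proof of Claim~\ref{clm:gap2}) to bound how many edges must be removed from $K_{2\rho k+1}$ to destroy Hamiltonicity. So there is no ``paper's own proof'' to compare against.

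Your proposal is the standard edge-maximality argument and is correct. The counting is clean: with $S,T\subseteq\{1,\dots,n-1\}$ and $|S|+|T|\ge n$, the pigeonhole gives a common index, and the rotation produces a Hamiltonian cycle. The only minor remark is that you need not single out $i=n-1$ as the delicate case; both boundary indices are automatically excluded, since $n-1\in S$ would force $v_1v_n\in E(G^*)$ and $1\in T$ would force the same, contradicting the non-adjacency of $u$ and $v$. Either way the cycle construction is well-defined for every $i\in S\cap T$, so no further bookkeeping is needed.
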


\section{Maximum fractional matching with minimum support}
\label{sec:frac_matching}
 
In this section, we give a polynomial-time algorithm to compute a basic maximum-weight fractional matching $\hat{x}$ for a weighted graph $G$ with minimum number of odd cycles in its support, i.e., satisfying $\size{\mathscr{C}(\hat{x})} = \gamma(G)$.
This algorithm will be used as a subroutine by our vertex-stabilizer algorithm, which we will develop in Section \ref{sec:vertex_stabilizer}.

Our first step is to characterize basic maximum-weight fractional matchings which have more than $\gamma(G)$ odd cycles. Balas~\cite{journal/nms/Balas81} considered this problem on unweighted graphs, and gave the following characterization:

\begin{theorem}[\cite{journal/nms/Balas81}]
Let $\hat{x}$ be a basic maximum fractional matching in an unweighted graph $G$. If $\size{\mathscr{C}(\hat{x})}>\gamma(G)$, then there exists an $M(\hat{x})$-alternating path which connects two odd cycles $C_i,C_j \in \mathscr{C}(\hat{x})$. Furthermore, alternate rounding on the odd cycles and complementing on the path produces a basic maximum fractional matching $\bar{x}$ such that $\mathscr{C}(\bar{x})\subset\mathscr{C}(\hat{x})$.
\end{theorem}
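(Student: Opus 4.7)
The plan is to reduce to the standard integral-matching augmentation picture. First, fix a basic maximum fractional matching $\hat{y}$ with $\size{\mathscr{C}(\hat{y})} = \gamma(G) < \size{\mathscr{C}(\hat{x})}$ (guaranteed by assumption). Alternately round each cycle $C_i \in \mathscr{C}(\hat{x})$ at some base $v_i$ and combine with $M(\hat{x})$ to form an integral matching $M^*$ of $G$; do the same with $\hat{y}$ to obtain $M^{**}$. In the unweighted setting, $\size{M^*} = \nu_f(G) - \size{\mathscr{C}(\hat{x})}/2 < \nu_f(G) - \gamma(G)/2 = \size{M^{**}} \leq \nu(G)$, so $M^*$ is not a maximum matching in $G$, and Berge's theorem yields an $M^*$-augmenting path $P$.

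The heart of the argument is to show that, after possibly reselecting bases and shortcutting, $P$ may be taken with both endpoints being bases of distinct cycles $C_i, C_j \in \mathscr{C}(\hat{x})$ and interior disjoint from $V(\mathscr{C}(\hat{x}))$. Three observations drive this. First, if the interior of $P$ visits some $C_l$ at a non-base vertex $w$, re-rounding $C_l$ with base $w$ changes $M^*$ only on $E(C_l)$ while making $w$ the new $M^*$-exposed vertex on $C_l$; the prefix of $P$ up to $w$ is then an $M^*$-augmenting path of strictly smaller length (relative to the updated $M^*$), and iterating forces the interior of the augmenting path to avoid $V(\mathscr{C}(\hat{x}))$. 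Second, if both endpoints of $P$ are $M(\hat{x})$-exposed vertices lying outside every cycle, then $P$ is $M(\hat{x})$-augmenting in $G$, and augmenting $\hat{x}$ along $P$ while keeping the half-cycles intact produces a fractional matching of value $\nu_f(G)+1$, contradicting optimality of $\hat{x}$. Third, if $P$ has one cycle-base endpoint $v_i$ and one off-cycle $M(\hat{x})$-exposed endpoint, then alternately rounding $C_i$ at $v_i$ and complementing along $P$ yields a basic fractional matching of value $\nu_f(G)+\tfrac{1}{2}$, again a contradiction. Hence both endpoints of $P$ must be bases of distinct cycles.

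Having secured such a $P$ from $v_i$ to $v_j$, I note that $M^*$ and $M(\hat{x})$ coincide on $E(G) \setminus E(\mathscr{C}(\hat{x}))$, which contains every edge of $P$ (since $P$'s interior avoids $V(\mathscr{C}(\hat{x}))$); hence $P$ is automatically $M(\hat{x})$-alternating, of odd length, with non-$M(\hat{x})$ edges at both ends. The final verification is routine: at $v_i, v_j$ the vanishing half-cycle contributions are replaced by the complemented end edges of $P$; at non-base vertices of $C_i, C_j$ the two half-edges consolidate into matched edges under alternate rounding; at interior vertices of $P$ the flipped $M(\hat{x})$-edge and flipped non-$M(\hat{x})$-edge swap roles and preserve the degree constraint; and the weight change tallies to $-\tfrac{1}{2}-\tfrac{1}{2}+1 = 0$. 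By Theorem~\ref{thm:basic}, the resulting $\bar{x}$ is a basic maximum fractional matching with $\mathscr{C}(\bar{x}) = \mathscr{C}(\hat{x}) \setminus \set{C_i, C_j} \subsetneq \mathscr{C}(\hat{x})$. The main obstacle is making the re-rooting/shortcutting step in the middle paragraph airtight: one needs to check carefully that reassigning the base of a cycle and truncating $P$ preserves the $M^*$-augmenting structure (relative to the updated $M^*$) in every edge case, notably when an endpoint of $P$ itself sits on the cycle being re-rooted.
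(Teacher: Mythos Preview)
The paper does not give its own proof of this statement; it is quoted from Balas and then superseded by the weighted generalization in Theorem~\ref{thm:cycles}. That generalization is proved by a structural symmetric-difference argument: one compares the integral parts $M(\hat{x})$ and $M(x^*)$ of two optimal basic solutions, analyzes the components of their symmetric difference, and invokes the connector lemma (Lemma~\ref{lem:connector}) to thread an alternating path through a cycle of $\mathscr{C}(x^*)$ that is adjacent to two cycles of $\mathscr{C}(\hat{x})$. Your route---round $\hat{x}$ to an integral matching $M^*$, observe $|M^*|<\nu(G)$, pull out an $M^*$-augmenting path via Berge, and then massage its endpoints onto cycle bases---is genuinely different and more elementary in spirit. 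It avoids the connector machinery entirely, at the price of having to control how the augmenting path interacts with the cycles of $\hat{x}$.

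The re-rooting step, which you correctly flag as the crux, has a real gap in the form stated. You propose: if the interior of $P$ first meets a cycle $C_l$ at $w$, re-root $C_l$ at $w$ and keep the prefix of $P$ up to $w$. This fails precisely in the edge case you mention but do not resolve: if $p_0$ itself is the base of $C_l$, re-rooting $C_l$ at $w$ matches $p_0$ via a cycle edge, so the prefix is no longer augmenting. Concretely, if $P$ begins by walking along $C_l$ for a few steps before exiting, the prefix trick collapses. The fix is simple once seen: in this situation take instead $j'$ to be the \emph{last} index with $p_{j'}\in C_l$, re-root $C_l$ at $p_{j'}$, and keep the \emph{suffix} $(p_{j'},\dots,p_k)$. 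Since $p_0$ and $p_k$ are distinct $M^*$-exposed vertices and $C_l$ has only one exposed vertex, $p_k\notin C_l$; hence $p_k$ stays exposed, the suffix uses no $C_l$-edges (so re-rooting leaves it alternating), and its first edge $p_{j'}p_{j'+1}$ is a non-cycle edge and therefore not in $M^*$. The suffix is strictly shorter because $j'\ge 1$. Iterating (using prefix or suffix according to whether $p_0$ lies on the relevant cycle) terminates with an augmenting path whose interior avoids $V(\mathscr{C}(\hat{x}))$, after which your contradiction arguments for the off-cycle and mixed endpoint cases go through verbatim and force both endpoints onto distinct cycles. With this amendment the proof is complete.
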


We generalize this to weighted graphs. Before stating the theorem, we need to introduce the concept of \emph{connector} (see Figure \ref{fig:connector} for some examples):

\begin{definition}
Let $C$ be a cycle and $S_0,S_1,\dots,S_k$ be a partition of $V(C)$ such that $\size{S_0}$ is even and $k\geq 2$, where $S_0$ is allowed to be empty. 
Let $M$ be a perfect matching on the vertex set $S_0$. We call the graph $C\cup M$ a \emph{connector}. Each $S_i$ is called a \emph{terminal set} for $i\geq 1$. An edge $e\in M$ is called a \emph{chord} if $e\notin E(C)$. 
\end{definition}

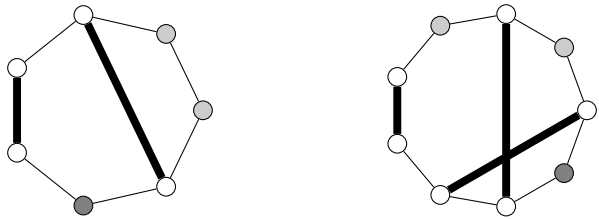
\begin{figure}[ht]
\centering
\begin{minipage}{0.3\textwidth}
\centering
\begin{tikzpicture}[inner sep=2.5pt, minimum size=2.5pt, auto]
	\def \n {7}
	\def \radius {1.3cm}
	\node[vertex,fill=gray!40] (u1) at ({360/\n*1}:\radius) {};
	\node[vertex] (u2) at ({360/\n*2}:\radius) {};
	\node[vertex] (u3) at ({360/\n*3}:\radius) {};
	\node[vertex] (u4) at ({360/\n*4}:\radius) {};
	\node[vertex,fill=gray] (u5) at ({360/\n*5}:\radius) {};
	\node[vertex] (u6) at ({360/\n*6}:\radius) {};
	\node[vertex,fill=gray!40] (u7) at ({360/\n*7}:\radius) {};

	\path[edge] (u1) -- (u2);
	\path[edge] (u2) -- (u3);
	\path[matched edge] (u3) -- (u4);
	\path[edge] (u4) -- (u5);
	\path[edge] (u5) -- (u6);
	\path[edge] (u6) -- (u7);
	\path[edge] (u7) -- (u1);
	\path[matched edge] (u2) -- (u6);
\end{tikzpicture}
\end{minipage}
\begin{minipage}{0.3\textwidth}
\centering
\begin{tikzpicture}[inner sep=2.5pt, minimum size=2.5pt, auto]
	\def \n {9}
	\def \radius {1.3cm}
	\node[vertex,fill=gray!40] (v1) at ({360/\n*1}:\radius) {};
	\node[vertex] (v2) at ({360/\n*2}:\radius) {};
	\node[vertex,fill=gray!40] (v3) at ({360/\n*3}:\radius) {};
	\node[vertex] (v4) at ({360/\n*4}:\radius) {};
	\node[vertex] (v5) at ({360/\n*5}:\radius) {};
	\node[vertex] (v6) at ({360/\n*6}:\radius) {};
	\node[vertex] (v7) at ({360/\n*7}:\radius) {};
	\node[vertex,fill=gray] (v8) at ({360/\n*8}:\radius) {};
	\node[vertex] (v9) at ({360/\n*9}:\radius) {};

	\path[edge] (v1) -- (v2);
	\path[edge] (v2) -- (v3);
	\path[edge] (v3) -- (v4);
	\path[matched edge] (v4) -- (v5);
	\path[edge] (v5) -- (v6);
	\path[edge] (v6) -- (v7);
	\path[edge] (v7) -- (v8);
	\path[edge] (v8) -- (v9);
	\path[edge] (v9) -- (v1);
	\path[matched edge] (v2) -- (v7);
	\path[matched edge] (v6) -- (v9);
\end{tikzpicture}
\end{minipage}

\caption{Two examples of connectors. Bold edges indicate $M$. Vertices of the same color belong to the same terminal set. White vertices are the ones in $S_0$.}
\label{fig:connector}
\end{figure}

Connectors are useful because of the following property:

\begin{lemma}
\label{lem:connector}
Let $C\cup M$ be a connector. For every terminal set $S_i$, there exists an $M$-augmenting path in the connector from a vertex $v \in S_i$ to a vertex $u \in S_j$, for some $j\neq i$.
\end{lemma}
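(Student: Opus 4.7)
The plan is to prove the existence of the required $M$-augmenting path via an alternating-walk exploration from $S_i$. Think of the alternating-walk ``state graph'' whose vertices are pairs $(v,\tau)$ with $v \in V(C)$ and $\tau \in \{A,B\}$, where $A$ means the next edge of the walk must be non-$M$ and $B$ means it must lie in $M$; the arcs are $(u,A)\to(w,B)$ for each non-$M$ connector edge $uw$, and $(u,B)\to(w,A)$ for each $M$-edge. Starting from the set $\{(v,A): v \in S_i\}$ and running BFS, the reachable states encode exactly the $M$-alternating walks originating in $S_i$, so the required augmenting path exists iff some state $(u,B)$ with $u \in \bigcup_{j\neq i,\, j\geq 1} S_j$ is reached.

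I would argue by contradiction: assume no such $(u,B)$ is reached, and let $R_A, R_B \subseteq V(C)$ be the projections of the reachable set. The BFS closure rules then yield: $R_A \supseteq S_i$; both $R_A$ and $R_B$ are contained in $S_0 \cup S_i$; every non-$M$ $C$-edge incident to an $R_A$-vertex has its other end in $R_B$; and for every $u \in R_A \cap S_0$ the $M$-partner $u^M$ lies in $R_B$, and conversely. Consequently $M$ bijects the $S_0$-vertices of $R_A$ with those of $R_B$, and no $C$-edge of $C$ joins a vertex of $R_A$ to a vertex outside $U := R_A \cup R_B$; in particular no $C$-edge connects $S_i$ to $\bigcup_{j\neq i,\, j\geq 1} S_j$.

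Since $\bigcup_{j\neq i,\,j\geq 1} S_j$ is nonempty (as $k \geq 2$) and disjoint from $U$, the set $U$ is a proper nonempty subset of the cyclic vertex set $V(C)$, so it has at least two boundary $C$-edges. A case analysis shows that the inside endpoint of any boundary $C$-edge must lie in $R_B \setminus R_A \subseteq S_0$ and must be matched through a chord of $M$ (otherwise the unique non-$M$ $C$-edge, or the $M$ $C$-edge, would have propagated BFS past the boundary), with exactly one $C$-neighbor inside $R_A$ and the other outside $U$. Tracing these forced boundary chords, together with the induced $M$-pairing between $R_B \setminus R_A$ and $R_A \setminus R_B$ around the cycle, and invoking the evenness of $|S_0|$ together with the cyclic topology of $C$, yields the desired parity contradiction.

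The hard part will be executing this last parity argument rigorously, since boundary chords can interact in delicate ways across several arcs of $U$. If this proves too intricate, a cleaner backup is induction on the number $c := |M \setminus E(C)|$ of chord edges. For $c = 0$ the claim is immediate: every arc of consecutive $S_0$-vertices between two terminals has even length (its vertices are paired by $C$-edges of $M$) and supplies an $M$-augmenting path through the arc, and since $k \geq 2$ some two cyclically-consecutive terminals lie in distinct $S_j$'s. For the inductive step one short-circuits a chord $xy \in M$ by rerouting $C$ locally so that the resulting connector has one fewer chord, applies the inductive hypothesis, and lifts the produced augmenting path back to the original connector.
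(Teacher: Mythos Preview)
Your proposal is a plan rather than a completed proof, and the gaps are substantive. The central parity contradiction is explicitly left undone (``the hard part will be executing this\ldots''), and the sketch you give---tracing boundary chords and invoking cyclic topology---does not make clear what quantity is being counted or why it must come out odd and even simultaneously. The backup induction on the number of chords is also not carried through: the inductive step ``short-circuits a chord $xy$ by rerouting $C$ locally,'' but a chord splits $C$ into two arcs, and $M$ restricted to either side need not perfectly match the surviving $S_0$-vertices (other chords may cross between the arcs), so the smaller object is not obviously a connector, and you do not explain how to lift an augmenting path back. There is also an earlier issue you gloss over: BFS in your state graph certifies an alternating \emph{walk} from $S_i$ to some $S_j$, and you assert without proof that this is equivalent to the existence of an augmenting \emph{path}. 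In a non-bipartite setting this is precisely the blossom obstruction, and a connector can certainly contain odd cycles.

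For comparison, the paper resolves both issues simultaneously by a reduction followed by Edmonds' algorithm. It first contracts every edge of $M\cap E(C)$ (smoothing away the resulting degree-two vertex) and identifies all of $S_i$ into a single exposed vertex $v_i$; after this the only surviving matching edges are the chords, and an $M$-augmenting path in the reduced multigraph corresponds exactly to one in the connector. Running Edmonds' search from $v_i$ either finds an augmenting path---the blossom-shrinking machinery handling the walk-versus-path problem automatically---or returns a frustrated tree $T$. The contradiction is then a short degree count on the bipartite graph $\widetilde T=T\cup\{uv\notin E(T):u\in A(T),\,v\in B(T)\}$: using $2$-edge-connectivity of the reduced graph minus $M$, every pseudonode in $B(T)$ has at least two unmatched $\widetilde T$-edges, so $\deg_{\widetilde T}(u)\le\deg_{\widetilde T}(M(u))$ for each matched pair; since $|B(T)|=|A(T)|+1$ with the root having positive degree, summing gives $\sum_{A(T)}\deg_{\widetilde T}<\sum_{B(T)}\deg_{\widetilde T}$, which is impossible in a bipartite graph.
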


\begin{proof}
For every $e\in M\cap E(C)$, contract $e$ and smooth away the vertex formed after the contraction (smoothing is the reverse operation of subdivision). The only edges in $M$ that survive this process are the chords. Fix $i$ and identify all the vertices in $S_i$ into a single vertex $v_i$. Denote the resulting (multi)graph as $G=(V,E)$. Observe that there exists an $M$-augmenting path from $S_i$ to $S_j$ in $C\cup M$ if and only if there exists an $M$-augmenting path from $v_i$ to $S_j$ in $G$ where $i\neq j$. Hence, we will work on the reduced graph $G$.

Apply Edmonds' maximum matching algorithm on $G$ initialized with the matching $M\cap E$, and construct an $M$-alternating tree starting with the exposed vertex $v_i$. There are two possibilities: either we find an augmenting path from $v_i$ to $S_j$ for some $j\neq i$ or a frustrated tree rooted at $v_i$. For the purpose of contradiction, suppose we get a frustrated tree $T$ rooted at $v_i$. Let $\widetilde{T}=T\cup D$, where $D=\set{uv\notin E(T):u\in A(T), v\in B(T)}$. Note that we do not have edges connecting two nodes in $B(T)$, otherwise $T$ is not a frustrating tree. 

We claim that each pseudonode in $T$ is incident to at least two unmatched edges in $\widetilde{T}$. Let $v$ be a pseudonode in $T$, and $S(v)$ be the subset of vertices in $G$ that are contained in $v$ (after expanding pseudonodes). Note that $S(v)\subset V$ because there are at least two exposed vertices in $G$. Let $\delta_{G\setminus M}(\cdot)$ denote the cut function on $G\setminus M$. Since $G\setminus M$ is 2-edge-connected, we have $\size{\delta_{G\setminus M}(S(v))}\geq 2$. These edges are present in $\widetilde{T}$ because otherwise we can extend the alternating tree $T$. It follows that $v$ is incident to at least two unmatched edges in $\widetilde{T}$.

Let $uv$ be a matched edge in $T$ where $u\in A(T)$ and $v\in B(T)$. We claim that $\deg_{\widetilde{T}}(u)\leq \deg_{\widetilde{T}}(v)$. Note that $\deg_{\widetilde{T}}(u)$ is either 2 or 3. This is because $u$ is not a pseudonode, and $\deg_G(w)=3$ for every $M$-covered vertex $w$ in $G$. If $v$ is not a pseudonode, then $\deg_{\widetilde{T}}(v)=3$ as all edges in $\delta_G(v)$ are accounted for in $\widetilde{T}$. Otherwise, if $v$ is a pseudonode, then by the previous claim $v$ is incident to at least two unmatched edges in $\widetilde{T}$. So $\deg_{\widetilde{T}}(v)\geq 3$.

Now, observe that $\widetilde{T}$ is a bipartite graph as the node set can be partitioned into $A(T)$ and $B(T)$ where $\size{B(T)}=\size{A(T)}+1$. For every $v\in A(T)$, let $M(v)$ be its matched neighbour in $B(T)$. The extra node in $B(T)$ is the root of $T$, which has degree at least one in $\widetilde{T}$. Summing up the node degrees in $A(T)$, we obtain
\[\sum_{v\in A(T)}\deg_{\widetilde{T}}(v) \leq \sum_{v\in A(T)}\deg_{\widetilde{T}}(M(v)) < \sum_{v\in A(T)}\deg_{\widetilde{T}}(M(v)) + 1 \leq \sum_{v\in B(T)}\deg_{\widetilde{T}}(v)\]
which is a contradiction.  
\end{proof}

Let $y$ be a minimum fractional $w$-vertex cover in $G$. We say that an edge $uv$ is \emph{tight} if $y_u+y_v=w_{uv}$. Similarly, we say that a path is tight if all of its edges are tight.

\begin{theorem}\label{thm:cycles}
Let $\hat{x}$ be a basic maximum-weight fractional matching and $y$ be a minimum fractional $w$-vertex cover in $G$. If $\size{\mathscr{C}(\hat{x})}> \gamma(G)$, then there exists
\begin{enumerate}[noitemsep,topsep=0pt]
  \item[(i)] a vertex $v\in V(C_i)$ for some odd cycle $C_i\in \mathscr{C}(\hat{x})$ such that $y_v=0$; or
	\item[(ii)] a tight $M(\hat{x})$-alternating path $P$ which connects two odd cycles $C_i,C_j\in\mathscr{C}(\hat{x})$; or
	\item[(iii)] a tight and valid $M(\hat{x})$-alternating path $P$ which connects an odd cycle $C_i\in\mathscr{C}(\hat{x})$ and a vertex $v\notin V(\mathscr{C}(\hat{x}))$ such that $y_v=0$.
\end{enumerate}
Furthermore, alternate rounding on the odd cycles and complementing on the path produces a basic maximum-weight fractional matching $\bar{x}$ such that $\mathscr{C}(\bar{x})\subset \mathscr{C}(\hat{x})$.
\end{theorem}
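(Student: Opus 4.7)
The plan is to split the proof into two parts: (I) verifying that in each of the three cases the described operation produces a basic max-weight fractional matching $\bar{x}$ with $\mathscr{C}(\bar{x}) \subsetneq \mathscr{C}(\hat{x})$, and (II) showing that whenever $|\mathscr{C}(\hat{x})| > \gamma(G)$, at least one of (i)--(iii) must hold.

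For part (I), I would check half-integrality and feasibility of $\bar{x}$ directly from the local effect of the two operations, and conclude basicness via Theorem~\ref{thm:basic} after verifying that $\supp(\bar{x})$ decomposes as a matching plus vertex-disjoint odd cycles, with at least the modified cycle(s) of $\mathscr{C}(\hat{x})$ absent. The critical ingredient is weight preservation, which I would establish by a telescoping computation using $w_{uv}=y_u+y_v$ on tight edges. Specifically, alternate rounding on a cycle $C\in\mathscr{C}(\hat{x})$ at a vertex $v$ changes $w^{\top}x$ by exactly $-y_v$; and complementing on a tight $M(\hat{x})$-alternating path $P$ from $u_1$ to $u_m$ changes $w^{\top}x$ by $y_{u_1}+y_{u_m}$ when both endpoints are $M(\hat{x})$-exposed (so $P$ begins and ends with non-$M(\hat{x})$-edges) and by $y_{u_1}-y_{u_m}$ when $P$ ends with an $M(\hat{x})$-edge at $u_m$. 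The three cases then line up: in (i), the net change is $-y_v=0$; in (ii), it is $-y_{u_1}-y_{u_m}+(y_{u_1}+y_{u_m})=0$; in (iii), it is $-y_{u_1}+(y_{u_1}\pm y_v)=\pm y_v=0$ since $y_v=0$.

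For part (II), the main obstacle, I would argue the contrapositive. Suppose none of (i)--(iii) holds and let $\bar{x}$ be a basic max-weight fractional matching with $|\mathscr{C}(\bar{x})|=\gamma(G)$; I aim to derive a contradiction under $|\mathscr{C}(\bar{x})|<|\mathscr{C}(\hat{x})|$. Since (i) fails, every $v\in V(\mathscr{C}(\hat{x}))$ satisfies $y_v>0$; by complementary slackness, $\bar{x}(\delta(v))=1$ for all such $v$ and all edges in $\supp(\hat{x})\cup\supp(\bar{x})$ are tight. Pick $C\in\mathscr{C}(\hat{x})\setminus\mathscr{C}(\bar{x})$ and build a connector on $V(C)$: take $M$ to consist of the $\bar{x}$-integer-matching edges whose both endpoints lie in $V(C)$, matching a subset $S_0\subseteq V(C)$ of even size. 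For each $v\in V(C)\setminus S_0$, trace a tight walk that leaves $V(C)$ along a $\bar{x}$-supported edge and then alternates $M(\hat{x})$- and non-$M(\hat{x})$-edges in the tight subgraph. Because (ii) and (iii) fail, no such walk can terminate at another cycle of $\mathscr{C}(\hat{x})$ or at a vertex $u\notin V(\mathscr{C}(\hat{x}))$ with $y_u=0$, so every walk must eventually re-enter $V(C)$; grouping vertices according to these walks yields the terminal sets $S_1,\dots,S_k$ of the connector. Parity of $|V(C)|$ odd and $|S_0|$ even forces $|V(C)\setminus S_0|$ odd, hence $k\geq 1$, and a further case analysis rules out $k=1$ to give $k\geq 2$. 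Applying Lemma~\ref{lem:connector} then yields an $M$-augmenting path in the connector, and splicing this path together with the external tight walks produces a tight $M(\hat{x})$-alternating path in $G$ that realizes (ii) or (iii), contradicting our standing assumption. The hardest part is defining the external walks with enough care that the spliced object is genuinely $M(\hat{x})$-alternating, tight, and valid in the sense of the paper's definitions, while controlling potential self-intersections between the connector path and the external walks.
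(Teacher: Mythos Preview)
Your Part (I) is fine and essentially matches the paper, which argues via complementary slackness rather than the equivalent telescoping computation you describe.

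Part (II), however, has a genuine gap: you build the connector on the wrong cycle. In the paper the connector is assembled on a cycle $C^*\in\mathscr{C}(x^*)$ of the \emph{optimal} basic solution $x^*$ (your $\bar{x}$), not on a cycle $C\in\mathscr{C}(\hat{x})$. The matching in the connector is formed from $M(\hat{x})$-edges (and contracted $J$-paths that start and end on $C^*$), and the terminal sets $S_1,\dots,S_k$ are indexed by the cycles of $\mathscr{C}(\hat{x})$ that $C^*$ is adjacent to through the symmetric-difference graph $J=(V,N(\hat{x})\triangle N(x^*))$. A pigeonhole argument---every cycle of $\mathscr{C}(\hat{x})$ is adjacent to some cycle of $\mathscr{C}(x^*)$, and $|\mathscr{C}(\hat{x})|>|\mathscr{C}(x^*)|$---is what guarantees $k\geq 2$. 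Lemma~\ref{lem:connector} then produces an $M(\hat{x})$-augmenting path inside $C^*\cup\mathcal{P}$ between two different terminal sets, which, spliced with the $J$-paths out to $C_i$ and $C_j$, yields the tight $M(\hat{x})$-alternating path realizing~(ii).

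In your construction, with the connector on $C\in\mathscr{C}(\hat{x})$ and $M$ taken to be $M(\bar{x})$-edges inside $V(C)$, the terminal sets have no interpretation that leads to~(ii) or~(iii). If, as you argue, every external walk from $V(C)\setminus S_0$ must return to $V(C)$ (because (ii) and (iii) fail), then any path the lemma returns, spliced with those walks, still has both endpoints on $C$---it does not connect two distinct cycles of $\mathscr{C}(\hat{x})$ or reach a $y$-zero vertex outside. Moreover, the lemma would hand you an $M(\bar{x})$-augmenting path, whereas you need an $M(\hat{x})$-alternating one; and your ``further case analysis rules out $k=1$'' has no apparent basis, since nothing in your setup forces the walks to split into two or more classes. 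The fix is precisely the paper's reversal: place the connector on a cycle of the solution with \emph{fewer} odd cycles, let the terminal sets record adjacency to cycles of $\mathscr{C}(\hat{x})$, and use the cardinality gap to force $k\geq 2$.
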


\begin{proof}
We will start by proving the second part of the theorem, namely that alternate rounding and complementing produces a basic maximum-weight fractional matching with lesser odd cycles. For Case (i), let $\bar{x}$ be the basic fractional matching obtained by alternate rounding on $C_i$ at $v$. Since $y_v=0$, both $\bar{x}$ and $y$ satisfy complementary slackness. Hence, $\bar{x}$ is optimal to (P) and $\mathscr{C}(\bar{x})=\mathscr{C}(\hat{x})\setminus C_i$. For Case (ii), denote $u=V(P)\cap V(C_i)$ and $v=V(P)\cap V(C_j)$ as the endpoints of $P$. Let $\bar{x}$ be the basic fractional matching obtained by alternate rounding on $C_i,C_j$ at $u,v$ respectively and complementing on $P$. Note that $u$ and $v$ are exposed after the alternate rounding, and covered after complementing. Since $\bar{x}$ and $y$ satisfy complementary slackness, $\bar{x}$ is optimal to (P) and $\mathscr{C}(\bar{x})=\mathscr{C}(\hat{x})\setminus\set{C_i,C_j}$. For Case (iii), denote $u=V(P)\cap V(C_i)$ and $v\notin V(\mathscr{C}(\hat{x}))$ as the endpoints of $P$. Let $\bar{x}$ be the basic fractional matching obtained by alternate rounding on $C_i$ at $u$ and complementing on $P$. Since $y_v=0$, both $\bar{x}$ and $y$ satisfy complementary slackness. Thus, $\bar{x}$ is optimal to (P) and $\mathscr{C}(\bar{x})=\mathscr{C}(\hat{x})\setminus C_i$.

Next, we prove the first part of the theorem. We may assume $y_v>0$ for every vertex $v\in V(\mathscr{C}(\hat{x}))$. Let $x^*$ be a basic maximum-weight fractional matching in $G$ such that $\size{\mathscr{C}(x^*)}=\gamma(G)$. Define $N(\hat{x}) := M(\hat{x})\setminus E(\mathscr{C}(x^*))$ and $N(x^*) := M(x^*)\setminus E(\mathscr{C}(\hat{x}))$. Consider the following subgraph 
\[J = (V,N(\hat{x}) \triangle N(x^*)).\]
Since $N(\hat{x})$ and $N(x^*)$ are matchings in $G$, $J$ is made up of vertex-disjoint paths and cycles of $G$. For each such path or cycle, its edges alternately belong to $N(\hat{x})$ or $N(x^*)$. Moreover, its intermediate vertices are disjoint from $\mathscr{C}(\hat{x})$ and $\mathscr{C}(x^*)$. Since $\hat{x}$ and $x^*$ are maximum-weight fractional matchings in $G$, every path in $J$ is tight by complementary slackness. If there exists a path in $J$ which connects two odd cycles from $\mathscr{C}(\hat{x})$, then we are done. If there exists a path in $J$ which connects an odd cycle from $\mathscr{C}(\hat{x})$ and a vertex $v\notin V(\mathscr{C}(\hat{x})\cup\mathscr{C}(x^*))$, then $y_v=0$ because $v$ is either exposed by $M(\hat{x})$ or $M(x^*)$. Hence, we are also done. So we may assume every path in $J$ belongs to one of the following three categories:
  \begin{enumerate}[noitemsep,topsep=0pt]
    \item[(a)] Vertex disjoint from $\mathscr{C}(\hat{x})$ and $\mathscr{C}(x^*)$.
    \item[(b)] Starts and ends at the same cycle.
    \item[(c)] Connects an odd cycle from $\mathscr{C}(\hat{x})$ and an odd cycle from $\mathscr{C}(x^*)$.
  \end{enumerate}
Note that by the second part of the theorem, there is no path in $J$ which connects two odd cycles from $\mathscr{C}(x^*)$ or an odd cycle from $\mathscr{C}(x^*)$ and a vertex $v\notin V(\mathscr{C}(\hat{x})\cup \mathscr{C}(x^*))$. We say that two odd cycles $C_i$ and $C_j$ are \emph{adjacent} if $V(C_i)\cap V(C_j)\neq \emptyset$ or if they are connected by a path in $J$. 

\begin{claim}
  Every cycle in $\mathscr{C}(\hat{x})$ is adjacent to a cycle in $\mathscr{C}(x^*)$.
\end{claim}

\begin{proof}
  Let $C$ be an odd cycle in $\mathscr{C}(\hat{x})$. For every vertex $v\in V(C)$, since we assumed $y_v>0$, by complementary slackness it is either $M(x^*)$-covered or belongs to $V(\mathscr{C}(x^*))$. If $v\in V(\mathscr{C}(x^*))$, then we are done. So we may assume that every vertex in $C$ is $M(x^*)$-covered. Let $uv\in M(x^*)$ where $u\in V(C)$ and $v\notin V(C)$. Observe that $uv$ is the first edge of a path in $J$, so it either ends at an odd cycle in $\mathscr{C}(x^*)$ or $C$. Since $C$ has an odd number of vertices, by the pigeonhole principle there exists a path in $J$ which connects $C$ and an odd cycle in $\mathscr{C}(x^*)$.
\end{proof}

Recall that we assumed no two cycles in $\mathscr{C}(\hat{x})$ are adjacent. We also know that no two cycles in $\mathscr{C}(x^*)$ are adjacent. Since $\size{\mathscr{C}(\hat{x})}>\size{\mathscr{C}(x^*)}$, by the previous claim there exists an odd cycle in $\mathscr{C}(x^*)$ which is adjacent to at least two odd cycles in $\mathcal{C}(\hat{x})$. Let $C^*\in \mathscr{C}(x^*)$ be adjacent to $C_1,\dots,C_k\in \mathscr{C}(\hat{x})$ for some $k\geq 2$. For every $i\in[k]$, define
\[S_i:=\set{v\in V(C^*):v\in V(C_i) \text{ or } \exists\; \text{a path in }J \text{ from }v \text{ to }C_i}\]
and $S_0:=V(C^*)\setminus \cup_{i=1}^kS_i$. Note that $y_v>0$ for every vertex $v\in V(C^*)$. Hence, by complementary slackness every vertex in $S_0$ is $M(\hat{x})$-covered. Let $v\in S_0$. It is either matched to another vertex in $S_0$ or is an endpoint of a path in $J$ whose other endpoint is also a vertex in $S_0$. Hence, $\size{S_0}$ is even. Moreover, $S_i\neq \emptyset$ for all $i\geq 1$, and the sets $S_0,\dots,S_k$ partition $V(C^*)$. Let $\mathcal{P}$ be the set of paths in $J$ that start and end at $C^*$, and consider the subgraph $C^*\cup \mathcal{P}$. We claim that there exists an $M(\hat{x})$-alternating path from $S_i$ to $S_j$ in $C^*\cup \mathcal{P}$ for some $j\neq i$. Since every path in $\mathcal{P}$ starts and ends with an edge in $M(\hat{x})$, we can perform the following reduction: contract every path in $\mathcal{P}$ into a single edge in $M(\hat{x})$. It is easy to see that an $M(\hat{x})$-alternating path from $S_i$ to $S_j$ in $C^*\cup \mathcal{P}$ corresponds to an $M(\hat{x})$-alternating path from $S_i$ to $S_j$ in the reduced graph. Then, observe that the reduced graph along with the matching $M(\hat{x})$ forms a connector. By Lemma \ref{lem:connector}, there exists an $M(\hat{x})$-alternating path $P$ from $S_i$ to $S_j$ in $C^*\cup \mathcal{P}$ for some $j\neq i$.

Let $v_i\in S_i$ and $v_j\in S_j$ be the endpoints of $P$. Let $P_i$ and $P_j$ be the paths in $J$ connecting $v_i$ to $C_i$ and $v_j$ to $C_j$ respectively. If $v_i\in V(C_i)$, set $P_i=\emptyset$. Similarly if $v_j\in V(C_j)$, set $P_j=\emptyset$. Then, $P_i\cup P\cup P_j$ forms a tight $M(\hat{x})$-alternating path which connects $C_i$ and $C_j$.
\end{proof}


Given a basic maximum-weight fractional matching $\hat{x}$ in $G$, we would like to reduce the number of odd cycles in $\mathscr{C}(\hat{x})$ to $\gamma(G)$. One way to accomplish this is to search for the structures described in Theorem \ref{thm:cycles}. Fix a minimum fractional $w$-vertex cover $y$ in $G$. Let $G'$ be the unweighted graph obtained by applying the following operations to $G$ (see Figure \ref{fig:auxgraph}): 

\begin{enumerate}[noitemsep,topsep=0pt]
  \item[(a)] Delete all non-tight edges.
  \item[(b)] Add a vertex $z$. 
  \item[(c)] For every vertex $v\in V$ where $\hat{x}(\delta(v))=1$ and $y_v=0$, add the edge $vz$.
  \item[(d)] For every vertex $v\in V$ where $\hat{x}(\delta(v))=0$ and $y_v=0$, add the vertex $v'$ and the edges $vv',v'z$.
  \item[(e)] Shrink every odd cycle $C_i\in \mathscr{C}(\hat{x})$ into a pseudonode $i$.
\end{enumerate}

Note that none of the edges in $M(\hat{x})$ and $\mathscr{C}(\hat{x})$ were deleted because they are tight. Consider the edge set $M':=M(\hat{x})\cup \set{vv':v\in V}$. It is easy to see that $M'$ is a matching in $G'$. The significance of the auxiliary graph $G'$ is given by the following lemma:

\begin{figure}[ht]
\centering
\def\dist{1}
\begin{tikzpicture}[node distance=\dist cm, inner sep=2.5pt, minimum size=2.5pt, auto]
  \node [vertex] (z) at (0,4) {z};
  \node [vertex] (m1) at (1,1) {};
  \node [vertex] (m2) [above right of=m1] {};
  \node [vertex] (m3) at (3,1.5) {};
  \node [vertex] (m4) [right of=m3] {};
  \node [vertex,fill=gray!70] (p1) at (-0.5,1) {};
  \node [vertex,fill=gray!70] (p2) at (-1.5,1.5) {};
  \node [vertex,fill=gray!70] (p3) at (0.3,2) {};
  \node [vertex] (u1) at (-4,1.5) {};
  \node [vertex] (v1) at (-4,3) {};
  \node [vertex] (u2) at (-2.5,1.5) {};
  \node [vertex] (v2) at (-2.5,3) {};
  
  \draw (0,1.5) ellipse (5.5cm and 1cm);  

  \path [matched edge] (m1) -- (m2);
  \path [matched edge] (m3) -- (m4);
  \path [matched edge] (u1) -- (v1);
  \path [matched edge] (u2) -- (v2);
  \path (z) edge (m2);
  \path (z) edge (v1);
  \path (z) edge (v2);
  \path (z) edge (p1);
\end{tikzpicture}
\caption{The auxiliary graph $G'$ and the matching $M'$. Vertices in the ellipse are from the original graph $G$. Gray vertices represent pseudonodes.}
\label{fig:auxgraph}
\end{figure}
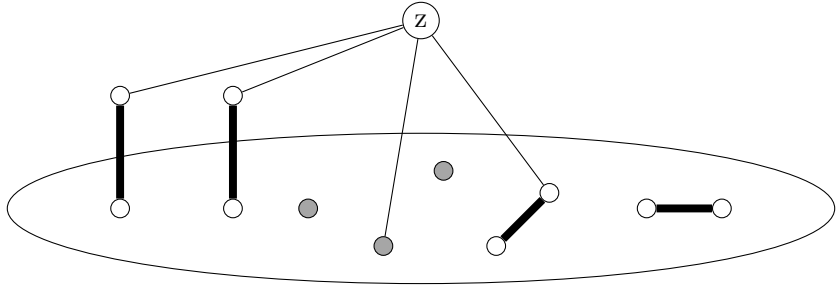

\begin{lemma}
\label{lem:augpaths}
  $M'$ is a maximum matching in $G'$ if and only if $\size{\mathscr{C}(\hat{x})}=\gamma(G)$.
\end{lemma}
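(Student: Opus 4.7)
The plan is to establish the iff by exhibiting a correspondence between $M'$-augmenting paths in $G'$ and the structures (i)--(iii) from Theorem \ref{thm:cycles}. As a preliminary I would first identify the $M'$-exposed vertices of $G'$. The pseudonodes $\{1,\dots,q\}$ are exposed: each cycle vertex $v$ has $\hat{x}(\delta(v))=1$ contributed entirely by its two incident $\tfrac12$-edges, so no edge of $M(\hat{x})$ touches it. The vertex $z$ is exposed since no edge of $M'$ is incident to it. Every other vertex of $G'$ is $M'$-matched: an original vertex $v\in V$ outside any cycle with $\hat{x}(\delta(v))=1$ is matched by its $M(\hat{x})$-edge; one with $\hat{x}(\delta(v))=0$ satisfies $y_v=0$ by complementary slackness and so is paired with its dummy $v'$ via rule (d); every dummy $v'$ is matched to its $v$. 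Consequently, any $M'$-augmenting path $Q$ in $G'$ has its endpoints in $\{z\}\cup\{1,\dots,q\}$.

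For the direction from $|\mathscr{C}(\hat{x})|=\gamma(G)$ to the maximality of $M'$, I argue the contrapositive. Assuming $|\mathscr{C}(\hat{x})|>\gamma(G)$, Theorem \ref{thm:cycles} supplies one of the three structures (i)--(iii), which I convert into an $M'$-augmenting path in $G'$. In case (i), a cycle vertex $v\in V(C_i)$ with $y_v=0$ produces via rule (c) the edge $vz$, which after shrinking is the length-one augmenting path $i$--$z$. In case (ii), the tight $M(\hat{x})$-alternating path $P$ between $V(C_i)$ and $V(C_j)$ starts and ends with unmatched edges because its cycle endpoints are $M(\hat{x})$-exposed; its interior vertices are $M(\hat{x})$-covered, hence lie outside $V(\mathscr{C}(\hat{x}))$, so the shrinking in (e) only identifies the two endpoints with pseudonodes $i$ and $j$, and $P$ becomes an $M'$-augmenting path between them. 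In case (iii), the same reasoning identifies $P$ with an $M'$-alternating path from pseudonode $i$ to the terminal vertex $v\notin V(\mathscr{C}(\hat{x}))$ with $y_v=0$, which I extend to $z$ either by the edge $vz$ of rule (c) when $v$ is $M(\hat{x})$-covered, or through the two-edge segment $v$--$v'$--$z$ of rule (d) when $v$ is $M(\hat{x})$-exposed.

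For the converse, I assume $Q$ is an $M'$-augmenting path in $G'$ and classify by its endpoints. If both endpoints are pseudonodes $i$ and $j$, then un-shrinking turns $Q$ into a tight $M(\hat{x})$-alternating path from $V(C_i)$ to $V(C_j)$: its interior lies in $V\setminus V(\mathscr{C}(\hat{x}))$, because pseudonodes and $z$ are $M'$-exposed and a dummy $v'$ interior to $Q$ would force $z$ onto $Q$ as its second neighbour, which is impossible in this case. This is structure (ii). If one endpoint is the pseudonode $i$ and the other is $z$, I examine the neighbour of $z$ on $Q$. If $Q$ has length one, the edge $iz$ must come from rule (c) applied to some $v\in V(C_i)$ with $y_v=0$, giving structure (i). If the penultimate vertex is an original $v$ with $y_v=0$ and $\hat{x}(\delta(v))=1$, truncating $Q$ at $v$ yields a tight valid $M(\hat{x})$-alternating path from $V(C_i)$ to $v$ ending with a matched edge at $v\notin V(\mathscr{C}(\hat{x}))$, i.e., structure (iii). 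If the penultimate vertex is a dummy $v'$, its partner $v$ satisfies $y_v=0$ and $\hat{x}(\delta(v))=0$, and truncating at $v$ gives a tight valid path from $V(C_i)$ ending at the exposed vertex $v$, again structure (iii). In each case, Theorem \ref{thm:cycles}'s ``furthermore'' clause supplies $\bar{x}$ with $|\mathscr{C}(\bar{x})|<|\mathscr{C}(\hat{x})|$, so $|\mathscr{C}(\hat{x})|>\gamma(G)$ as required. The principal subtlety throughout is matching alternation under the shrinking operation and across the dummy vertices; this reduces to the observation that interior vertices of $Q$ must be $M'$-matched, which rules out pseudonodes, $z$, and (in the pseudonode-to-pseudonode case) dummies, and lets the tightness guaranteed by rule (a) translate $M'$-augmentation in $G'$ directly into the alternating path structures of Theorem \ref{thm:cycles} in $G$.
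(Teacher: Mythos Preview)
Your proposal is correct and follows essentially the same approach as the paper: both directions are proved by translating between $M'$-augmenting paths in $G'$ and the three structures of Theorem~\ref{thm:cycles}, with the same case analysis on whether the endpoints are pseudonodes or $z$ (and, in the latter case, whether the edge to $z$ comes from rule (c) or via a dummy $v'$). Your added preliminaries identifying the $M'$-exposed vertices and the remark that interior vertices of an $M(\hat{x})$-alternating path are $M(\hat{x})$-covered (hence outside $V(\mathscr{C}(\hat{x}))$) make explicit what the paper leaves implicit. One small expository slip: in your second paragraph you announce the direction ``$|\mathscr{C}(\hat{x})|=\gamma(G)\Rightarrow M'$ maximum'' via contrapositive, but what you actually argue is ``$|\mathscr{C}(\hat{x})|>\gamma(G)\Rightarrow M'$ not maximum'', which is the contrapositive of the \emph{other} direction; since your third paragraph handles the remaining direction, both implications are still covered.
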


\begin{proof}
$(\Rightarrow)$ Let $\hat{x}$ be a basic maximum-weight fractional matching where $\size{\mathscr{C}(\hat{x})}>\gamma(G)$ and $y$ be a minimum fractional $w$-vertex cover in $G$. Applying Theorem \ref{thm:cycles} yields three cases. In Case (i), there exists a vertex $v\in C_i$ for some odd cycle $C_i\in \mathscr{C}(\hat{x})$ such that $y_v=0$. Then, the edge $iz$ is an $M'$-augmenting path in $G'$. In Case (ii), there exists a tight $M(\hat{x})$-alternating path $P$ in $G$ connecting two odd cycles $C_i,C_j\in \mathscr{C}(\hat{x})$. In $G'$, $P$ is an $M'$-augmenting path whose endpoints are pseudonodes $i$ and $j$. In Case (iii), there exists a tight and valid $M(\hat{x})$-alternating path $P$ in $G$ connecting an odd cycle $C_i\in\mathscr{C}(\hat{x})$ and a vertex $v\notin V(\mathscr{C}(\hat{x}))$ such that $y_v=0$. If $v$ is $M(\hat{x})$-covered, then $P+vz$ is an $M'$-augmenting path in $G'$. Otherwise, $P+vv'+v'z$ is an $M'$-augmenting path in $G'$. Thus, $M'$ is not a maximum matching in $G'$.

$(\Leftarrow)$ Assume $M'$ is not a maximum matching in $G'$. Then, there exists an $M'$-augmenting path $P$ in $G'$. If both of its endpoints are pseudonodes $i$ and $j$, then $P$ is a tight $M(\hat{x})$-alternating path in $G$ which connects $C_i$ and $C_j$. So we may assume the endpoints of $P$ are a pseudonode $i$ and $z$. If $iz\in E(P)$, then there exists a vertex $v\in V(C_i)$ such that $y_v=0$. If $vz\in E(P)$ for some $v\in V$, then $y_v=0$ and $v$ is $M(\hat{x})$-covered. Hence, $P-vz$ is a tight and valid $M(\hat{x})$-alternating path in $G$ connecting $C_i$ and $v$. Otherwise, $v'z\in E(P)$ for some $v\in V$, which implies that $y_v=0$ and $v$ is $M(\hat{x})$-exposed. Hence, $P-vv'-v'z$ is a tight and valid $M(\hat{x})$-alternating path in $G$ connecting $C_i$ and $v$. By Theorem \ref{thm:cycles}, $\size{\mathscr{C}(\hat{x})}>\gamma(G)$.
\end{proof}

Thus, searching for the structures in Theorem \ref{thm:cycles} is equivalent to searching for an $M'$-augmenting path in $G'$. This immediately gives us an algorithm to generate a basic maximum-weight fractional matching with $\gamma(G)$ odd cycles.

\begin{algorithm}[H]
	\caption{Minimize number of odd cycles}
	\label{alg:mincycle}

  Compute a basic maximum-weight fractional matching $\hat{x}$ in $G$\;
  Compute a minimum fractional $w$-vertex cover $y$ in $G$\;
  Construct $G'$ and $M'$\;
	
	\While{$\exists$ an $M'$-exposed pseudonode $r$ in $G'$}{
		Grow an $M'$-alternating tree $T$ rooted at $r$ using Edmonds' algorithm \cite{journals/cjm/Edmonds65}\;
		\uIf{an $M'$-augmenting $r$-$s$ path $P'$ is found in $G'$}{
			Let $P$ be the corresponding tight $M(\hat{x})$-alternating path in $G$\;
     	\uIf{$s$ is a pseudonode}{
        	Alternate round on $C_r,C_s$ and complement on $P$\;
      	}
      	\Else{
			Alternate round on $C_r$ and complement on $P$\;      		
		}
      	Update $G'$ and $M'$
		}
		\Else{
			$G'\leftarrow G'\setminus V(T)$
		}
	}
	\Return $\hat{x}$
\end{algorithm}

After an $M'$-augmenting path $P'$ is found, let $\bar{x}$ denote the new basic maximum-weight fractional matching in $G$ obtained by alternate rounding and complementing $\hat{x}$. We can update $G'$ as follows. If $s$ is a pseudonode, we unshrink $C_r$ and $C_s$ in $G'$ because $\mathscr{C}(\bar{x})=\mathscr{C}(\hat{x})\setminus\set{C_r,C_s}$. Otherwise, $s=z$ and we only unshrink $C_r$. Then, there are two cases. In the first case, we have $vz\in E(P')$ for some $v\in V$. Observe that $\hat{x}(\delta(v))=1$ but $\bar{x}(\delta(v))=0$. Hence we replace the edge $vz$ with edges $vv',v'z$. In the second case, we have $v'z\in E(P')$ for some $v\in V$. This implies $\hat{x}(\delta(v))=0$ but $\bar{x}(\delta(v))=1$. So we replace edges $vv',v'z$ with the edge $vz$.

\begin{theorem}
\label{thm:mincycles}
Algorithm \ref{alg:mincycle} computes a basic maximum-weight fractional matching with $\gamma(G)$ odd cycles in polynomial time.
\end{theorem}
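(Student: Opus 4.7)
The plan is to show two things: (a) the invariant that $\hat{x}$ is always a basic maximum-weight fractional matching and $y$ remains an optimal fractional $w$-vertex cover is preserved throughout the algorithm; and (b) the while loop terminates with $|\mathscr{C}(\hat{x})| = \gamma(G)$ after polynomially many iterations, each of which runs in polynomial time. Initialization is standard: a basic maximum-weight fractional matching and a minimum fractional $w$-vertex cover can both be computed in polynomial time (for instance by solving the LPs, or via combinatorial weighted matching algorithms), so Lines 1--3 are polynomial and the invariant holds at the start.

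For the main loop, I would first argue that every $M'$-augmenting path found by Edmonds' algorithm in $G'$ translates directly into one of the three structures from Theorem~\ref{thm:cycles} applied to the current $\hat{x}$ and $y$. This is essentially the $(\Leftarrow)$ direction of Lemma~\ref{lem:augpaths}: an $r$--$s$ augmenting path with both endpoints pseudonodes corresponds to Case (ii); if $s = z$ and the last edge is $iz$, we are in Case (i); if the last edge is $vz$ or $v'z$ with $v \in V$, we are in Case (iii). Since we only kept tight edges in $G'$, the pulled-back path in $G$ is indeed tight, and the validity requirement for $M(\hat{x})$-alternating paths is enforced by the $v \mapsto vv'$ gadget added in step (d) of the construction of $G'$. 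Hence, by the second part of Theorem~\ref{thm:cycles}, the alternate rounding and complementing step produces a new basic maximum-weight fractional matching $\bar{x}$ that is still optimal with respect to the same $y$, and satisfies $\mathscr{C}(\bar{x}) \subsetneq \mathscr{C}(\hat{x})$. The update rules for $G'$ and $M'$ spelled out in the text after the algorithm then make $(G', M')$ correctly reflect $(\bar{x}, y)$, restoring the invariant.

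The step that I expect to require the most care is justifying the \textbf{else} branch, where a frustrated alternating tree $T$ rooted at an exposed pseudonode $r$ is deleted from $G'$. Here I would invoke the standard argument from Edmonds' algorithm: a frustrated tree certifies that $r$ cannot be reached by any $M'$-augmenting path in $G'$, and moreover, deleting $V(T)$ cannot destroy $M'$-augmenting paths between any other pair of $M'$-exposed vertices (since every edge incident to $B(T)$ goes into $A(T) \subseteq V(T)$, any would-be augmenting path entering $V(T)$ would be trapped). Thus the deletion is safe, and the set of exposed pseudonodes strictly decreases in this case. Combined with the fact that each successful augmentation strictly decreases $|\mathscr{C}(\hat{x})|$, and both $|\mathscr{C}(\hat{x})|$ and the number of pseudonodes in $G'$ are bounded by $n$, the loop performs $O(n)$ iterations, each dominated by one run of Edmonds' algorithm, which is polynomial.

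Finally, when the loop exits, no $M'$-exposed pseudonode remains in $G'$ (together with the deleted frustrated components), so $M'$ is a maximum matching in the final $G'$. Applying the $(\Rightarrow)$ direction of Lemma~\ref{lem:augpaths} (which says that if $|\mathscr{C}(\hat{x})| > \gamma(G)$ then $M'$ is not maximum), we conclude $|\mathscr{C}(\hat{x})| = \gamma(G)$, and the returned $\hat{x}$ is the desired basic maximum-weight fractional matching.
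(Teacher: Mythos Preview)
Your proposal is correct and follows essentially the same approach as the paper: bound the number of iterations by $O(n)$ via the decrease in either $|\mathscr{C}(\hat{x})|$ or the number of exposed pseudonodes, and appeal to Lemma~\ref{lem:augpaths} at termination. One point worth making explicit, which the paper does and your sketch glosses over, is why a frustrated tree $T$ removed at some iteration remains frustrated with respect to the \emph{final} $G'$ and $M'$: later augmentations unshrink pseudonodes and reroute edges at $z$, so $G'$ genuinely changes, and you need to check that no new edge from outside $T$ to $B(T)$ can appear. The paper handles this by observing that the matching inside $T$ is never touched and that every edge in $\delta_{G'}(T)$ already had its $T$-endpoint in $A(T)$, a property preserved under unshrinking (an edge from a newly exposed vertex of some $C_s$ to $B(T)$ would have been an edge from pseudonode $s$ to $B(T)$ before, contradicting frustration). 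Your parenthetical ``together with the deleted frustrated components'' shows you are aware of the issue, but the ``standard argument from Edmonds' algorithm'' you invoke is for a static graph, so this dynamic check deserves a sentence.
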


\begin{proof}
There are at most $O(n)$ vertex-disjoint odd cycles in $\mathscr{C}(\hat{x})$. At every iteration, we eliminate at least one odd cycle from $\mathscr{C}(\hat{x})$ or a frustrated tree from $G'$. Hence, there are at most $O(n)$ iterations, and Algorithm \ref{alg:mincycle} terminates in polynomial time. Next, we prove correctness. Suppose we obtain an $M'$-frustrated tree $T$. Every edge in $T$ has one endpoint in $A(T)$ and another endpoint in $B(T)$. Every edge in $\delta_{G'}(T)$ has one endpoint in $A(T)$ and another endpoint outside $T$. Since the matching in $T$ remains unchanged in every iteration, this property continues to hold throughout the execution of the algorithm. Thus, $T$ is a frustrated tree in every subsequent iteration. This implies that the last matching generated by the algorithm is maximum. By Lemma \ref{lem:augpaths}, we have $\size{\mathscr{C}(\hat{x})}=\gamma(G)$.
\end{proof}

We remark here that in Algorithm \ref{alg:mincycle}, we can avoid solving linear programs to obtain $\hat{x}$ and $y$ in Steps 1 and 2. They can be computed using a simple duplication technique by Nemhauser and Trotter~\cite{journals/mp/NemhauserT75}, which involves solving the problem on a suitable bipartite graph.

\section{Computing vertex-stabilizers}
\label{sec:vertex_stabilizer}

The goal of this section is to prove the following theorem:

\begin{theorem}
\label{thm:vert_stabilizer}
There exists a polynomial-time algorithm that computes a minimum vertex-stabilizer $S$ for a weighted graph $G$. 
Moreover, $\nu(G\setminus S)\geq \frac{2}{3}\nu(G)$.
\end{theorem}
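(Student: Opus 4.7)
The plan is to apply Algorithm~\ref{alg:mincycle} (Theorem~\ref{thm:mincycles}) to obtain a basic max-weight fractional matching $\hat x$ whose support contains only $\gamma(G)$ odd cycles, remove one carefully chosen vertex from each cycle to form a candidate stabilizer $S$, and then establish separately (i) that the resulting graph is stable and retains at least $\tfrac{2}{3}\nu(G)$ of the original weight, and (ii) that any vertex-stabilizer has size at least $\gamma(G)$. Concretely, compute $\hat x$ and a minimum fractional $w$-vertex cover $y$ in polynomial time. For each $C_i\in \mathscr{C}(\hat x)$ with $|E(C_i)|=2k_i+1$, select $v_i\in V(C_i)$ so that the unique perfect matching $M_i$ on $V(C_i)\setminus\{v_i\}$ has maximum weight among the $2k_i+1$ possible deletions, and set $S:=\{v_1,\ldots,v_{\gamma(G)}\}$ with $M^\star:=M(\hat x)\cup \bigcup_i M_i$.

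\emph{Stability and weight.} Complementary slackness between $\hat x$ and $y$ guarantees that every edge in $\supp(\hat x)$ is $y$-tight and every vertex $v$ with $y_v>0$ has $\hat x(\delta(v))=1$. Thus every edge of $M^\star$ is tight, and every $v\in V\setminus S$ with $y_v>0$ is matched by $M^\star$ (it is either in $V(M(\hat x))$ or in $V(C_i)\setminus\{v_i\}$ for some $i$). Hence
\[
w(M^\star)=\sum_{uv\in M^\star}(y_u+y_v)=\sum_{v\in V\setminus S} y_v\ge \tau_f(G\setminus S)=\nu_f(G\setminus S),
\]
which, together with $w(M^\star)\le \nu(G\setminus S)\le \nu_f(G\setminus S)$, certifies that $G\setminus S$ is stable. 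By cyclic symmetry each edge of $C_i$ lies in exactly $k_i$ of the $2k_i+1$ candidate matchings, so $w(M_i)\ge \tfrac{k_i}{2k_i+1}w(E(C_i))\ge \tfrac{1}{3}w(E(C_i))$, giving
\[
w(M^\star)\ge w(M(\hat x))+\tfrac{1}{3}\sum_i w(E(C_i))\ge \tfrac{2}{3}\Bigl(w(M(\hat x))+\tfrac{1}{2}\sum_i w(E(C_i))\Bigr)=\tfrac{2}{3}\nu_f(G)\ge \tfrac{2}{3}\nu(G).
\]

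\emph{Lower bound and main obstacle.} To prove optimality of $|S|=\gamma(G)$, fix any vertex-stabilizer $S^*$ and let $M^*$ be an optimal integral matching in the stable graph $G\setminus S^*$. I would extend $M^*$ to a basic max-weight fractional matching $\tilde x$ of $G$ by iteratively applying $M^*$-augmenting walks, flowers, and bi-cycles anchored at vertices of $S^*$ (Theorem~\ref{thm:stable}), consolidating or rerouting the resulting half-integral structure via Theorem~\ref{thm:cycles} and Lemma~\ref{lem:connector} so that every odd cycle of $\mathscr{C}(\tilde x)$ can be charged to a distinct vertex of $S^*$. This yields $\gamma(G)\le |\mathscr{C}(\tilde x)|\le |S^*|$. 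The main obstacle is precisely this charging argument: in the weighted setting the augmenting operations interact non-trivially with edge tightness and with the dual, so one must carefully control them to ensure no odd cycle slips into $\mathscr{C}(\tilde x)$ beyond those forced by $S^*$. This is where the generalization from the unweighted analysis of \cite{conf/ipco/AhmadianHS16} is delicate, and the structural results of Section~\ref{sec:frac_matching} play the crucial role; indeed, the analogous bound fails in the edge-stabilizer setting, which is why the weaker $\lceil\gamma(G)/2\rceil$ appears there.
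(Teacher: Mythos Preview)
Your stability argument and the $\tfrac{2}{3}$ bound are correct, and the primal viewpoint you take for vertex selection (choose $v_i$ to maximise $w(M_i)$, then average over the $2k_i+1$ rotations of the cycle) is a pleasant alternative to the paper's dual viewpoint (choose $v_i$ to minimise $y_{v_i}$, then average the $y$-values over the cycle). Both land on the same $\tfrac{2}{3}$ factor, and the complementary-slackness computation you give is essentially Claim~\ref{clm:frac} unrolled.

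The gap is in the lower bound. You propose to start from an optimal integral matching $M^*$ in the stable graph $G\setminus S^*$, grow it into a basic optimum $\tilde x$ of (P) in $G$ via augmenting walks/flowers/bi-cycles anchored at $S^*$, and then charge each odd cycle of $\mathscr{C}(\tilde x)$ to a distinct vertex of $S^*$. You yourself flag the charging step as the main obstacle, and indeed it is: nothing in your outline prevents a single vertex of $S^*$ from participating in structures that spawn several half-integral cycles, nor does it prevent the successive augmentations from interacting so as to create cycles far from $S^*$. Theorem~\ref{thm:cycles} and Lemma~\ref{lem:connector} let you \emph{reduce} the number of odd cycles of a given basic optimum, not bound the number you acquire while climbing from $M^*$ to $\tilde x$; so invoking them here does not close the gap. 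As stated, this part is a sketch, not a proof.

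The paper avoids this difficulty entirely by a different, inductive route: it shows the single-step inequality $\gamma(G\setminus v)\ge \gamma(G)-1$ for every vertex $v$ (Lemma~\ref{lem:gamma_vert}), via a careful case analysis that compares an optimum $x^*$ in $G$ with an optimum in $G\setminus v$ and uses switching on components of $\supp(x^*+\hat x)$ together with Theorem~\ref{thm:cycles} to derive contradictions. Iterating over the vertices of $S^*$ then gives $|S^*|\ge \gamma(G)$ immediately. If you want to complete your approach, you would need a genuine injection from $\mathscr{C}(\tilde x)$ into $S^*$; absent that, the cleanest fix is to adopt the paper's one-vertex-at-a-time argument.
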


\noindent
Let us start with discussing a lower bound on the size of a minimum vertex-stabilizer.

\paragraph{Lower bound.} We will here prove that $\gamma(G)$ is a lower bound on the number of vertices to remove in order to stabilize a graph. Recall that a graph is stable if and only if $\gamma(G)=0$. One strategy to achieve this is by showing that $\gamma(G)$ does not decrease by too much when we remove a vertex. Indeed, we prove that $\gamma(G)$ drops by at most 1 when a vertex is deleted (Lemma \ref{lem:gamma_vert}). We first develop a couple of claims.

\begin{claim}
\label{clm:frac}
Let $\hat{x}$ and $y$ be a basic maximum-weight fractional matching and a minimum fractional $w$-vertex cover in $G$ respectively. Pick a vertex $s$ from any odd cycle $C\in \mathscr{C}(\hat{x})$. If $\bar{x}$ is the fractional matching obtained by alternate rounding on $C$ at $s$, then $\bar{x}_{-\delta(s)}$ and $y_{-s}$ is a basic maximum-weight fractional matching and a minimum fractional $w$-vertex cover in $G\setminus s$ respectively.
\end{claim}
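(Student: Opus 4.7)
The plan is to verify three things: (a) $\bar{x}_{-\delta(s)}$ is a feasible fractional matching in $G\setminus s$; (b) it is basic via Theorem \ref{thm:basic}; and (c) together with $y_{-s}$, it forms an optimal primal-dual pair in $G\setminus s$ via LP duality and complementary slackness. The first observation is that since $s \in V(C)$ with two cycle edges of value $\tfrac{1}{2}$ at $s$, we have $\hat{x}(\delta(s))=1$, forcing $\hat{x}_e=0$ for every non-cycle edge in $\delta(s)$. Alternate rounding at $s$ sets the two cycle edges at $s$ to $0$, so $\bar{x}_e=0$ on all of $\delta(s)$; thus $\bar{x}_{-\delta(s)}$ is just the restriction of $\bar{x}$ to $E(G\setminus s)$, inheriting feasibility for (P) on $G\setminus s$.

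To apply Theorem \ref{thm:basic}, I would describe the support of $\bar{x}_{-\delta(s)}$: writing $C=(e_1,e_2,\dots,e_{2k+1})$ with $s=e_1\cap e_{2k+1}$, the $1$-valued edges of $\bar{x}_{-\delta(s)}$ are $M(\hat{x})\cup\{e_2,e_4,\dots,e_{2k}\}$, which form a matching (the new edges lie on $V(C)\setminus\{s\}$, previously not covered by $M(\hat{x})$), and the $\tfrac{1}{2}$-valued edges induce the vertex-disjoint odd cycles $\mathscr{C}(\hat{x})\setminus\{C\}$.

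For (c), the vector $y_{-s}$ is trivially feasible in the dual LP of $G\setminus s$, since removing $s$ only drops dual constraints. To invoke complementary slackness, I would check the two CS conditions in $G\setminus s$. First, if $uv\in\supp(\bar{x}_{-\delta(s)})$, then $uv\in\supp(\hat{x})$, since alternate rounding only converts $\tfrac{1}{2}$-values on $E(C)$ into $0$ or $1$; hence $y_u+y_v=w_{uv}$ by the CS pair $(\hat{x},y)$ in $G$. Second, if $v\neq s$ with $y_v>0$, then $\hat{x}(\delta_G(v))=1$, and I must verify $\bar{x}(\delta_{G\setminus s}(v))=1$. If $v\notin V(C)$, the alternate rounding leaves $\delta(v)$ untouched, and the edge $vs$ (if it exists) had $\hat{x}_{vs}=0=\bar{x}_{vs}$. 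If $v\in V(C)\setminus\{s\}$, a parity check on the cycle indexing shows that $v$ is incident to exactly one edge in $\{e_2,e_4,\dots,e_{2k}\}$ (value $1$ in $\bar{x}$), while its other cycle edge (value $0$ in $\bar{x}$) is precisely the edge removed when $s$ disappears, in the case that $v\in\{v_1,v_{2k}\}$ is a neighbor of $s$ on $C$; all non-cycle edges at $v$ satisfy $\bar{x}_e=\hat{x}_e=0$ because $\hat{x}(\delta(v))$ was already saturated by the cycle. In every case $\bar{x}(\delta_{G\setminus s}(v))=1$.

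The only step requiring care is this last CS verification at vertices of $C\setminus\{s\}$: one has to trace the parity of edges around the cycle and confirm that the two neighbors of $s$ on $C$ remain saturated \emph{after} $s$ is deleted. Once this is done, weak LP duality combined with CS certifies that $\bar{x}_{-\delta(s)}$ is optimal for (P) on $G\setminus s$ and $y_{-s}$ is optimal for (D) on $G\setminus s$, completing the proof of the claim.
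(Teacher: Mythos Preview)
Your proposal is correct and follows essentially the same approach as the paper: verify feasibility of $\bar{x}_{-\delta(s)}$ and $y_{-s}$, then establish optimality via complementary slackness. The paper's proof is more terse (it simply asserts that $\bar{x}_{-\delta(s)}$ is basic and restricts the CS checks to edges and vertices of $C$, noting that nothing outside $C$ changes), whereas you spell out the support description for Theorem~\ref{thm:basic} and treat all vertex cases explicitly, but the underlying argument is identical.
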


\begin{proof}
First, notice that $\bar{x}_{-\delta(s)}$ is a basic fractional matching and $y_{-s}$ is a fractional $w$-vertex cover in $G\setminus s$. We will show that they satisfy complementary slackness. Let $uv\in E(C)$ be an edge where $\bar{x}_{uv}>0$. Since $e\in E(C)$, we have $\hat{x}_{uv}>0$ and so $y_u+y_v=w_{uv}$. Next, let $v\neq s$ be a vertex in $C$ where $y_v>0$. We only need to check the vertices in $C$ because $\hat{x}_e=0$ for every edge $e\in\delta(s)\setminus E(C)$. Since $v$ is $M(\bar{x})$-covered, we have $\bar{x}(\delta(v))=1$. Therefore, $\bar{x}_{-\delta(s)}$ and $y_{-s}$ form a primal-dual optimal pair.
\end{proof}

The following operation allows us to switch between fractional matchings on a set of edges:

\begin{definition}
Let $x$ and $x'$ be fractional matchings in $G$. By \emph{switching} on $E'\subseteq E$ from $x$ to $x'$, we mean replacing $x_e$ by $x'_e$ for all $e\in E'$.
\end{definition}

Switching does not necessarily yield a feasible fractional matching. Hence, we will only use it on the components of a specific subgraph of $G$:

\begin{claim}
\label{clm:switch}
Given two basic fractional matchings $x$ and $x'$, let $H$ be the subgraph of $G$ induced by $\supp(x+x')$. For any component $K$ in $H$, switching on $E(K)$ from $x$ to $x'$ yields a basic fractional matching in $G$.
\end{claim}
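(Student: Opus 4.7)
The plan is to verify the three conditions of Theorem \ref{thm:basic} for the vector $\tilde{x}$ defined by $\tilde{x}_e := x'_e$ if $e \in E(K)$ and $\tilde{x}_e := x_e$ otherwise. Namely, I need to check that (i) $\tilde{x}_e \in \set{0,\frac{1}{2},1}$ for every $e \in E$, (ii) $\tilde{x}(\delta(v)) \leq 1$ for every $v \in V$, and (iii) the edges with $\tilde{x}_e = \frac{1}{2}$ induce vertex-disjoint odd cycles. Condition (i) is immediate, since by Theorem \ref{thm:basic} applied separately to $x$ and $x'$ both vectors are already half-integral with entries in $\set{0,\frac{1}{2},1}$.

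For (ii), the key observation is that $V(K)$ is a union of connected components of $H$, hence every edge joining $V(K)$ to $V \setminus V(K)$ lies outside $\supp(x+x')$ and therefore satisfies $x_e = x'_e = 0$. Consequently, for $v \in V(K)$ all nonzero entries of $\tilde{x}$ in $\delta(v)$ come from $E(K)$, giving $\tilde{x}(\delta(v)) = x'(\delta(v)) \leq 1$; symmetrically, for $v \notin V(K)$ we have $\tilde{x}(\delta(v)) = x(\delta(v)) \leq 1$.

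For (iii), I would first establish the auxiliary claim that every odd $\frac{1}{2}$-cycle of $x$ (and symmetrically of $x'$) lies either entirely inside $K$ or entirely outside $K$. Indeed, such a cycle is connected and its edges belong to $\supp(x) \subseteq \supp(x+x') = E(H)$, so as soon as one of its vertices is in $V(K)$, the whole cycle is contained in $K$. It follows that the $\frac{1}{2}$-edges of $\tilde{x}$ split cleanly into the $\frac{1}{2}$-cycles of $x'$ lying inside $K$ (contributed by the switch) and the $\frac{1}{2}$-cycles of $x$ lying outside $K$ (inherited unchanged). These two families of cycles use disjoint vertex sets, separated by the boundary of $V(K)$, while within each family vertex-disjointness is inherited from Theorem \ref{thm:basic} applied to $x'$ or $x$. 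Therefore (iii) holds and Theorem \ref{thm:basic} certifies that $\tilde{x}$ is a basic fractional matching. The argument is essentially a bookkeeping verification with no real obstacle; the only subtle point is ensuring that no $\frac{1}{2}$-cycle straddles the boundary of $K$, which is exactly what the connectedness-of-cycles step rules out.
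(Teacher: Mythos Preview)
Your proof is correct and follows essentially the same approach as the paper's: both hinge on the fact that edges leaving $V(K)$ have $x_e = x'_e = 0$, from which feasibility of the switched vector follows immediately. You are more explicit than the paper in verifying condition~(iii) (the paper simply writes ``It is easy to see that $\bar{x}$ is basic''), but the underlying reasoning is the same.
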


\begin{proof}
Let $\bar{x}$ denote the vector obtained by switching on $E(K)$ from $x$ to $x'$. We first show that $\bar{x}$ is a feasible fractional matching in $G$. For the purpose of contradiction, suppose there exists a vertex $v\in V(K)$ such that $\bar{x}(\delta(v))>1$. Since $\bar{x}_e = x'_e$ for all $e\in E(K)$ and $\bar{x}_e=x_e$ for all $e\notin E(K)$, we have $0<\bar{x}(\delta(v)\cap E(K)) \leq 1$ and $0<\bar{x}(\delta(v)\setminus E(K)) \leq 1$. So there exists an edge $f\in \delta(v)\setminus E(K)$ such that $x_f>0$, which is a contradiction. It is easy to see that $\bar{x}$ is basic.
\end{proof}

\begin{lemma}\label{lem:gamma_vert}
For every vertex $v\in V$, $\gamma(G\setminus v)\geq \gamma(G)-1$.
\end{lemma}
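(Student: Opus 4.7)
The plan is to prove the equivalent inequality $\gamma(G) \leq \gamma(G \setminus v) + 1$ by constructing, from any basic maximum-weight fractional matching $\bar{x}$ in $G\setminus v$ with $|\mathscr{C}(\bar{x})| = \gamma(G\setminus v)$, a basic maximum-weight fractional matching in $G$ with at most $\gamma(G\setminus v)+1$ odd cycles. I first extend $\bar{x}$ to a vector $\bar{x}'$ on $E(G)$ by setting $\bar{x}'_e=0$ for all $e\in\delta(v)$; $\bar{x}'$ is then a basic fractional matching in $G$ with $\gamma(G\setminus v)$ odd cycles. If $\bar{x}'$ is already maximum-weight in $G$ we are done, so assume not, and fix any basic maximum-weight fractional matching $\hat{x}$ in $G$. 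Form the auxiliary graph $H := G[\supp(\hat{x})\cup \supp(\bar{x}')]$ and let $K$ be the component of $H$ containing $v$.

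The first key step uses Claim \ref{clm:switch} applied symmetrically: for each component $K'\neq K$ of $H$, switching $\hat{x}$ to $\bar{x}'$ on $K'$ gives a basic fractional matching in $G$, so $w^{\top}\bar{x}'|_{K'}\leq w^{\top}\hat{x}|_{K'}$ by maximality of $\hat{x}$; and since $v\notin V(K')$, switching $\bar{x}'$ to $\hat{x}$ on $K'$ gives a basic fractional matching in $G\setminus v$, so $w^{\top}\hat{x}|_{K'}\leq w^{\top}\bar{x}'|_{K'}$ by maximality of $\bar{x}$. Hence $w^{\top}\hat{x}|_{K'}=w^{\top}\bar{x}'|_{K'}$ for every $K'\neq K$. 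It follows (again by Claim \ref{clm:switch}) that the vector $\tilde{x}$ obtained by switching on $K$ from $\bar{x}'$ to $\hat{x}$ is a basic fractional matching in $G$ whose weight equals $w^{\top}\hat{x}|_K+\sum_{K'\neq K}w^{\top}\bar{x}'|_{K'}=w^{\top}\hat{x}=\nu_f(G)$, hence is maximum-weight. Letting $c$ and $d$ denote the numbers of odd cycles of $\hat{x}$ and of $\bar{x}$ entirely contained in $K$ respectively, we have $|\mathscr{C}(\tilde{x})| = c + (\gamma(G\setminus v)-d)$, so the lemma reduces to the bound $c\leq d+1$.

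To prove $c\leq d+1$ I would proceed by induction on $|V(G)|$ (with trivial base case). Since $\bar{x}|_K$ is a basic maximum-weight fractional matching in $K\setminus v$, we have $\gamma(K\setminus v)\leq d$. Applying Algorithm \ref{alg:mincycle} to the subgraph $K$, we may replace $\hat{x}|_K$ by a basic maximum-weight fractional matching on $K$ with exactly $\gamma(K)$ cycles without destroying the maximum-weight of $\hat{x}$ in $G$; the inductive hypothesis on $K$ then gives $c=\gamma(K)\leq \gamma(K\setminus v)+1\leq d+1$. The main obstacle is the boundary case $V(K)=V(G)$, where the recursion is circular; this case requires a separate direct argument, which I expect to follow from a careful analysis of the Theorem \ref{thm:stable}-style augmenting structures at $v$, showing that when $H$ is connected and spanning the discrepancy between $\hat{x}$ and $\bar{x}'$ is concentrated in a single augmenting path, flower, or bi-cycle based at $v$, and therefore contributes at most one extra odd cycle.
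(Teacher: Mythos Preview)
Your setup and the first step are fine: extending $\bar{x}$ by zeros, defining $H$, and showing via two applications of Claim~\ref{clm:switch} that $w^\top \hat{x}|_{K'}=w^\top\bar{x}'|_{K'}$ on every component $K'\neq K$ is correct, and indeed the resulting $\tilde{x}$ is a basic maximum-weight fractional matching in $G$. So the lemma does reduce to the inequality $c\leq d+1$ for the single component $K$ containing $v$. The observation that $\hat{x}|_{K}$ is maximum-weight in $G[V(K)]$ and $\bar{x}|_{K}$ is maximum-weight in $G[V(K)]\setminus v$ is also correct, so the reduction to $\gamma(G[V(K)])\leq \gamma(G[V(K)]\setminus v)+1$ is valid.

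The gap is that your induction does not terminate. When $V(K)=V(G)$ the inductive hypothesis you invoke is literally the statement you are proving, so the argument is circular. This boundary case is not a degenerate corner: it is in fact the \emph{generic} situation (for example, already a single odd cycle with unit weights and any vertex $v$ gives $H=G=K$), and after your reduction it carries essentially the full content of the lemma. Your proposed resolution---that connectivity of $H$ forces the discrepancy between $\hat{x}$ and $\bar{x}'$ to be ``a single augmenting path, flower, or bi-cycle based at $v$''---is not correct: connectivity of the union of supports places no bound on the number of odd cycles in either matching, and the difference $\hat{x}-\bar{x}'$ can be supported on arbitrarily many disjoint structures linked only through shared matching edges. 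So the sketch does not yield the bound $c\leq d+1$.

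By contrast, the paper does not attempt a global induction. It argues by contradiction, fixing an optimal $x^*$ with $\gamma(G)$ cycles and splitting on whether the deleted vertex $u$ lies in some cycle of $\mathscr{C}(x^*)$. When $u\in V(C)$ for some $C\in\mathscr{C}(x^*)$, Claim~\ref{clm:frac} produces an explicit optimal basic solution in $G\setminus u$ with $\gamma(G)-1$ cycles, and Theorem~\ref{thm:cycles} is then used to rule out any further drop. When $u\notin V(\mathscr{C}(x^*))$, the paper looks at the symmetric difference $J$ of the matching parts, shows that the path through $u$ is itself an entire component of the relevant support graph, and then switches on a \emph{different} component to manufacture a contradiction. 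The crucial ingredient your argument is missing is precisely this use of Theorem~\ref{thm:cycles} (or an equivalent structural statement) to control what happens inside the component containing $v$; without it, the boundary case cannot be closed.
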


\begin{proof}
Let $x^*$ be a basic maximum-weight fractional matching in $G$ such that $\size{\mathscr{C}(x^*)}=\gamma(G)$. Let $y$ be a minimum fractional $w$-vertex cover in $G$. For the purpose of contradiction, suppose there exists a vertex $u\in V$ such that $\gamma(G\setminus u)<\gamma(G)-1$. There are two cases:

\smallskip
\noindent
\emph{Case 1: $u\in V(C)$ for some odd cycle $C\in \mathscr{C}(x^*)$.}
Let $\bar{x}$ be the fractional matching obtained from $x^*$ by alternate rounding on $C$ at $u$. By Claim \ref{clm:frac}, we know that $\bar{x}_{-\delta(u)}$ is a basic maximum-weight fractional matching and $y_{-u}$ is a minimum fractional $w$-vertex cover in $G\setminus u$. We first give a proof sketch for this case. If $\bar{x}_{-\delta(u)}$ is not an optimal basic solution yielding $\gamma(G\setminus u)$ odd cycles, then one of the structures given by Theorem \ref{thm:cycles} must exist. This same structure would be a structure corresponding to the basic solution $x^*$, but this yields a contradiction since $x^*$ is an optimal basic solution with $\gamma(G)$ odd cycles.

For notational convenience, we can use $\mathscr{C}(\bar{x})$ and $M(\bar{x})$ to refer to the odd cycles and matched edges of $\bar{x}_{-\delta(u)}$ respectively because $\mathscr{C}(\bar{x})=\mathscr{C}\pr{\bar{x}_{-\delta(u)}}$ and $M(\bar{x})=M\pr{\bar{x}_{-\delta(u)}}$. Since $\size{\mathscr{C}(\bar{x})}=\size{\mathscr{C}(x^*)}-1=\gamma(G)-1>\gamma(G\setminus u)$, Theorem \ref{thm:cycles} tells us that $G\setminus u$ contains one of the following structures. The first structure is a vertex $v\in V(C_i)$ for some odd cycle $C_i\in \mathscr{C}(\bar{x})$ such that $y_v=0$. However, since $C_i\in \mathscr{C}(x^*)$, by Theorem \ref{thm:cycles} we arrive at the contradiction $\size{\mathscr{C}(x^*)}>\gamma(G)$. The second structure is a tight and valid $M(\bar{x})$-alternating path $P$ which connects two odd cycles $C_i,C_j\in\mathscr{C}(\bar{x})$, or an odd cycle $C_i\in \mathscr{C}(\bar{x})$ and a vertex $v\notin V(\mathscr{C}(\bar{x}))$ such that $y_v=0$. Note that $C_i,C_j\in \mathscr{C}(x^*)$. If $V(P)\cap V(C)=\emptyset$, then $P$ is also a tight and valid $M(x^*)$-alternating path in $G$ which connects $C_i$ and $C_j$, or $C_i$ and $v$. So, let $s=V(C_i)\cap V(P)$ and $t$ denote the first vertex of $C$ encountered while traversing along $P$ from $s$. Then, the $s$-$t$ subpath of $P$ is a tight $M(x^*)$-alternating path which connects $C_i,C\in \mathscr{C}(x^*)$. We again obtain the contradiction $\size{\mathscr{C}(x^*)}>\gamma(G)$ by Theorem \ref{thm:cycles}.

\smallskip
\noindent
\emph{Case 2: $u\notin V(\mathscr{C}(x^*))$.}
If $u$ is $M(x^*)$-exposed, then $\nu_f(G\setminus u)=\nu_f(G)$ and $\gamma(G\setminus u)=\gamma(G)$. So we may assume $u$ is $M(x^*)$-covered. Let $\hat{x}$ be a basic maximum-weight fractional matching in $G\setminus u$ such that $\size{\mathscr{C}(\hat{x})}< \gamma(G)-1$. Define $N(\hat{x}):=M(\hat{x})\setminus E(\mathscr{C}(x^*))$ and $N(x^*):=M(x^*)\setminus E(\mathscr{C}(\hat{x}))$. Consider the subgraph $J=(V,N(x^*)\triangle N(\hat{x}))$. Note that $u$ is covered by $N(x^*)$ and exposed by $N(\hat{x})$. Let $P$ be the component in $J$ which contains $u$. We know that $P$ is a path with $u$ as an endpoint. Let $v$ be the other endpoint of $P$. There are 3 subcases, but before jumping into them, we first give an overview of how we arrive at a contradiction in each subcase. We show that one can move from $x^*$ to a new solution $\tilde x$ such that:
\begin{enumerate}[noitemsep,topsep=0pt]
  \item[(i)] $\tilde{x}$ is a basic maximum-weight fractional matching for a subgraph $G'$ obtained by deleting at most 1 vertex from a cycle of $\mathscr{C}(x^*)$; and
  \item[(ii)] $\size{\mathscr{C}(\tilde{x})}<\gamma(G')$.
\end{enumerate}
Clearly, both of the above properties cannot hold, so this yields a contradiction.

\smallskip
\emph{Subcase 2.1: $v\in C$ for some odd cycle $C\in \mathscr{C}(x^*)$.}
In this subcase, the path $P$ has even length. Let $\bar{x}$ be the fractional matching obtained from $x^*$ by alternate rounding on $C$ at $v$. By Claim \ref{clm:frac}, $\bar{x}_{-\delta(v)}$ is a basic maximum-weight fractional matching in $G\setminus v$. Let $H$ be the subgraph of $G$ induced by $\supp(\hat{x}+\bar{x})$ (see Figure \ref{fig:subcase1} for an example). Note that $\hat{x}_e+\bar{x}_e=0$ for every edge $e\notin E(P)$ which is incident to a vertex in $P$. Thus, $P$ is a component in $H$. Since $\size{\mathscr{C}(\bar{x})}=\gamma(G)-1>\size{\mathscr{C}(\hat{x})}$, there exists a component $K$ in $H$ which has more odd cycles from $\mathscr{C}(\bar{x})$ than $\mathscr{C}(\hat{x})$. Switching on $K$ from $\bar{x}_{-\delta(v)}$ to $\hat{x}$ yields a basic fractional matching in $G\setminus v$ with less than $\gamma(G)-1$ odd cycles. To yield a contradiction to Case 1, it is left to show that it is maximum-weight. This is because we are deleting a vertex $v$ from an odd cycle of $\mathscr{C}(x^*)$, but $\gamma(G\setminus v)$ decreases by more than 1. Now, since $\hat{x}$ and $\bar{x}_{-\delta(v)}$ are maximum-weight fractional matchings in $G\setminus u$ and $G\setminus v$ respectively, we have $\sum_{e\in E(K)}w_e\hat{x}_e = \sum_{e\in E(K)}w_e\bar{x}_e$ because $u,v\notin V(K)$. Thus, the resulting matching is indeed maximum-weight in $G\setminus v$.

\begin{figure}[ht]
\centering
\def\dist{1.3}
\begin{tikzpicture}[node distance=\dist cm, inner sep=2.5pt, minimum size=2.5pt, auto]
  \node [vertex,label=below:\small{$v$}] (v1) {};
  \node [vertex] (v2) [above right of=v1] {};
  \node [vertex] (v3) [below right of=v1] {};
  \node [vertex] (v4) [right of=v2] {};
  \node [vertex] (v5) [right of=v3] {};
  \node [vertex] (v6) [left of=v1] {};
  \node [vertex] (v7) [left of=v6] {};
  \node [vertex] (v8) [left of=v7] {};
  \node [vertex,label=below:\small{$u$}] (v9) [left of=v8] {};

  \path [matched edge] (v2) -- (v4);
  \path [matched edge] (v3) -- (v5);  
  \path [dashed edge] (v1) -- (v2);
  \path [dashed edge] (v1) -- (v3);
  \path [dashed edge] (v4) -- (v5);
  \path [selected edge] (v1) -- (v6);
  \path [matched edge] (v6) -- (v7);
  \path [selected edge] (v7) -- (v8);
  \path [matched edge] (v8) -- (v9);

  \node[weight] (c) [right of=v1] {$C$};
  \node[weight] (k) at (1.25,1.75) {$K$};

  \draw plot [smooth cycle] coordinates {(0,1) (-1,2) (2,2.5) (3,1) (2,0.5)};
\end{tikzpicture}
\caption{An example of the graph induced by $\supp(\hat{x}+\bar{x})$ in Subcase 2.1. Black bold edges are in $M(\bar{x})$ while gray bold edges are in $M(\hat{x})$.}
\label{fig:subcase1}
\end{figure}
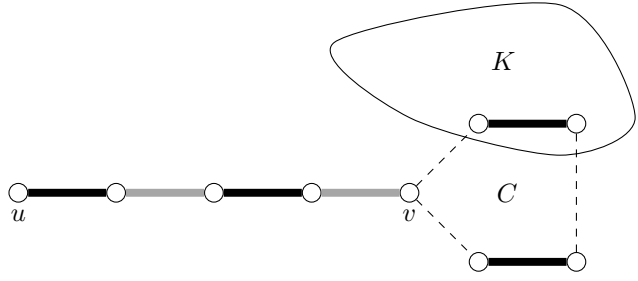

\smallskip
\emph{Subcase 2.2: $v\in C$ for some odd cycle $C\in \mathscr{C}(\hat{x})$.}
In this subcase, the path $P$ has odd length. Let $\bar{x}$ be the fractional matching obtained from $\hat{x}$ by alternate rounding on $C$ at $v$. By Claim \ref{clm:frac}, $\bar{x}_{-\delta(v)}$ is a basic maximum-weight fractional matching in $G\setminus\set{u,v}$. Let $H$ be the subgraph of $G$ induced by $\supp(x^*+\bar{x})$ (see Figure \ref{fig:subcase2} for an example). Note that $x^*_e+\bar{x}_e=0$ for every edge $e\notin E(P)$ incident to a vertex in $P$. Thus, $P$ is a component in $H$. Since $\size{\mathscr{C}(\bar{x})} = \size{\mathscr{C}(\hat{x})} - 1 < \gamma(G)-2<\size{\mathscr{C}(x^*)}$, there exists a component $K$ in $H$ which has more odd cycles from $\mathscr{C}(x^*)$ than $\mathscr{C}(\bar{x})$. Switching on $K$ from $x^*$ to $\bar{x}$ yields a basic fractional matching in $G$ with less than $\gamma(G)$ odd cycles. To yield a contradiction, it is left to show that it is maximum-weight. Since $x^*$ and $\bar{x}_{-\delta(v)}$ are maximum-weight fractional matchings in $G$ and $G\setminus\set{u,v}$ respectively, we have $\sum_{e\in E(K)}w_ex^*_e=\sum_{e\in E(K)}w_e\bar{x}_e$ because $u,v\notin V(K)$. Thus, the resulting basic fractional matching is maximum-weight in $G$.

\begin{figure}[ht]
\centering
\def\dist{1.3}
\begin{tikzpicture}[node distance=\dist cm, inner sep=2.5pt, minimum size=2.5pt, auto]
  \node [vertex,label=below:\small{$v$}] (v1) {};
  \node [vertex] (v2) [above right of=v1] {};
  \node [vertex] (v3) [below right of=v1] {};
  \node [vertex] (v4) [right of=v2] {};
  \node [vertex] (v5) [right of=v3] {};
  \node [vertex] (v6) [left of=v1] {};
  \node [vertex] (v7) [left of=v6] {};
  \node [vertex,label=below:\small{$u$}] (v8) [left of=v7] {};

  \path [selected edge] (v2) -- (v4);
  \path [selected edge] (v3) -- (v5);  
  \path [dashed edge] (v1) -- (v2);
  \path [dashed edge] (v1) -- (v3);
  \path [dashed edge] (v4) -- (v5);
  \path [matched edge] (v1) -- (v6);
  \path [selected edge] (v6) -- (v7);
  \path [matched edge] (v7) -- (v8);

  \node[weight] (c) [right of=v1] {$C$};
  \node[weight] (k) at (1.25,1.75) {$K$};

  \draw plot [smooth cycle] coordinates {(0,1) (-1,2) (2,2.5) (3,1) (2,0.5)};
\end{tikzpicture}
\caption{An example of the graph induced by $\supp(x^*+\bar{x})$ in Subcase 2.2. Black bold edges are in $M(x^*)$ while gray bold edges are in $M(\bar{x})$.}
\label{fig:subcase2}
\end{figure}
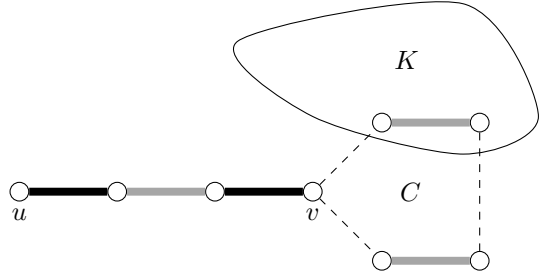

\smallskip
\emph{Subcase 2.3: $v\notin V(\mathscr{C}(x^*)\cup \mathscr{C}(\hat{x}))$.}
Let $H$ be the subgraph of $G$ induced by $\supp(x^*+\hat{x})$ (see Figure \ref{fig:subcase3} for an example). Note that $x^*_e+\hat{x}_e=0$ for every edge $e\notin E(P)$ which is incident to a vertex in $P$. Thus, the path $P$ is a component in $H$. Since $\size{\mathscr{C}(x^*)}>\gamma(G)-1>\size{\mathscr{C}(\hat{x})}$, there exists a component $K$ in $H$ which has more odd cycles from $\mathscr{C}(x^*)$ than $\mathscr{C}(\hat{x})$. Switching on $K$ from $x^*$ to $\hat{x}$ yields a basic fractional matching in $G$ with less than $\gamma(G)$ odd cycles. To yield a contradiction, it is left to show that it is maximum-weight. Since $x^*$ and $\hat{x}$ are maximum-weight fractional matchings in $G$ and $G\setminus u$ respectively, we have $\sum_{e\in E(K)}w_e x^*_e=\sum_{e\in E(K)}w_e\hat{x}_e$ because $u\notin V(K)$. This implies that the resulting basic fractional matching is maximum-weight in $G$.
\end{proof}

\begin{figure}[ht]
\centering
\def\dist{1.3}
\begin{tikzpicture}[node distance=\dist cm, inner sep=2.5pt, minimum size=2.5pt, auto]
  \node [vertex,label=below:\small{$v$}] (v1) {};
  \node [vertex] (v2) [left of=v1] {};
  \node [vertex] (v3) [left of=v2] {};
  \node [vertex] (v4) [left of=v3] {};
  \node [vertex,label=below:\small{$u$}] (v5) [left of=v4] {};

  \path [selected edge] (v1) -- (v2);
  \path [matched edge] (v2) -- (v3);  
  \path [selected edge] (v3) -- (v4);
  \path [matched edge] (v4) -- (v5);

  \node[weight] (k) at (1.25,1.75) {$K$};

  \draw plot [smooth cycle] coordinates {(0,1) (-1,2) (2,2.5) (3,1) (2,0.5)};
\end{tikzpicture}
\caption{An example of the graph induced by $\supp(x^*+\hat{x})$ in Subcase 2.3. Black bold edges are in $M(x^*)$ while gray bold edges are in $M(\hat{x})$.}
\label{fig:subcase3}
\end{figure}
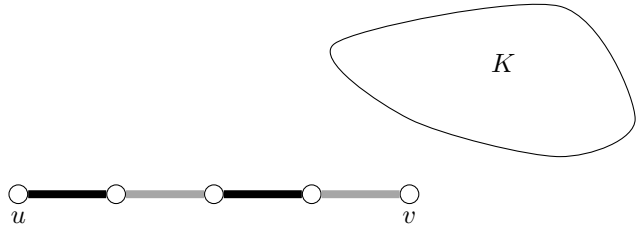

As a corollary to the above lemma, we obtain the claimed lower bound.

\begin{lemma}\label{lem:bound_vert}
For every vertex-stabilizer $S$ of $G$, $\size{S}\geq \gamma(G)$.
\end{lemma}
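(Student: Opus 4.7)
The plan is to derive Lemma \ref{lem:bound_vert} as a direct corollary of Lemma \ref{lem:gamma_vert}, which shows that removing a single vertex can decrease $\gamma$ by at most $1$. The key link is the characterization recalled in the preliminaries: a graph $H$ is stable if and only if $\gamma(H)=0$. Consequently, if $S$ is a vertex-stabilizer of $G$, then $G\setminus S$ is stable and hence $\gamma(G\setminus S)=0$.

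Concretely, I would proceed by induction on $|S|$. Order the elements of $S$ arbitrarily as $v_1,v_2,\ldots,v_k$ and consider the sequence of graphs
\[
G_0 := G, \quad G_i := G_{i-1}\setminus v_i \text{ for } i=1,\ldots,k,
\]
so that $G_k = G\setminus S$. Applying Lemma \ref{lem:gamma_vert} to each step gives $\gamma(G_i)\geq \gamma(G_{i-1}) - 1$, and telescoping yields $\gamma(G\setminus S) \geq \gamma(G) - |S|$. Using $\gamma(G\setminus S)=0$ then gives $|S|\geq \gamma(G)$, as required.

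There is no real obstacle here: all of the technical content has already been absorbed into Lemma \ref{lem:gamma_vert}, whose proof handles the subtle cases (vertex on a half-integral cycle vs.\ matched vs.\ exposed vertex) via the switching arguments developed earlier in the section. The present lemma just repackages that bound into the form that will be used in the lower-bound analysis of the vertex-stabilizer algorithm.
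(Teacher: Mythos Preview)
Your proof is correct and matches the paper's approach: the paper states Lemma~\ref{lem:bound_vert} as an immediate corollary of Lemma~\ref{lem:gamma_vert}, and your telescoping argument together with $\gamma(G\setminus S)=0$ is exactly the intended derivation.
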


\paragraph{The algorithm.} The algorithm we use to stabilize a graph is very simple: it computes 
a basic maximum-weight fractional matching $\hat{x}$ in $G$ with $\gamma(G)$ odd cycles (this can be done using 
Algorithm \ref{alg:mincycle}) and 
a minimum fractional $w$-vertex cover $y$ in $G$, and then removes \emph{one} vertex from every cycle in $\mathscr{C}(\hat{x})$, namely, the vertex with the least $y$-value in the cycle. Algorithm \ref{alg:vert} formalizes this.

\begin{algorithm}[H]
Initialize $S \gets \emptyset$ \;
Compute a minimum fractional $w$-vertex cover $y$ in $G$\;
Compute a basic maximum-weight fractional matching $\hat{x}$ in $G$ with $\gamma(G)$ odd cycles\;
Let $\mathscr{C}(\hat{x})=\set{C_1,C_2,\dots,C_{\gamma(G)}}$\;
\For{$i = 1$ \textbf{to} $\gamma(G)$}{
  Let $v_i= \argmin_{v\in V(C_i)}y_v$\;
  $S\gets S+v_i$
}
\Return $S$ 
\caption{Minimum vertex-stabilizer}
\label{alg:vert}
\end{algorithm}

\noindent
We are now ready to prove the main theorem stated at the beginning of the section, Theorem \ref{thm:vert_stabilizer}.
\begin{proof}[Proof of Theorem \ref{thm:vert_stabilizer}.]
Let $S=\set{v_1,v_2,\dots,v_{\gamma(G)}}$ be the set of vertices returned by the algorithm. Let $\bar{x}$ be the vector obtained from $\hat{x}$ by alternate rounding on $C_i$ at $v_i$ for all $i$ respectively. By Lemma \ref{clm:frac}, $\bar{x}_{-\cup_{i=1}^{\gamma(G)}\delta(v_i)}$ is a basic maximum-weight fractional matching in $G\setminus S$. Note that it is also a maximum-weight integral matching in $G\setminus S$. Thus, $\nu(G\setminus S)=\nu_f(G\setminus S)$ and $G\setminus S$ is stable. Moreover, $S$ is minimum by Lemma \ref{lem:bound_vert}. It is left to show that $\nu(G\setminus S)\geq \frac{2}{3}\nu(G)$. For every odd cycle $C_i\in \mathscr{C}(\hat{x})$, we have 
\[y_{v_i}\leq \frac{y(V(C_i))}{\size{V(C_i)}}\leq \frac{y(V(C_i))}{3}\]
because $v_i$ has the smallest fractional $w$-vertex cover in $C_i$. From Lemma \ref{clm:frac}, we also know that $y_{-S}$ is a minimum fractional $w$-vertex cover in $G\setminus S$. Then,
\[\nu(G\setminus S) = \tau_f(G\setminus S) = \1^{\top}y-\sum_{i=1}^{\gamma(G)}y_{v_i}\geq \1^{\top}y - \frac{1}{3}\sum_{i=1}^{\gamma(G)}y(C_i) \geq \1^{\top}y - \frac{1}{3}\1^{\top}y = \frac{2}{3}\tau_f(G) \geq \frac{2}{3}\nu(G)\]
\end{proof}

Note that removing any single vertex from each cycle of $\mathscr{C}(\hat{x})$ yields a minimum-cardinality vertex stabilizer. The reason we chose the vertex with the smallest $y_v$ is to preserve the value of the original maximum-weight matching by a factor of $\frac{2}{3}$.

\paragraph{Tightness of the matching bound.} A natural question is whether it is possible to design an algorithm that always returns a vertex-stabilizer $S$ satisfying $\nu(G\setminus S) \geq \alpha \nu(G)$, for some $\alpha > \frac{2}{3}$. We report here an example showing that, in general, this is not possible since the bound of $\frac{2}{3}$ can be asymptotically tight. Consider the graph $G$ in Figure \ref{fig:twothirds} for some sufficiently small $\varepsilon>0$. It is unstable because it is an augmenting flower. The maximum-weight matching is given by the bold edges. For any vertex stabilizer $S$, 
\[\nu(G\setminus S)\leq 2 = \frac{2}{3-\varepsilon}\pr{3-\varepsilon} = \frac{2}{3-\varepsilon}\nu(G)\]

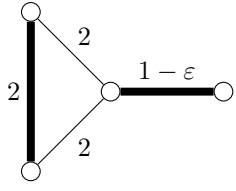
\begin{figure}[ht]
\centering
\def\dist{1.5}
\begin{tikzpicture}[node distance=\dist cm, inner sep=2.5pt, minimum size=2.5pt, auto]
  \node [vertex] (v1) {};
  \node [vertex] (v2) [above left of=v1] {};
  \node [vertex] (v3) [below left of=v1] {};
  \node [vertex] (v4) [right of=v1] {};

  \path [matched edge] (v2) -- node [weight, left] {$2$} (v3);
  \path (v1) edge node [weight, above right] {$2$} (v2);
  \path [matched edge] (v1) -- node [weight, above] {$1-\varepsilon$} (v4);  
  \path (v1) edge node [weight, below right] {$2$} (v3);
\end{tikzpicture}
\caption{An example showing that the bound of $\frac{2}{3}$ is asymptotically tight.}
\label{fig:twothirds}
\end{figure}

Another natural question is whether one can at least distinguish if, for a specific instance, there exists
a vertex-stabilizer $S$ such that $\nu(G\setminus S)=\nu(G)$. Once again, we show that the answer is negative.
Specifically, let us call a vertex-stabilizer $S$ \emph{weight-preserving} if $\nu(G\setminus S)=\nu(G)$. We show that finding such a vertex-stabilizer is hard in general. The proof is based on a reduction from the independent set problem, similar to the one given by Bir\'{o} et al.~\cite{journals/tcs/BiroBGKP}.

\begin{theorem}
\label{thm:hardness_vert}
Deciding whether a graph has a weight-preserving vertex-stabilizer is NP-complete.
\end{theorem}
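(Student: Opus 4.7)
The plan is to prove the result in two parts: first show that the problem is in NP, and then establish NP-hardness via a polynomial-time reduction from Independent Set.

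Membership in NP is straightforward. Given a candidate set $S\subseteq V$, we can verify in polynomial time that $S$ is a weight-preserving vertex-stabilizer by computing the maximum-weight matchings $\nu(G)$ and $\nu(G\setminus S)$ and the maximum fractional matching $\nu_f(G\setminus S)$ using standard polynomial-time algorithms, and then checking the equalities $\nu(G)=\nu(G\setminus S)=\nu_f(G\setminus S)$. Equivalently, we could run Algorithm~\ref{alg:mincycle} on $G\setminus S$ and check that the returned basic solution has no half-valued odd cycles.

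For NP-hardness I would follow the template used by Bir\'o et al.~\cite{journals/tcs/BiroBGKP} for the edge-stabilizer problem, adapted to vertex removal. Given an instance $(H,k)$ of Independent Set with $H=(V_H,E_H)$, construct a weighted graph $G$ by attaching a small vertex-gadget $\Gamma_v$ to each $v\in V_H$ (for concreteness, a triangle with a pendant as in Figure~\ref{fig:twothirds}) so that every $\Gamma_v$ is an augmenting flower in $G$, and by adding, for each $uv\in E_H$, a cross-edge between $\Gamma_u$ and $\Gamma_v$. The weight assignment should enforce three properties: (i) each gadget is an augmenting flower, so by Theorem~\ref{thm:stable} any vertex-stabilizer must intersect every $\Gamma_v$; (ii) inside each $\Gamma_v$ there is a unique ``representative'' vertex $r_v$ whose removal both locally kills the augmenting flower and preserves the gadget's contribution to $\nu(G)$, whereas removing any other vertex of $\Gamma_v$ strictly decreases $\nu$; (iii) every cross-edge is heavy enough to lie in every maximum-weight matching of $G$, so that simultaneously removing the representatives $r_u,r_v$ of two $H$-adjacent gadgets destroys a cross-edge and strictly decreases $\nu(G)$. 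The parameter $k$ can be encoded through an auxiliary set of forced-stabilization gadgets that effectively raise the threshold by exactly $k$. Under such a construction, a weight-preserving vertex-stabilizer will exist in $G$ if and only if $H$ admits an independent set of size at least $k$: the gadgets stabilized by removing their representative must project onto an independent set of $V_H$, and the additional constraint from the auxiliary gadgets ensures that this independent set has size at least $k$.

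The main obstacle I expect is the simultaneous calibration of the triangle, pendant, and cross-edge weights so that all three of the above properties hold; in particular, I have to rule out ``cheating'' stabilizers that preserve the matching weight by removing non-representative vertices or by mixing different kinds of removals across the gadgets. I anticipate that a small perturbation parameter $\varepsilon>0$ on the pendant edges, together with an exhaustive case analysis of which vertices of each gadget are removed and of how these removals interact through the cross-edges, will be enough to push through the correctness proof; Theorem~\ref{thm:stable} provides the combinatorial tool to certify (in)stability throughout this analysis.
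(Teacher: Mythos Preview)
Your NP-membership argument is fine and matches the paper. The NP-hardness sketch, however, has a genuine gap: properties (i)--(iii) as you state them are mutually inconsistent. By (i) any vertex-stabilizer must intersect every gadget $\Gamma_v$; by (ii) the only weight-preserving way to do so is to remove the representative $r_v$. Hence a weight-preserving stabilizer would have to remove \emph{all} representatives $\{r_v:v\in V_H\}$. But then, by (iii), for every edge of $H$ a heavy cross-edge (belonging to every maximum-weight matching) is destroyed, so $\nu$ strictly drops whenever $E_H\neq\emptyset$. Your reduction therefore decides whether $H$ is edgeless, not whether it has an independent set of size $k$. The unspecified ``auxiliary gadgets that raise the threshold by $k$'' cannot repair this, since the contradiction already arises at the level of the per-vertex gadgets; and the ``small perturbation $\varepsilon$'' you anticipate does not address a structural inconsistency of this kind.

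The paper sidesteps this by a different architecture: it uses exactly $k$ odd-cycle gadgets (triangles), not one per vertex of $H$, and connects each triangle's base $b_i$ to \emph{every} vertex of $H$. Pendants $v'_j$ are attached to all vertices of $H$, and one shows that the pendants are precisely the inessential vertices, so any weight-preserving stabilizer consists of pendants only. A counting argument forces exactly $k$ pendants to be removed; the corresponding $k$ vertices of $H$ are then matched to the $k$ triangles, and stability (absence of an augmenting bi-cycle through two triangles joined via an edge of $H$) forces those $k$ vertices to be independent. The key idea you are missing is that the number of unstable gadgets should equal the target $k$, and the \emph{choice} of which $k$ of the $n$ pendants to delete is what encodes the independent set.
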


\begin{proof}
The problem is clearly in NP because any yes-instance can be verified using a weight-preserving vertex-stabilizer in polynomial time. To prove NP-hardness, we give a reduction from the independent set problem. Let $G=(V,E)$ and $k$ be an independent set instance, where $V=\set{v_1,v_2,\dots,v_n}$. The independent set problem asks to determine whether $G$ has an independent set of size at least $k$. We may assume $2\leq k\leq n$. We construct the gadget graph $G^*$ as follows. First, set the weight on every edge in $E$ to 1. For each $v_i\in V$, add a vertex $v'_i$ and the edge $v_iv'_i$ with weight 1. Denote this set of new vertices as $V'=\set{v'_1,v'_2,\dots,v'_n}$. Next, create $k$ pairwise-disjoint copies of the three cycle $C_i=(V_i,E_i)$ where $V_i=\set{a_i,b_i,c_i}$, $E_i=\set{a_ib_i,b_ic_i,a_ic_i}$ and the weight on every edge in $E_i$ is 4. Finally, add the edge $b_iv_j$ for every $i\in [k]$ and $j\in [n]$ with weight 2. (See Figure \ref{fig:hardness_vert})

\begin{figure}[ht]
\centering
\def\dist{1.5}
\def\gap{0.4}
\begin{tikzpicture}[node distance=\dist cm, inner sep=2.5pt, minimum size=2.5pt, auto]
\node [vertex,label=below left:\small{$v_1$}] (v1) {};
\node [vertex,label=below left:\small{$v_2$}] (v2) [right of=v1, xshift=\gap*\dist cm] {};
\node [font=\large] (d1) [right of=v2, xshift=\gap*\dist cm] {$\dots$};
\node [vertex,label=below left:\small{$v_k$}] (v3) [right of=d1, xshift=\gap*\dist cm] {};
\node [font=\large] (d2) [right of=v3, xshift=\gap*\dist cm] {$\dots$};
\node [font=\large] (d3) [above of=d1, yshift=\gap*\dist cm] {$\dots$};
\node [vertex,label=below left:\small{$v_n$}] (v4) [right of=d2, xshift=\gap*\dist cm] {};

\node [vertex,label=below left:\small{$v'_1$}] (w1) [below of=v1] {};
\node [vertex,label=below left:\small{$v'_2$}] (w2) [below of=v2] {};
\node [vertex,label=below left:\small{$v'_k$}] (w3) [below of=v3] {};
\node [vertex,label=below left:\small{$v'_n$}] (w4) [below of=v4] {};

\node [vertex,label=below left:\small{$b_1$}] (b1) [above of=v1] {};
\node [vertex,label=below left:\small{$b_2$}] (b2) [above of=v2] {};
\node [vertex,label=below left:\small{$b_k$}] (b3) [above of=v3] {};

\node [vertex] (a1) [above left of=b1, xshift=0.3*\dist cm] {};
\node [vertex] (a2) [above left of=b2, xshift=0.3*\dist cm] {};
\node [vertex] (a3) [above left of=b3, xshift=0.3*\dist cm] {};

\node [vertex] (c1) [above right of=b1, xshift=-0.3*\dist cm] {};
\node [vertex] (c2) [above right of=b2, xshift=-0.3*\dist cm] {};
\node [vertex] (c3) [above right of=b3, xshift=-0.3*\dist cm] {};

\path (v1) edge (w1);
\path (v2) edge (w2);
\path (v3) edge (w3);
\path [matched edge] (v4) -- (w4);
\path (v3) edge [bend right=15] (v4);

\path [matched edge] (a1) -- (c1);
\path (b1) edge (a1);
\path (b1) edge (c1);
\path [matched edge] (b1) -- (v1);
\path (b1) edge (v2);
\path (b1) edge (v3);
\path (b1) edge (v4);

\path [matched edge] (a2) -- (c2);
\path (b2) edge (a2);
\path (b2) edge (c2);
\path (b2) edge (v1);
\path [matched edge] (b2) -- (v2);
\path (b2) edge (v3);
\path (b2) edge (v4);

\path [matched edge] (a3) -- (c3);
\path (b3) edge (a3);
\path (b3) edge (c3);
\path (b3) edge (v1);
\path (b3) edge (v2);
\path [matched edge] (b3) -- (v3);
\path (b3) edge (v4);

\path [fill=gray,opacity=0.25] (-\gap*\dist,-\gap*\dist)  -- (-\gap*\dist,\gap*\dist) -- (5*\dist+6*\gap*\dist,\gap*\dist) -- (5*\dist+6*\gap*\dist,-\gap*\dist) -- (-\gap*\dist,-\gap*\dist);

\draw[decoration={brace,raise=15pt},decorate] (-\gap*\dist,\dist+0.05) -- node[weight,left=18pt] {$w_e=4$} (-\gap*\dist,1.75*\dist-0.05);
\draw[decoration={brace,raise=15pt},decorate] (-\gap*\dist,0.05) -- node[weight,left=18pt] {$w_e=2$} (-\gap*\dist,\dist-0.05);
\draw[decoration={brace,raise=15pt},decorate] (-\gap*\dist,-\dist+0.05) -- node[weight,left=18pt] {$w_e=1$} (-\gap*\dist,-0.05);
\end{tikzpicture}
\caption{The gadget graph $G^*$.}
\label{fig:hardness_vert}
\end{figure}
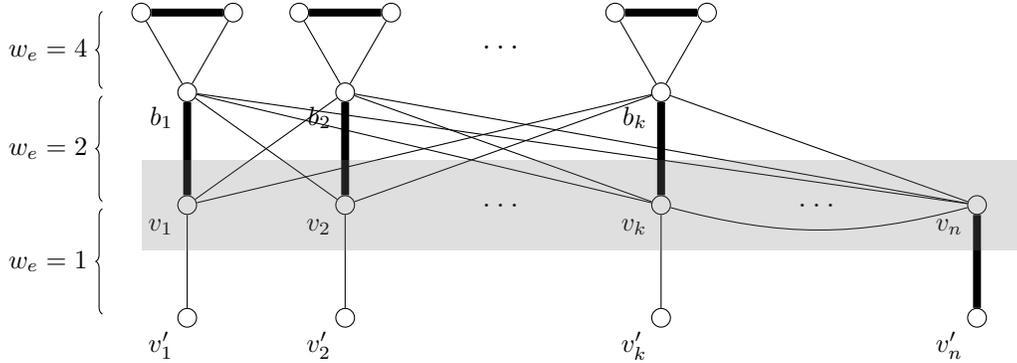

Our goal is to show that $G$ has an independent set of size at least $k$ if and only if $G^*$ has a weight-preserving vertex-stabilizer. Before proving the main result, we first derive some properties of maximum-weight matchings in $G^*$.

\begin{claim}\label{clm:matching1}
If $M$ is a maximum-weight matching in $G^*$, then $M\cap E=\emptyset$.
\end{claim}

\begin{proof}
For the purpose of contradiction, suppose there exists an edge $v_iv_j\in M\cap E$. Then, $(v'_i,v_i,v_j,v'_j)$ forms an augmenting path, which is a contradiction.
\end{proof}

\begin{claim}\label{clm:matching2}
If $M$ is a maximum-weight matching in $G^*$, every $b_i$ is matched to some $v_j$ in $M$.
\end{claim}

\begin{proof}
For the purpose of contradiction, suppose there exists an $i\in[k]$ such that $b_iv_j\notin M$ for all $j\in [n]$. Then, $b_i$ is either $M$-exposed or matched to $a_i$ or $c_i$. Since $k\leq n$, by the pigeonhole principle there exists an $\ell\in [n]$ such that $b_jv_\ell\notin M$ for all $j\in [k]$. By Claim \ref{clm:matching1}, $v_\ell v'_\ell\in M$. If $b_i$ is $M$-exposed, then $(b_i,v_\ell,v'_\ell)$ forms an augmenting path. Otherwise, we may without loss of generality assume $a_ib_i\in M$. Then, $(c_i,a_i,b_i,v_\ell,v'_\ell)$ forms an augmenting path. We have reached a contradiction.
\end{proof}

\begin{claim}\label{clm:matching3}
$\nu(G^*)=n+5k$
\end{claim}

\begin{proof}
Let $M$ be a maximum-weight matching in $G^*$. By Claim \ref{clm:matching2}, there are $k$ edges of the form $b_iv_j$ in $M$. Hence, there are also $k$ edges of the form $a_ic_i$ in $M$. Moreover, we have $n-k$ edges of the form $v_iv'_i$ in $M$. This gives a total weight of $2k+4k+n-k=n+5k$. 
\end{proof}

\begin{claim}\label{clm:matching4}
The set of inessential vertices in $G^*$ is $V'$.
\end{claim}

\begin{proof}
It is easy to see that the vertices in $G$ and $\cup_{i=1}^kC_i$ are essential because they are covered by every maximum-weight matching in $G^*$. Let $v'_i\in V'$ and $M$ be a maximum-weight matching in $G^*$ such that $v_iv'_i\in M$.
Since $k\geq 1$, there exist $j\in[k]$ and $\ell\in[n]$ such that $b_jv_\ell\in M$. Define a new matching $M':=M+b_jv_i-b_jv_\ell+v_\ell v'_\ell-v_iv'_i$. Note that $M'$ is a maximum-weight matching in $G^*$ and $v'_i$ is $M'$-exposed. Thus, $v'_i$ is inessential.
\end{proof}

For the forward direction, let $S$ be an independent set of $G$ where $\size{S}=k$. Without loss of generality, assume $S=\set{v_1,v_2,\dots,v_k}$. Let $M$ be the matching defined by
\[M:=\set{a_ic_i,b_iv_i:1\leq i\leq k}\cup\set{v_iv'_i:k< i\leq n}.\]
Since $w(M)=n+5k$, by Claim \ref{clm:matching3} it is a maximum-weight matching in $G^*$. We claim that $S':=\set{v'_1,v'_2,\dots, v'_k}$ is a weight-preserving vertex-stabilizer of $G^*$. First, note that the matching $M$ survives after removing $S'$ from $G^*$. Hence, $M$ is a maximum-weight matching in $G^*\setminus S'$. It is left to show that $G^*\setminus S'$ is stable. We define a fractional $w$-vertex cover $y$ on $G^*\setminus S'$ as follows:
\[y_v=\begin{cases}
   2, &\text{ if }v\in\cup_{i=1}^k\set{a_i,b_i,c_i}\\
   1, &\text{ if }v\in\cup_{i=k+1}^n v_i\\
   0, &\text{ otherwise}.
\end{cases}\]
It is easy to check that for every $uv\notin E$, the condition $y_u+y_v\geq w_{uv}$ holds. Let $v_iv_j\in E$. Since $S$ is an independent set, at most one of $v_i$ and $v_j$ is in $S$. This implies that $y_{v_i}+y_{v_j}\geq 1$. So $y$ is indeed a fractional $w$-vertex cover. Since $\tau_f(G^*\setminus S')=n+5k=\nu(G^*\setminus S')$, $G^*\setminus S'$ is stable.

For the converse, let $S'$ be a weight-preserving vertex-stabilizer of $G$. We know that $S'\subseteq V'$ by Claim \ref{clm:matching4} because $S'$ does not contain essential vertices. Let $M$ be a maximum-weight matching in $G^*\setminus S'$. Since $\nu(G^*\setminus S')=\nu(G^*)$, $M$ is also a maximum-weight matching in $G^*$. We claim that $\size{S'}=k$. If $\size{S'}<k$, then there exists an $i\in[n]$ such that $v_iv'_i\in G^*\setminus S'$ and $b_jv_i\in M$ for some $j\in[k]$. Then, $a_jc_j\in M$, so $C_j\cup (b_j,v_i,v'_i)$ forms an augmenting flower. This is a contradiction to Theorem \ref{thm:stable} because $G^*\setminus S'$ is stable. On the other hand, if $\size{S'}>k$, then $w(M)=6k+n-\size{S'}<n+5k$. This is also a contradiction because $\nu(G^*\setminus S')=\nu(G^*)$. It follows that $\size{S'}=k$. 

Without loss of generality, assume $S'=\set{v'_1,v'_2,\dots,v'_k}$. Let $S:=\set{v_1,v_2,\dots,v_k}$. We claim that every $v_i\in S$ is matched to some $b_j$ in $M$. For the purpose of contradiction, suppose there exists $v_i\in S$ such that $v_i$ is $M$-exposed. By the pigeonhole principle, there exists $j\in [n]$ such that $v_jv'_j\in G^*\setminus S'$ and $b_\ell v_j\in M$ for some $\ell\in [k]$. Then, $(v_i,b_\ell,v_j,v'_j)$ forms an augmenting path, which is a a contradiction. It is left to show that $S$ is an independent set. For the purpose of contradiction, suppose there exist $v_i,v_j\in S$ such that $v_iv_j\in E$. Let $b_pv_i,b_qv_j\in M$ for some $p,q\in [k]$. Then, $C_p\cup (b_p,v_i,v_j,b_q) \cup C_q$ forms an augmenting bi-cycle. By Theorem \ref{thm:stable}, $G^*\setminus S'$ is unstable, which is a contradiction. Thus, $S$ is an independent set and $\size{S}\geq k$.
\end{proof}

\section{Computing edge-stabilizers}
\label{sec:edge_stabilizer}

In contrast to the vertex-stabilizer problem, finding a minimum edge-stabilizer is computationally difficult. Bir\'{o} et al.~\cite{journals/tcs/BiroBGKP} proved that the problem is NP-hard on weighted graphs. We strengthen this result by showing the following hardness of approximation:   

\begin{theorem}
\label{thm:hardness_edge}
There is no constant factor approximation for the minimum edge-stabilizer problem unless $P=NP$.
\end{theorem}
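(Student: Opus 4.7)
The plan is to establish inapproximability via a polynomial-time reduction from a source problem with a known hardness gap, specifically the decision version of \emph{Independent Set}, and to introduce a free scaling parameter in the gadget so that the gap between yes- and no-instances can be inflated beyond any constant $c>1$. The heart of the argument is therefore not the reduction itself but the amplification step that turns a single augmenting structure into an arbitrarily expensive-to-destroy bundle.

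First, I would build a base gadget analogous to the one used for Theorem~\ref{thm:hardness_vert}: given $(G,k)$, attach to each $v_i\in V(G)$ a pendant edge $v_iv'_i$, and introduce $k$ heavy-weight triangles whose apex-vertices $b_j$ are joined to every $v_i$ via medium-weight edges. As in the vertex-setting, the maximum-weight matching ``selects'' $k$ pendants through the triangles, and by Theorem~\ref{thm:stable} an augmenting bi-cycle survives exactly when both endpoints of some $v_iv_j\in E(G)$ are selected; any edge-stabilizer must destroy every such bi-cycle. In yes-instances one can select an independent set, leaving no augmenting bi-cycle, and stabilize with $O(k)$ edge deletions used only to tidy up residual flowers; in no-instances at least one bi-cycle must be broken by edge removal.

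Second, to convert this decision gap into an approximation gap, I would replicate the critical augmenting substructure $N$ times using vertex- and edge-disjoint copies routed through fresh auxiliary vertices, where $N$ is a parameter chosen polynomially in $|V(G)|$ and the desired constant $c$. Weights on each replica would be tuned (using distinct scales for the triangle-edges, the pendants, and the bundle-edges) so that each copy is independently augmenting and so that no single edge removal can simultaneously stabilize two copies. Yes-instances then remain stabilizable with $O(k)$ deletions, while any stabilizer of a no-instance must delete at least one edge per copy, i.e., at least $N$ edges. Choosing $N$ sufficiently large compared to $k$ pushes the ratio above $c$, and letting $c$ range over all constants yields the claim.

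The main obstacle is guaranteeing the ``true independence'' of the replicated augmenting substructures. Concretely, using Theorem~\ref{thm:stable} I need to show that removing fewer than $N$ edges from a no-instance always leaves at least one augmenting flower or bi-cycle intact with respect to some maximum-weight matching, and that the matching structure is preserved across replicas so that the bad-incidence accounting is valid. This requires careful weight assignments to prevent cheap stabilizations that exploit shared ``selector'' edges, and a case analysis that separates removals inside a triangle (which lose maximum-matching value) from removals on bundle-edges (which are the only effective stabilizing deletions). Once this independence is locked in, the gap amplification is immediate and the hardness conclusion follows.
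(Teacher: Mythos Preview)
Your overall strategy—reduce from Independent Set and introduce a scaling parameter so that yes-instances stabilize with $O(k)$ edge deletions while no-instances require arbitrarily more—is exactly the paper's approach, and your idea of replicating the ``middle'' of each potential bi-cycle into many parallel copies also matches the paper (it replaces each edge $v_iv_j\in E(G)$ by $\rho k$ internally-disjoint length-3 paths). However, there is a genuine gap in your plan: you keep the $k$ odd structures as \emph{triangles}. For edge-stabilization this is fatal, because a triangle can be destroyed by removing a single edge. In a no-instance the adversary can simply delete one edge from each triangle (only $k$ deletions), after which no augmenting blossom, flower, or bi-cycle through those gadgets survives; your replicated bundles are then useless, and your claimed $\Omega(N)$ lower bound collapses. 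Your last paragraph anticipates trouble here but proposes to handle ``removals inside a triangle'' via weight tuning and a matching-value argument; that does not work, since deleting $a_ic_i$ from a weight-4 triangle does not lower $\nu$ (the matching just takes $a_ib_i$ or $b_ic_i$ instead) and simultaneously kills every augmenting odd cycle through $b_i$.

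The paper's fix is exactly the missing idea: replace each triangle by a complete graph $H_i$ on $2\rho k+1$ vertices (all edges weight~4), with one designated vertex $b_i$. By Ore's Theorem, $H_i$ remains Hamiltonian—and hence still contains an odd spanning cycle that serves as an augmenting blossom—unless at least $2\rho k-1$ edges are removed from it. This robustness is what drives the lower bound: with a total budget below $(\rho+1)k$, a counting argument forces \emph{all} $H_i$ to stay Hamiltonian, which in turn forces each $b_i$ to be matched to some $v_j$ and the corresponding pendant $v_jv'_j$ to be deleted; the remaining budget ($<\rho k$) is then too small to sever all $\rho k$ parallel paths of some surviving bi-cycle. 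So the amplification you want comes from two coupled sources—replicated paths \emph{and} edge-robust odd cycles—and your proposal supplies only the first.
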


\begin{proof}
We construct a gap-producing reduction from the independent set problem. Let $G=(V,E)$ and $k$ be an independent set instance where $V=\set{v_1,v_2,\dots,v_n}$. The independent set problem asks to determine whether $G$ has an independent set of size at least $k$. We may assume $2\leq k\leq n$. Let $\rho\geq 1$ be an integer. We construct the gadget graph $G^*=(V^*,E^*)$ as follows. For every edge $v_iv_j\in E$, replace it with $\rho k$ paths of length 3, i.e. $(v_i,u^\ell_{ij},u^\ell_{ji},v_j)$ for $\ell\in [\rho k]$. Assign weight 1 to every edge in the paths. For each $v_i\in V$, add a vertex $v'_i$ and the edge $v_iv'_i$ with weight 1. Next, create $k$ pairwise-disjoint copies of the complete graph on $2\rho k+1$ vertices. Denote each of them as $H_i$ and assign weight 4 to every edge in $H_i$. In addition, for every $H_i$, label one of the vertices as $b_i$. Finally, add the edge $b_iv_j$ for every $i\in [k]$ and $j\in [n]$ with weight 2. (See Figure \ref{fig:hardness_edge})

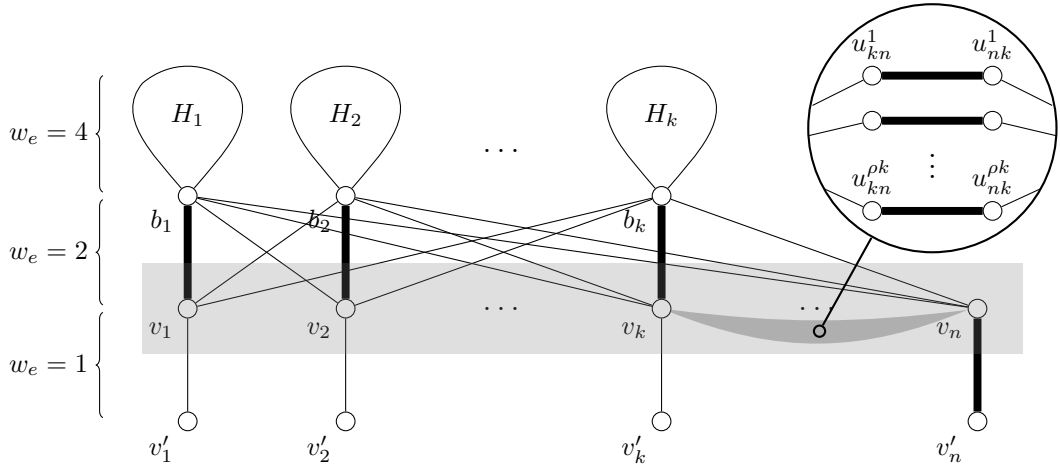
\begin{figure}[ht]
\centering
\def\dist{1.5}
\def\gap{0.4}
\begin{tikzpicture}[node distance=\dist cm, inner sep=2.5pt, minimum size=2.5pt, auto]
\node [vertex,label=below left:\small{$v_1$}] (v1) {};
\node [vertex,label=below left:\small{$v_2$}] (v2) [right of=v1, xshift=\gap*\dist cm] {};
\node [font=\large] (d1) [right of=v2, xshift=\gap*\dist cm] {$\dots$};
\node [vertex,label=below left:\small{$v_k$}] (v3) [right of=d1, xshift=\gap*\dist cm] {};
\node [font=\large] (d2) [right of=v3, xshift=\gap*\dist cm] {$\dots$};
\node [font=\large] (d3) [above of=d1, yshift=\gap*\dist cm] {$\dots$};
\node [vertex,label=below left:\small{$v_n$}] (v4) [right of=d2, xshift=\gap*\dist cm] {};

\node [vertex,label=below left:\small{$v'_1$}] (w1) [below of=v1] {};
\node [vertex,label=below left:\small{$v'_2$}] (w2) [below of=v2] {};
\node [vertex,label=below left:\small{$v'_k$}] (w3) [below of=v3] {};
\node [vertex,label=below left:\small{$v'_n$}] (w4) [below of=v4] {};

\node [vertex,label=below left:\small{$b_1$}] (b1) [above of=v1] {};
\node [vertex,label=below left:\small{$b_2$}] (b2) [above of=v2] {};
\node [vertex,label=below left:\small{$b_k$}] (b3) [above of=v3] {};

\draw (b1) to[out=130,in=-130] (-\gap*\dist,2*\dist) to[out=40,in=140] (\gap*\dist,2*\dist) to[out=-50,in=50] (b1);
\draw (b2) to[out=130,in=-130] (\dist,2*\dist) to[out=40,in=140] (\dist+2*\gap*\dist,2*\dist) to[out=-50,in=50] (b2);
\draw (b3) to[out=130,in=-130] (3*\dist+2*\gap*\dist,2*\dist) to[out=40,in=140] (3*\dist+4*\gap*\dist,2*\dist) to[out=-50,in=50] (b3);

\node [weight] (H1) [above of=b1, yshift=-0.7*\gap*\dist cm] {$H_1$};
\node [weight] (H2) [above of=b2, yshift=-0.7*\gap*\dist cm] {$H_2$};
\node [weight] (H3) [above of=b3, yshift=-0.7*\gap*\dist cm] {$H_k$};

\path (v1) edge (w1);
\path (v2) edge (w2);
\path (v3) edge (w3);
\path [matched edge] (v4) -- (w4);
\path[fill=gray,opacity=0.5] (v3) to [bend right=7] (v4) to [bend left=22] (v3);

\path [matched edge] (b1) -- (v1);
\path (b1) edge (v2);
\path (b1) edge (v3);
\path (b1) edge (v4);

\path (b2) edge (v1);
\path [matched edge] (b2) -- (v2);
\path (b2) edge (v3);
\path (b2) edge (v4);

\path (b3) edge (v1);
\path (b3) edge (v2);
\path [matched edge] (b3) -- (v3);
\path (b3) edge (v4);

\path [fill=gray,opacity=0.25] (-\gap*\dist,-\gap*\dist)  -- (-\gap*\dist,\gap*\dist) -- (5*\dist+6*\gap*\dist,\gap*\dist) -- (5*\dist+6*\gap*\dist,-\gap*\dist) -- (-\gap*\dist,-\gap*\dist);

\node [circle,draw,thick,inner sep=1.5pt] (small) [below of=d2,yshift=0.8*\dist cm] {};
\node [circle,draw,thick] (big) [minimum size=3.3cm,above of=v4, xshift=-\gap*\dist cm, yshift=1.5*\gap*\dist cm] {};
\path (small) edge [thick] (big);

\node [minimum size=0pt,inner sep=0pt] (s1) [left of=big,xshift=-0.1 cm,yshift=0.3 cm] {};
\node [minimum size=0pt,inner sep=0pt] (s2) [left of=big,xshift=-0.15cm,yshift=-0.1 cm] {};
\node [minimum size=0pt,inner sep=0pt] (s3) [left of=big,xshift=0.05 cm,yshift=-0.8 cm] {};

\node [minimum size=0pt,inner sep=0pt] (s4) [right of=big,xshift=0.1cm,yshift=0.3cm] {};
\node [minimum size=0pt,inner sep=0pt] (s5) [right of=big,xshift=0.15cm,yshift=-0.1cm] {};
\node [minimum size=0pt,inner sep=0pt] (s6) [right of=big,xshift=-0.05cm,yshift=-0.8cm] {};

\node [vertex,label=above:\small{$u^1_{kn}$}] (u1) [left of=big,xshift=0.7cm,yshift=0.7cm] {};
\node [vertex] (u2) [left of=big,xshift=0.7 cm,yshift=0.1 cm] {};
\node [vertex,label=above:\small{$u^{\rho k}_{kn}$}] (u3) [left of=big,xshift=0.7cm,yshift=-1.1 cm] {};

\node [vertex,label=above:\small{$u^1_{nk}$}] (u4) [right of=big,xshift=-0.7cm,yshift=0.7 cm] {};
\node [vertex] (u5) [right of=big,xshift=-0.7cm,yshift=0.1 cm] {};
\node [vertex, label=above:\small{$u^{\rho k}_{nk}$}] (u6) [right of=big,xshift=-0.7cm,yshift=-1.1 cm] {};
\node [font=\large] (d4) [below of=big, xshift=0 cm, yshift=1.1 cm] {$\vdots$};

\path [matched edge] (u1) -- (u4);
\path [matched edge] (u2) -- (u5);
\path [matched edge] (u3) -- (u6);
\path (s1) edge (u1);
\path (s2) edge (u2);
\path (s3) edge (u3);
\path (u4) edge (s4);
\path (u5) edge (s5);
\path (u6) edge (s6);

\draw[decoration={brace,raise=15pt},decorate] (-\gap*\dist,\dist+0.05) -- node[weight,left=18pt] {$w_e=4$} (-\gap*\dist,2.1*\dist-0.05);
\draw[decoration={brace,raise=15pt},decorate] (-\gap*\dist,0.05) -- node[weight,left=18pt] {$w_e=2$} (-\gap*\dist,\dist-0.05);
\draw[decoration={brace,raise=15pt},decorate] (-\gap*\dist,-\dist+0.05) -- node[weight,left=18pt] {$w_e=1$} (-\gap*\dist,-0.05);
\end{tikzpicture}
\caption{The gadget graph $G^*$.}
\label{fig:hardness_edge}
\end{figure}

\begin{claim}\label{clm:gap1}
If $G$ has an independent set of size at least $k$, then $G^*$ has an edge-stabilizer of size at most $k$.
\end{claim}

\begin{proof}
Let $S$ be an independent set in $G$ where $\size{S}=k$. Without loss of generality, we may assume $S=\set{v_1,v_2,\dots,v_k}$. Let $F=\cup_{i=1}^k v_iv'_i$. We claim that $F$ is an edge-stabilizer of $G^*$. Let $M_i$ be a perfect matching in $H_i\setminus b_i$ for all $i\in [k]$. Let $\hat{M}:=\cup_{\ell=1}^{\rho k}\set{u^\ell_{ij}u^\ell_{ji}:v_iv_j\in E}$. Define the matching $M$ in $G^*\setminus F$ as 
\[M:=\hat{M}\cup M_1\cup\dots\cup M_k \cup \set{b_iv_i:1\leq i\leq k} \cup \set{v_iv'_i:k< i\leq n}\]
Note that $w(M)=(m+4k)\rho k + n + k$. In order to show that $G^*\setminus F$ is stable, it suffices to exhibit a fractional $w$-vertex cover of the same weight. Let $y\in \R^{\size{V^*}}_+$ be a vector defined by
\[y_v=\begin{cases}
   2, &\text{ if }v\in \cup_{i=1}^k V(H_i)\\
   1, &\text{ if }v\in \cup_{i=k+1}^n v_i \; \text{ or } \; v=u^\ell_{ij} \text{ where }i\leq k\\
   \frac{1}{2}, &\text{ if }v=u^\ell_{ij} \text{ where } i,j> k\\
   0, &\text{ otherwise}.
\end{cases}\]
It is easy to check that $y_u+y_v\geq w_{uv}$ for all $uv\in E^*$. Hence, $y$ is a fractional $w$-vertex cover in $G^*$. Since $S$ is an independent set, there are no edges of the form $u^\ell_{ij}u^\ell_{ji}$ where $i,j\leq k$. Then,
\[\1^{\top}y=2(2\rho k+1)k+n-k+m\rho k = 4\rho k^2 + n + k + m\rho k = (m+4k)\rho k + n + k= w(M)\]
which implies that $G^*\setminus F$ is stable.
\end{proof}

\begin{claim}\label{clm:gap2}
If $G$ does not have an independent set of size at least $k$, then every edge-stabilizer of $G^*$ has size at least $(\rho+1)k$.
\end{claim}

\begin{proof}
We prove the contrapositive. Assume $G^*$ has an edge-stabilizer $F$ such that $\size{F}<(\rho+1)k$. Let $M$ be a maximum-weight matching in $G^*\setminus F$. Let $c$ denote the number of edges removed from the complete graphs, i.e. $c:=\size{F\cap \cup_{i=1}^k E(H_i)}$. We first show that $c<2\rho k-1$. According to Ore's Theorem, we need to remove at least $2\rho k-1$ edges from $H_i$ in order to make it non-Hamiltonian. Let $\mathcal{H}:=\set{i:H_i\setminus F \mbox{ is Hamiltonian}}$. Then, $\size{\mathcal{H}}\geq k - \frac{c}{2\rho k-1}$. For every $i\in \mathcal{H}$, $b_iv_j\in M$ for some $j\in[n]$, otherwise $H_i\setminus F$ contains an augmenting blossom because it has an odd number of vertices. Thus, $v_jv'_j\subseteq F$, otherwise $H_i\setminus F\cup(b_i,v_j,v'_j)$ contains an augmenting flower. We have
\begin{align*}
c + \size{\mathcal{H}} &< (\rho+1)k \\
c + \pr{k-\frac{c}{2\rho k-1}} &< (\rho+1)k \\
c \pr{1 - \frac{1}{2\rho k-1}} &< \rho k \\
c \pr{\frac{2\rho k-2}{2\rho k-1}} &< \rho k \\
c &< (2\rho k-1)\pr{\frac{\rho k}{2\rho k-2}} \\
c &< 2\rho k-1
\end{align*}
Since $c<2\rho k-1$, $\size{\mathcal{H}}=k$. Without loss of generality, we may assume $b_iv_i\in M$ for every $i\in [k]$. Then, $\cup_{i=1}^k v_iv'_i\subseteq F$. We claim that $S=\set{v_1,v_2,\dots,v_k}$ is an independent set in $G$. For the purpose of contradiction, suppose there exists an edge $v_iv_j\in E$ for some $i,j\in[k]$. Let $\mathcal{P}_{ij}=\cup_{\ell=1}^{\rho k}(v_i,u^\ell_{ij},u^\ell_{ji},v_j)$ denote the set of paths between $v_i$ and $v_j$. Since $\size{F\cap \mathcal{P}_{ij}}<(\rho+1)k-k=\rho k$, at least one path $(v_i,u^t_{ij},u^t_{ji},v_j)\in\mathcal{P}_{ij}$ is present in $G^*\setminus F$. Observe that $u^t_{ij}u^t_{ji}\in M$, and $(H_i\setminus F) \cup (b_i,v_i,u^t_{ij},u^t_{ji},v_j,b_j)\cup (H_j\setminus F)$ contains an augmenting bi-cycle. Thus, by Theorem \ref{thm:stable} $G^*\setminus F$ is unstable, which is a contradiction.
\end{proof}

Now, suppose we have an $\alpha$-approximation to the minimum edge-stabilizer problem for some constant $\alpha\geq 1$. Set $\rho = \ceil{\alpha}$ and construct the gadget graph $G^*$ as shown above. Run this algorithm on $G^*$ and let $F$ be the returned edge-stabilizer. Let OPT be size of a minimum edge-stabilizer in $G^*$. If $G$ has an independent set of size at least $k$, then by Claim \ref{clm:gap1} we have $\text{OPT} \leq k$ and $\size{F}\leq \alpha\cdot \text{OPT} \leq \rho k$. On the other hand, if $G$ does not have an independent set of size at least $k$, then by Claim \ref{clm:gap2} we have $\text{OPT}\geq (\rho+1)k>\rho k$. This implies that $\size{F}>\rho k$. Therefore, we can use this algorithm to decide the independent set problem in polynomial time.
\end{proof}

In this section, we prove that Algorithm \ref{alg:vert} is an $O(\Delta)$-approximation algorithm for the minimum edge-stabilizer problem. We first need to establish a lower bound on the optimal solution. Next example shows that, differently from the unweighted case, $\gamma(G)$ is not a lower bound on the size of a minimum edge-stabilizer for weighted graphs. Let $G$ be the unstable graph depicted in Figure \ref{fig:bound_edge}. The unique maximum-weight matching is shown in the left, while the unique maximum-weight fractional matching is shown in the right. Gray edges have value $\frac{1}{2}$. Even though $\gamma(G)=2$, the edge with weight 0.5 is a minimum edge-stabilizer.

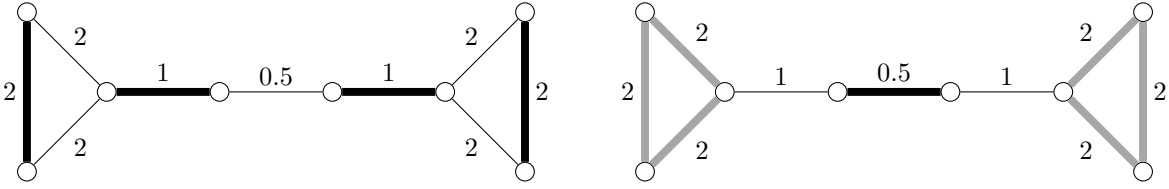
\begin{figure}[ht]
\def\dist{1.5}
\centering
\begin{minipage}{0.49\textwidth}
\centering
\begin{tikzpicture}[node distance=\dist cm, inner sep=2.5pt, minimum size=2.5pt, auto]
  \node [vertex] (v1) {};
  \node [vertex] (v2) [above left of=v1] {};
  \node [vertex] (v3) [below left of=v1] {};
  \node [vertex] (v4) [right of=v1] {};
  \node [vertex] (v5) [right of=v4] {};
  \node [vertex] (v6) [right of=v5] {};
  \node [vertex] (v7) [above right of=v6] {};
  \node [vertex] (v8) [below right of=v6] {};

  \path [matched edge] (v2) -- node [weight, left] {$2$} (v3);
  \path (v1) edge node [weight, above right] {$2$} (v2);
  \path (v1) edge node [weight, below right] {$2$} (v3);
  \path [matched edge] (v1) -- node [weight, above] {$1$} (v4);
  \path (v4) edge node [weight, above] {$0.5$} (v5);
  \path [matched edge] (v5) -- node [weight, above] {$1$} (v6);
  \path (v6) edge node [weight, above left] {$2$} (v7);
  \path (v6) edge node [weight, below left] {$2$} (v8);
  \path [matched edge] (v7) -- node [weight, right] {$2$} (v8);
\end{tikzpicture}
\end{minipage}
\begin{minipage}{0.49\textwidth}
\centering
\begin{tikzpicture}[node distance=\dist cm, inner sep=2.5pt, minimum size=2.5pt, auto]
  \node [vertex] (v1) {};
  \node [vertex] (v2) [above left of=v1] {};
  \node [vertex] (v3) [below left of=v1] {};
  \node [vertex] (v4) [right of=v1] {};
  \node [vertex] (v5) [right of=v4] {};
  \node [vertex] (v6) [right of=v5] {};
  \node [vertex] (v7) [above right of=v6] {};
  \node [vertex] (v8) [below right of=v6] {};

  \path [selected edge] (v2) -- node [weight, left] {$2$} (v3);
  \path [selected edge] (v1) -- node [weight, above right] {$2$} (v2);
  \path [selected edge] (v1) -- node [weight, below right] {$2$} (v3);
  \path (v1) edge node [weight, above] {$1$} (v4);
  \path [matched edge] (v4) -- node [weight, above] {$0.5$} (v5);
  \path (v5) edge node [weight, above] {$1$} (v6);
  \path [selected edge] (v6) -- node [weight, above left] {$2$} (v7);
  \path [selected edge] (v6) -- node [weight, below left] {$2$} (v8);
  \path [selected edge] (v7) -- node [weight, right] {$2$} (v8);
\end{tikzpicture}
\end{minipage}
\caption{An example showing that $\gamma(G)$ is not a lower bound.}
\label{fig:bound_edge}
\end{figure}

However, we can prove the following.

\begin{lemma}
\label{lem:gamma_edge}
For every edge $e\in E$, $\gamma(G\setminus e)\geq \gamma(G)-2$.
\end{lemma}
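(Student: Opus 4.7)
I follow the strategy of Lemma~\ref{lem:gamma_vert} and argue by contradiction. Write $e=uv$, let $x^*$ be a basic maximum-weight fractional matching in $G$ with $\gamma(G)$ odd cycles, $y$ a minimum fractional $w$-vertex cover in $G$, and suppose for contradiction that $\gamma(G\setminus e)\le\gamma(G)-3$; let $\hat x$ be a basic maximum-weight fractional matching in $G\setminus e$ achieving $\gamma(G\setminus e)$ odd cycles. The plan is to construct from these ingredients a basic maximum-weight fractional matching in $G$ with strictly fewer than $\gamma(G)$ odd cycles, contradicting the definition of $\gamma(G)$. The case analysis is on the value of $x^*_e\in\{0,\tfrac12,1\}$, with the philosophy that each endpoint of $e$ can account for at most one unit of cycle decrease, which explains the additive $2$ in the statement.

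If $x^*_e=0$, then $x^*$ is a basic maximum-weight fractional matching in $G\setminus e$ whose cycle count strictly exceeds $\gamma(G\setminus e)$, so running the cycle-reduction procedure of Algorithm~\ref{alg:mincycle} (justified by Theorem~\ref{thm:cycles}) in $G\setminus e$ starting from $x^*$ produces some $\tilde x$ with $\gamma(G\setminus e)$ cycles, $\tilde x_e=0$, and $w(\tilde x)=\nu_f(G\setminus e)=\nu_f(G)$. Thus $\tilde x$ is a basic maximum-weight fractional matching of $G$ with fewer than $\gamma(G)$ cycles, the desired contradiction. If $x^*_e=\tfrac12$, let $C\in\mathscr{C}(x^*)$ be the cycle containing $e$ and alternate round $C$ at $u$ to obtain $\bar x$, which by Claim~\ref{clm:frac} restricts to a basic maximum-weight fractional matching $\bar x_{-\delta(u)}$ in $G\setminus u$ with $\gamma(G)-1$ cycles. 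Combining this with Lemma~\ref{lem:gamma_vert} forces $\gamma(G\setminus u)=\gamma(G)-1$, and a further application of Lemma~\ref{lem:gamma_vert} to $G\setminus u$ with the vertex $v$ yields $\gamma(G\setminus\{u,v\})\ge\gamma(G)-2$. I then argue that from $\hat x$ one can extract, by applying Claim~\ref{clm:frac}-style alternate rounding at each endpoint of $e$ in turn (handling subcases according to whether $u,v$ are $M(\hat x)$-exposed, lie on a cycle of $\mathscr{C}(\hat x)$, or are matched in $M(\hat x)$, the last being the subtle case that may require an additional switching step as in Claim~\ref{clm:switch}), a basic maximum-weight fractional matching in $G\setminus\{u,v\}$ with at most $\gamma(G\setminus e)\le\gamma(G)-3$ odd cycles, contradicting the lower bound $\gamma(G\setminus\{u,v\})\ge\gamma(G)-2$.

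Finally, if $x^*_e=1$, define $\bar x$ by $\bar x_f=x^*_f$ for $f\ne e$ and $\bar x_e=0$; this is a basic fractional matching in $G\setminus e$ with $\gamma(G)$ cycles in which both $u$ and $v$ are exposed by $M(\bar x)$. I examine the symmetric-difference subgraph $J:=(V,\ N(\bar x)\triangle N(\hat x))$, with $N(\bar x):=M(\bar x)\setminus E(\mathscr{C}(\hat x))$ and $N(\hat x):=M(\hat x)\setminus E(\mathscr{C}(\bar x))$ as in the proof of Lemma~\ref{lem:gamma_vert}, and identify the components $P_u,P_v$ of $J$ containing $u$ and $v$. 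The subcases split according to whether $P_u=P_v$ or $P_u\ne P_v$, and, in each component, whether the other endpoint lies on a cycle of $\mathscr{C}(\bar x)$, on a cycle of $\mathscr{C}(\hat x)$, or at an exposed vertex; these are the edge analogues of Subcases~2.1--2.3 in the proof of Lemma~\ref{lem:gamma_vert}. In each subcase, a suitable combination of switching (Claim~\ref{clm:switch}) and alternate rounding (Claim~\ref{clm:frac}) yields a basic maximum-weight fractional matching in $G$ with fewer than $\gamma(G)$ odd cycles, completing the contradiction. The main obstacle is precisely this case: when $P_u\ne P_v$ the two paths independently contribute one unit of cycle decrease each, accounting for the additive $2$, whereas when $P_u=P_v$ the analysis collapses to a single-unit decrease; verifying that the operations on $P_u$ and $P_v$ can be performed simultaneously, without weight loss or feasibility violations, is the crux of the argument.
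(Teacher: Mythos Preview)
Your plan is genuinely different from the paper's and, as written, is incomplete in the two hard cases. The paper avoids all direct case analysis by an edge-subdivision trick: replace $e=ab$ by a path $ab',b'a',a'b$ of three edges of weight $w_{ab}$ to form $G'$, verify $\gamma(G')=\gamma(G)$ via a primal--dual pair and Theorem~\ref{thm:cycles}, and then observe $G'\setminus\{a',b'\}=G\setminus e$, so two applications of Lemma~\ref{lem:gamma_vert} give $\gamma(G\setminus e)\ge\gamma(G')-2=\gamma(G)-2$. This reduces the edge statement to the vertex statement already proved.

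Your case $x^*_e=0$ is fine. In the case $x^*_e=\tfrac12$, however, the step ``from $\hat x$ one can extract a basic maximum-weight fractional matching in $G\setminus\{u,v\}$ with at most $\gamma(G\setminus e)$ odd cycles'' is the entire content and is not established. When $u$ (or $v$) is $M(\hat x)$-covered in $G\setminus e$, deleting $u$ drops the matched edge and hence weight; regaining optimality in $G\setminus\{u,v\}$ may well force additional odd cycles, and vertex deletion can increase $\gamma$ in general (a pendant attached to a $5$-cycle has $\gamma=0$, but deleting the pendant vertex gives $\gamma=1$). Your appeal to ``an additional switching step as in Claim~\ref{clm:switch}'' is not actionable: switching requires two optimal basic solutions to compare, and you have not produced a second one in $G\setminus\{u,v\}$ to switch against. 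In effect you are asserting $\gamma((G\setminus e)\setminus\{u,v\})\le\gamma(G\setminus e)$, which is the opposite direction of Lemma~\ref{lem:gamma_vert} and needs its own proof.

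In the case $x^*_e=1$ you set up the symmetric-difference graph $J$ but explicitly defer the ``crux,'' namely that the operations on $P_u$ and $P_v$ can be performed simultaneously without losing optimality. Note also that your $\bar x=x^*-\mathbf 1_e$ is \emph{not} in general optimal in $G\setminus e$, so the complementary-slackness/tightness arguments that drive the subcases of Lemma~\ref{lem:gamma_vert} do not transfer automatically; one must instead argue weight-equality on components $K$ avoiding $u,v$ by comparing $x^*$ (optimal in $G$) with $\hat x$ (optimal in $G\setminus e$), and then handle all the endpoint configurations of $P_u,P_v$. None of this is carried out. The paper's subdivision argument sidesteps every one of these difficulties.
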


\begin{proof}
Let $x^*$ be a basic maximum-weight fractional matching in $G$ such that $\size{\mathscr{C}(x^*)}=\gamma(G)$. Let $y$ be a minimum fractional $w$-vertex cover in $G$. Pick an edge $ab\in E$.  If $x^*_{ab}=0$, then $\nu_f(G\setminus ab)=\nu_f(G)$ and $\gamma(G\setminus ab)= \gamma(G)$. So we may assume $x^*_{ab}\in \set{\frac{1}{2},1}$.  Let $G'$ be the graph obtained by replacing the edge $ab$ with edges $ab',b'a',a'b$ where $w_{ab'}=w_{b'a'}=w_{a'b}=w_{ab}$ and $a',b'\notin V$. Define the vectors $\hat{x}$ and $\hat{y}$ as
\[\hat{x}_e = \begin{cases}
  1-x^*_{ab}, &\text{ if }e=b'a',\\
  x^*_{ab}, &\text{ if }e\in \set{ab',a'b},\\
  x^*_e, &\text{ otherwise}.
\end{cases} \qquad
\hat{y}_v = \begin{cases}
  y_a, &\text{ if }v=a',\\
  y_b, &\text{ if }v=b',\\
  y_v, &\text{ otherwise}.
\end{cases}\]
Note that $\hat{x}$ is a basic fractional matching while $\hat{y}$ is a fractional $w$-vertex cover in $G'$. Furthermore, they satisfy complementary slackness conditions as $\hat{y}_u+\hat{y}_v=w_{uv}$ for all $uv\in\set{ab',b'a',a'b}$ and $\hat{x}(\delta(v))=1$ for all $v\in \set{a,a',b,b'}$. Hence, they form a primal-dual optimal pair. Since $\size{\mathscr{C}(\hat{x})}=\gamma(G)$, we have $\gamma(G')\leq \gamma(G)$. We claim that $\gamma(G')=\gamma(G)$. For the purpose of contradiction, suppose $\gamma(G')<\gamma(G)$. By Theorem \ref{thm:cycles}, $G'$ contains one of the following:

\smallskip
\emph{Structure (i):} a vertex $v\in V(C_i)$ for some odd cycle $C_i\in \mathscr{C}(\hat{x})$ such that $\hat{y}_v=0$.
If $v\in V$, then $v\in V(\mathscr{C}(x^*))$ and $y_v=0$. Otherwise, $v\in\set{a',b'}$ which implies $a,b\in V(\mathscr{C}(x^*))$ and $y_a=0$ or $y_b=0$. By Theorem \ref{thm:cycles}, we arrive at the contradiction $\gamma(G)<\size{\mathscr{C}(x^*)}$.

\smallskip
\emph{Structure (ii):} a tight $M(\hat{x})$-alternating path $P$ connecting two odd cycles $C_i,C_j\in \mathscr{C}(\hat{x})$.
If $P$ does not have any intermediate vertices from $\set{a,a',b,b'}$, then it is also a tight $M(x^*)$-alternating path in $G$ connecting two odd cycles from $\mathscr{C}(x^*)$. Otherwise, $ab',b'a',a'b\in E(P)$ and $C_i,C_j\in \mathscr{C}(x^*)$. Then, $(P\cup ab)\setminus\set{ab',b'a',a'b}$ is a tight $M(x^*)$-alternating path in $G$ connecting $C_i$ and $C_j$. By Theorem \ref{thm:cycles}, we arrive at the contradiction $\gamma(G)<\size{\mathscr{C}(x^*)}$.

\smallskip
\emph{Structure (iii):} a tight and valid $M(\hat{x})$-alternating path $P$ connecting an odd cycle $C_i\in \mathscr{C}(\hat{x})$ and a vertex $v\notin V(\mathscr{C}(\hat{x}))$ such that $\hat{y}_v=0$.
If $P$ does not have any intermediate vertices from $\set{a,a',b,b'}$, then $v\notin\set{a,a',b,b'}$. Hence, $P$ is also a tight and valid $M(x^*)$-alternating path in $G$ connecting an odd cycle from $\mathscr{C}(x^*)$ and $v$. If $ab',b'a',a'b\in E(P)$, then $v\notin\set{a',b'}$ and $C_i\in \mathscr{C}(x^*)$. Thus, $(P\cup ab)\setminus \set{ab',b'a',a'b}$ is a tight and valid $M(x^*)$-alternating path in $G$ connecting $C_i$ and $v$. If $a'b\in E(P)$ but $ab',b'a'\notin E(P)$, then $v=a'$ and $y_a=0$. So $P+ab-a'b$ is a tight and valid $M(x^*)$-alternating path in $G$ connecting $C_i$ and $a$. If $ab'\in E(P)$ but $b'a',a'b\notin E(P)$, then $v=b'$ and $y_b=0$. So $P+ab-ab'$ is a tight and valid $M(x^*)$-alternating path in $G$ connecting $C_i$ and $b$. By Theoreom \ref{thm:cycles}, we arrive at the contradiction $\gamma(G)<\size{\mathscr{C}(x^*)}$.

\smallskip
Thus, we have shown that $\gamma(G')=\gamma(G)$. Since $G'\setminus\set{a',b'}=G\setminus ab$, applying Lemma \ref{lem:gamma_vert} yields
\[\gamma(G\setminus ab)=\gamma\pr{G'\setminus\set{a',b'}}\geq \gamma(G')-2 = \gamma(G)-2.\]
\end{proof}

As a corollary to the above lemma, we obtain the following lower bound.

\begin{lemma}
For every edge-stabilizer $F$ of $G$, $\size{F}\geq \ceil{\frac{\gamma(G)}{2}}$.
\end{lemma}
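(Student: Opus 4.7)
The plan is to derive this as a direct corollary of Lemma \ref{lem:gamma_edge} by a simple inductive argument on the size of $F$. Since $F$ is an edge-stabilizer, by definition $G\setminus F$ is stable, which is equivalent (as noted after the definition of $\gamma$) to $\gamma(G\setminus F)=0$.

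First I would write $F=\{e_1,e_2,\dots,e_k\}$ with $k=|F|$ and apply Lemma \ref{lem:gamma_edge} repeatedly, deleting one edge at a time. Each single-edge deletion decreases $\gamma$ by at most $2$, so by induction on $i$ we get
\[
\gamma(G\setminus\{e_1,\dots,e_i\}) \geq \gamma(G) - 2i
\]
for every $i\in\{0,1,\dots,k\}$. Taking $i=k$ and using $\gamma(G\setminus F)=0$ yields $0\geq \gamma(G)-2|F|$, hence $|F|\geq \gamma(G)/2$. Since $|F|$ is a nonnegative integer, this gives $|F|\geq \lceil \gamma(G)/2\rceil$, as required.

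There is essentially no obstacle here: all the structural content has already been packaged into Lemma \ref{lem:gamma_edge}, and the corollary only needs a one-line telescoping argument together with the integrality of $|F|$. The only minor care to take is to make sure Lemma \ref{lem:gamma_edge} is applied to the \emph{current} graph at each step (i.e., $G\setminus\{e_1,\dots,e_{i-1}\}$ with the edge $e_i$), which is fine since the lemma is stated for an arbitrary weighted graph and an arbitrary edge of it.
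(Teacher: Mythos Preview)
Your argument is correct and is exactly the intended one: the paper states this lemma as an immediate corollary of Lemma~\ref{lem:gamma_edge}, and the iterated application of that lemma together with $\gamma(G\setminus F)=0$ and the integrality of $|F|$ is precisely the reasoning implicitly invoked. Your remark that Lemma~\ref{lem:gamma_edge} applies to the intermediate graphs (since it is stated for arbitrary weighted graphs) is the only point requiring care, and you have addressed it.
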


Since Algorithm \ref{alg:vert} deletes $\gamma(G)$ vertices, at most $\gamma(G)\Delta$ edges are deleted, proving the following.

\begin{theorem}
There exists an efficient $O(\Delta)$-approximation algorithm for the minimum edge-stabilizer problem.
\end{theorem}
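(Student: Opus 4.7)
The plan is to argue that the algorithm which first runs Algorithm \ref{alg:vert} to obtain a vertex-stabilizer $S$ of size $\gamma(G)$, and then returns the edge set $F := \delta(S)$ of all edges incident to $S$, is an $O(\Delta)$-approximation algorithm for the minimum edge-stabilizer problem. This requires three steps: verifying that $F$ is in fact an edge-stabilizer, upper-bounding $|F|$ in terms of $\gamma(G)$, and comparing against the lower bound on the optimal solution provided by the preceding lemma.

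For feasibility, observe that $G \setminus F$ has exactly the same edges as $G[V \setminus S] = G \setminus S$, the only difference being that the vertices in $S$ remain in $G \setminus F$ as isolated vertices. Isolated vertices contribute nothing to any matching and are covered with zero weight in any optimal $w$-vertex cover, so $\nu(G \setminus F) = \nu(G \setminus S)$ and $\nu_f(G \setminus F) = \nu_f(G \setminus S)$. By Theorem \ref{thm:vert_stabilizer}, $G \setminus S$ is stable, hence $G \setminus F$ is stable too, showing that $F$ is indeed an edge-stabilizer.

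For the cardinality bound, each vertex in $S$ has degree at most $\Delta$ in $G$, so $|F| = |\delta(S)| \leq \Delta \cdot |S| = \Delta \cdot \gamma(G)$. Combining this with the lower bound from the preceding lemma, which guarantees that any edge-stabilizer $F^*$ satisfies $|F^*| \geq \lceil \gamma(G)/2 \rceil \geq \gamma(G)/2$, we obtain
\[
\frac{|F|}{|F^*|} \;\leq\; \frac{\Delta \cdot \gamma(G)}{\gamma(G)/2} \;=\; 2\Delta,
\]
which is the desired $O(\Delta)$-approximation guarantee. Finally, efficiency follows immediately from the fact that Algorithm \ref{alg:vert} runs in polynomial time (by Theorem \ref{thm:vert_stabilizer}), and computing $\delta(S)$ takes linear time in the size of $G$.

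There is really no main obstacle here: the heavy machinery has already been developed (the polynomial-time construction of the vertex-stabilizer via Algorithm \ref{alg:mincycle} and Algorithm \ref{alg:vert}, and the lower bound $\lceil \gamma(G)/2 \rceil$ on edge-stabilizers established in Lemma \ref{lem:gamma_edge} and its corollary). The only subtlety worth flagging is the observation that removing the edge set $\delta(S)$ yields a graph whose stability reduces to that of $G \setminus S$, modulo the harmless presence of isolated vertices.
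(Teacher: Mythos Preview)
Your proposal is correct and follows essentially the same approach as the paper: use the vertex-stabilizer $S$ of size $\gamma(G)$ produced by Algorithm~\ref{alg:vert}, remove the at most $\Delta\cdot\gamma(G)$ incident edges, and compare against the lower bound $\lceil \gamma(G)/2\rceil$ from the preceding lemma. You have simply made explicit the (easy) observation that $G\setminus \delta(S)$ is stable because it equals $G\setminus S$ plus isolated vertices, which the paper leaves implicit.
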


\section{Forcing an outcome}
\label{sec:additional}

Given a set of deals $M$, we here look at the problem of removing as few players as possible in order to make $M$ realizable as a stable outcome. This corresponds to finding a minimum vertex-stabilizer $S$ with the additional constraint that $M$ is a maximum-weight matching in $G\setminus S$. Note that this implicitly implies $S\cap V(M)=\emptyset$. A solution to this problem is called an \emph{$M$-vertex-stabilizer}. We would like to point out that the following variants of the problem, which are along the lines of Chandrasekaran et al. \cite{arxiv/ChandrasekaranG16} are polytime solvable: find a weight vector $w'$ such that $M$ is a stable outcome for $(G,w')$ so as to minimize $\|w-w'\|_1$ or $\|w-w'\|_\infty$ (or even the $\ell_p$ norm $\|w-w'\|_p$). This is an inverse-optimization problem that can be cast as an LP (or convex program for the $\ell_p$ norm) by exploiting complementary slackness. 

Ahmadian et al.~\cite{conf/ipco/AhmadianHS16} previously showed that the $M$-vertex-stabilizer problem is polynomial-time solvable on unweighted graphs when $M$ is a maximum matching in $G$. We prove that when $M$ is any arbitrary matching in $G$, the problem becomes hard:

\begin{theorem}\label{thm:hardness_mvert}
The $M$-vertex-stabilizer problem is NP-hard on unweighted graphs. Furthermore, no efficient $(2-\varepsilon)$-approximation algorithm exists for any $\varepsilon>0$ assuming UGC.
\end{theorem}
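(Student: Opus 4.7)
The plan is to give an approximation-preserving reduction from the minimum vertex cover problem, which by Khot and Regev is NP-hard and, under UGC, cannot be approximated within any factor $2-\varepsilon$. Because the reduction will preserve the optimal value exactly, both conclusions of the theorem follow from a single construction.

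Given a vertex cover instance $H=(V_H,E_H)$, I build an unweighted graph $G$ by attaching a two-edge pendant to every original vertex: for each $v\in V_H$, introduce fresh vertices $v_1,v_2$ and edges $vv_1,v_1v_2$. Let $M:=\{v_1v_2:v\in V_H\}$. Then $M$ is a matching in $G$, $V(M)=\{v_1,v_2:v\in V_H\}$, and the $M$-exposed vertices are exactly $V_H$. In particular $M$ is not a maximum matching of $G$ whenever $E_H\neq\emptyset$, so we are genuinely in the ``arbitrary matching'' regime.

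The key claim is that the minimum vertex covers of $H$ are precisely the minimum $M$-vertex-stabilizers of $G$. First, any $M$-vertex-stabilizer $S$ is disjoint from $V(M)$, and since $V(G)\setminus V(M)=V_H$, one has $S\subseteq V_H$. Next, to make $M$ a maximum matching of $G\setminus S$, no $M$-augmenting path may survive. The only $M$-exposed vertices are $V_H\setminus S$, and a walk leaving $V_H$ through a pendant $vv_1$ is forced into the matched edge $v_1v_2$ and then stuck, because $v_2$ has degree one. Hence the only candidate $M$-augmenting paths in $G\setminus S$ are single edges $uv\in E_H$ with $u,v\notin S$, and killing all of them is exactly the vertex cover condition on $S$. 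Conversely, when $S$ is a vertex cover of $H$, the graph $H\setminus S$ is edgeless, so $G\setminus S$ is a disjoint union of pendant paths of length at most two. Such a graph is bipartite, hence stable (since $\nu=\nu_f$ on bipartite graphs), and $M$ saturates every component maximally. Thus $S$ is indeed an $M$-vertex-stabilizer. Since the size of $S$ on both sides is identical, the optimal values coincide.

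The reduction is polynomial, and preserves the optimum exactly, so NP-hardness of vertex cover transfers to NP-hardness of the $M$-vertex-stabilizer problem, and any $(2-\varepsilon)$-approximation for $M$-vertex-stabilizer would yield one for vertex cover, contradicting the Khot--Regev $(2-\varepsilon)$-inapproximability under UGC. The only delicate point I expect in writing this up is the backward direction of the equivalence: one must observe that preventing $M$-augmenting paths is not automatically the same as stabilizing, and that what saves us is the fact that once $S$ covers $H$, the remaining graph is bipartite, so stability is free and does not need to be enforced separately.
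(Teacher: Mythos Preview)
Your reduction is correct and uses the same overall strategy as the paper---an exact, approximation-preserving reduction from vertex cover---but with a different gadget. The paper replaces each edge $uv$ of the vertex-cover instance by a length-three path $(u,u',v',v)$ with the middle edge $u'v'$ placed in $M$; you instead keep all original edges and hang a two-edge pendant off each vertex, putting the far pendant edge in $M$. In both constructions the original vertices are precisely the $M$-exposed ones, the $M$-augmenting paths correspond bijectively to original edges, and no $M$-blossom can exist (in the paper because every original vertex is incident only to unmatched edges, in yours because every matched edge touches a leaf and hence cannot lie on a cycle). Your observation that $G\setminus S$ becomes a disjoint union of short paths---hence bipartite and automatically stable---is a clean alternative to the paper's direct argument that no alternating cycle or blossom exists. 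Either gadget yields the theorem with an identical optimum on both sides; yours is marginally simpler to describe.
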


\begin{proof}
We give an approximation-preserving reduction from the vertex cover problem. Let $G=(V,E)$ be a vertex cover instance. For every edge $uv\in E$, replace it with an augmenting path of length three, i.e. $(u,u',v',v)$ where $u'v'\in M$. Denote the resulting (unweighted) graph as $G'=(V',E')$. We will show that every vertex cover in $G$ corresponds to an $M$-vertex-stabilizer in $G'$ and vice versa. This implies that the reduction is approximation-preserving and the inapproximability results for the vertex cover problem \cite{journals/am/DinurS05,journals/jcss/KhotR08} carry over to the problem of finding a minimum $M$-vertex-stabilizer.

Observe that $G'$ does not contain any alternating cycle or blossom. This implies that there is no augmenting cycle, flower or bi-cycle in $G'$. Let $S$ be a vertex cover of $G$. Then, $G'\setminus S$ has no augmenting path because every augmenting path in $G'$ corresponds to an edge in $G$. Thus, $M$ is a maximum matching in $G'\setminus S$ and hence $S$ is an $M$-vertex stabilizer. For the converse, suppose $S$ is an $M$-vertex stabilizer of $G'$. Note that $S\subseteq V$ as every vertex in $V'\setminus V$ is $M$-covered. Then, $G\setminus S$ has no edges because every edge in $G$ corresponds to an augmenting path in $G'$. It follows that $S$ is a vertex cover for $G$. This concludes the proof of inapproximability.
\end{proof}

On unweighted graphs, every instance of this problem admits a solution. However, this is not the case for weighted graphs. Consider an $M$-augmenting bi-cycle. It is unstable by Theorem \ref{thm:stable}, but does not have an $M$-vertex-stabilizer because every vertex is $M$-covered. In general, if the graph contains an $M$-augmenting path whose endpoints are $M$-covered, or an $M$-augmenting cycle, or an $M$-augmenting flower whose root is $M$-covered, or an $M$-augmenting bi-cycle, then it does not have an $M$-vertex-stabilizer. We would like to point out that recognizing an infeasible instance of the $M$-vertex-stabilizer problem can be done in polynomial time. In particular, we prove that:

\begin{theorem}\label{thm:mvert_stabilizer}
The $M$-vertex-stabilizer problem admits an efficient 2-approximation algorithm. Furthermore, if $M$ is a maximum-weight matching, then it can be solved in polynomial time.
\end{theorem}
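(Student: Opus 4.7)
The plan is to exploit a simple but crucial structural observation: every vertex internal to a valid $M$-alternating path must be $M$-covered, because a non-matched edge entering a vertex is always followed by a matched edge leaving it. Consequently, an $M$-augmenting cycle or bi-cycle has all its vertices in $V(M)$ and cannot be destroyed under the constraint $S\cap V(M)=\emptyset$; an $M$-augmenting valid path has at most its two endpoints in $V\setminus V(M)$; and an $M$-augmenting flower has at most its root in $V\setminus V(M)$. Combined with Theorem \ref{thm:stable}, this immediately shows that the instance is feasible if and only if $G$ contains no $M$-augmenting cycle, no $M$-augmenting bi-cycle, no $M$-augmenting flower with $M$-covered root, and no $M$-augmenting valid path with both endpoints $M$-covered --- each polynomially checkable.

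For the 2-approximation, define $S_{\mathrm{forced}}\subseteq V\setminus V(M)$ as the set consisting of the root of every $M$-augmenting flower in $G$ together with the unique exposed endpoint of every $M$-augmenting valid path having exactly one exposed endpoint; by the previous paragraph, every $M$-vertex-stabilizer must contain $S_{\mathrm{forced}}$. The only remaining obstructions are $M$-augmenting valid paths with two exposed endpoints $u,v$, each destroyable by removing either $u$ or $v$. Build the auxiliary graph $H$ with vertex set $(V\setminus V(M))\setminus S_{\mathrm{forced}}$ and edge $uv$ whenever $G$ contains such an augmenting path between $u$ and $v$. I would then argue that $S$ is an $M$-vertex-stabilizer if and only if $S\supseteq S_{\mathrm{forced}}$ and $S\setminus S_{\mathrm{forced}}$ is a vertex cover of $H$: the ``if'' direction uses that vertex removal cannot create new augmenting structures, and the ``only if'' direction is immediate. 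Running the standard LP-based 2-approximation for vertex cover on $H$ and appending $S_{\mathrm{forced}}$ yields the desired 2-approximation, matching the UGC-hardness of Theorem \ref{thm:hardness_mvert}.

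When $M$ is a maximum-weight matching, the set of obstructions shrinks dramatically: for any valid $M$-augmenting path $P$ (respectively $M$-augmenting cycle $C$), the swap $M\triangle E(P)$ (respectively $M\triangle E(C)$) is a matching of strictly greater weight than $M$, contradicting optimality of $M$. Hence no augmenting paths or cycles exist, the auxiliary graph $H$ is empty, and the optimum is precisely $S_{\mathrm{forced}}$, which in this case equals the set of $M$-exposed vertices that are roots of some $M$-augmenting flower in $G$. The main technical obstacle, common to both parts, is to efficiently detect the elements of $S_{\mathrm{forced}}$ and the edges of $H$; this reduces to testing, for a given exposed vertex or pair of vertices, whether $G$ contains a tight weighted augmenting structure anchored at them, which can be done in polynomial time by an Edmonds-style alternating-forest search relative to a fixed optimal dual $y$, in the spirit of Algorithm \ref{alg:mincycle}.
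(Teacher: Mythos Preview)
Your high-level strategy is correct and matches the paper's: both identify the forced set $S_{\mathrm{forced}}$ (the paper calls it $S_1$) and then handle the remaining obstructions, which are augmenting paths with two exposed endpoints. The paper does not build your auxiliary graph $H$ explicitly; instead it greedily removes both endpoints of any such path it finds (maximal-matching style vertex cover), but this difference is cosmetic and either yields a 2-approximation. Your characterisation of feasibility and of the exact case when $M$ is maximum-weight is also the same as the paper's.

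The genuine gap is in your detection subroutine. You propose to locate augmenting flowers and augmenting paths anchored at given exposed vertices via ``an Edmonds-style alternating-forest search relative to a fixed optimal dual $y$, in the spirit of Algorithm~\ref{alg:mincycle}''. This does not work as stated. First, when $M$ is not a maximum-weight matching, no optimal dual $y$ satisfies complementary slackness with $M$, so the edges of $M$ need not be tight and a tight-edge alternating forest has no meaningful relationship to $M$. Second, even when $M$ is maximum-weight, an $M$-augmenting flower is defined by the strict weight inequality $w(C\setminus M)+2w(P\setminus M)>w(C\cap M)+2w(P\cap M)$, which is \emph{not} a tightness condition with respect to any dual; non-tight edges can and do appear in augmenting flowers. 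Algorithm~\ref{alg:mincycle} is solving a different problem (moving between basic optimal fractional matchings) and its auxiliary construction does not transfer.

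The paper's actual detection mechanism is quite different: it runs a Bellman--Ford-style dynamic program (Algorithm~\ref{alg:optwalk}, adapted from Aspvall--Shiloach) that, for each source $s$ and each target $v$, computes the maximum of $w(P\setminus M)-w(P\cap M)$ over all valid $M$-alternating $sv$-\emph{walks} of bounded length. Positivity of this quantity for $v=s$ certifies an augmenting closed walk at $s$; a separate decomposition lemma (Lemma~\ref{lem:augwalk}) then shows that in a feasible instance such a walk must contain an augmenting flower rooted at $s$, and similarly for augmenting paths. No dual is used anywhere. Replacing your last paragraph with this DP-plus-decomposition argument would complete your proof.
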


We first sketch the main ideas. Given a weighted graph $G$ and a matching $M$, the algorithm searches for the structures which prevent $G$ from being stable or $M$ from being a maximum-weight matching. Among all such structures, the ones that can be tampered with are augmenting paths with at least one $M$-exposed endpoint and augmenting flowers with an $M$-exposed root. The algorithm then proceeds to delete these $M$-exposed vertices. If there exist augmenting paths whose endpoints are both $M$-exposed, the problem becomes hard because we do not know which endpoint is optimal to remove. In this case, the algorithm removes both endpoints, thus yielding a 2-approximation. Note that this last case cannot happen if $M$ is maximum-weight, and this explains why the problem becomes polynomial-time solvable. Kleinberg and Tardos \cite{conf/stoc/KleinbergT08} were the first to give a method of locating these structures. It involves solving a certain linear program using the dynamic programming algorithm of Aspvall and Shiloach \cite{journals/siamcomp/AspvallS80}. We use a slightly different algorithm for finding these structures, which in fact, will allow us to prove a strengthened version of Theorem \ref{thm:mvert_stabilizer} (Theorem \ref{thm:mvert_stabilizer_2}). Our algorithm relies on searching for augmenting walks of a certain length, via a slight modification of the dynamic programming algorithm given by Aspvall and Shiloach \cite{journals/siamcomp/AspvallS80}. Using this as a subroutine, we design an algorithm for the $M$-vertex-stabilizer problem.

\subsection{Finding augmenting walks}
The first ingredient is an algorithm to find augmenting walks of a certain length. We say that a valid $M$-alternating $uv$-walk $P$ of length at most $k$ is $\emph{optimal}$ if for any other valid $M$-alternating $uv$-walk $P'$ of length at most $k$, we have $w(P\setminus M)-w(P\cap M)\geq w(P'\setminus M) - w(P'\cap M)$. Given a source vertex $s$ and an integer $k\in \Z_+$, the algorithm searches for optimal valid alternating $sv$-walks of length at most $k$ for all $v\in V$. The significance of optimality is as follows. Let $P$ be an optimal valid alternating $sv$-walk returned by the algorithm. If $P$ is augmenting, then we have found an augmenting $sv$-walk of length at most $k$. Otherwise, we can conclude that there are no augmenting $sv$-walks of length at most $k$ by the optimality of $P$.

The inner workings of our algorithm is similar to the Grapevine algorithm given by Aspvall and Shiloach \cite{journals/siamcomp/AspvallS80}. In their paper, the Grapevine algorithm is used as a subroutine to solve linear systems of the form $Ax\leq b$, where each constraint contains at most 2 variables. They first constructed an auxiliary graph to model the relationship between variables and constraints. The Grapevine algorithm is then run on this auxiliary graph to compute lower and upper bounds of each variable.

We now give an overview of our algorithm. For every vertex $v\in V$, we define two variables $y_1(v)$ and $y_2(v)$. In iteration $i$, if $v$ is $M$-exposed, we would like $y_1(v)$ to represent the quantity $w(P\setminus M)-w(P\cap M)$, where $P$ is an optimal valid alternating $sv$-walk of length at most $i$. On the other hand, if $v$ is $M$-covered, we would like $y_2(v)$ to represent the quantity $w(P\setminus M)-w(P\cap M)$, where $P$ is an optimal valid alternating $sv$-walk of length at most $i$. We first initialize $y_1(s)=0$ and $y_2(s)=-\infty$ if $s$ is $M$-covered, and $y_1(s)=y_2(s)=0$ if $s$ is $M$-exposed. For every other vertex $v\neq s$, we set $y_1(v)=y_2(v)=-\infty$. At every iteration, each vertex $v$ determines whether it could increase its $y_1(v)$ value by replacing it with $y_2(u)+w_{uv}$ for some $uv\in E\setminus M$, and similarly, whether it could increase its $y_2(v)$ value by replacing it with $y_1(u)-w_{uv}$ for some $uv\in M$. In a way, this is analogous to the Bellman-Ford algorithm for computing shortest paths. The main difference is that we are maintaining two variables for each vertex, instead of one. This ensures the walk we obtain is $M$-alternating and valid.  Also, notice that we are adding the weights of unmatched edges and subtracting the weights of matched edges. This will give us our desired quantity $w(P\setminus M)-w(P\cap M)$.

\begin{algorithm}[H]
\SetKwInOut{Input}{Input}
Initialize vectors $y_1,y_2,z_1,z_2\in \R^n$\;
\uIf{$s$ is $M$-covered}{
  $y_1(s) \leftarrow 0$\;
  $y_2(s) \leftarrow -\infty$\;
}
\Else{
  $y_1(s) \leftarrow 0$\;
  $y_2(s) \leftarrow 0$\;
}
\ForEach{vertex $v\neq s$}{
  $y_1(v)\leftarrow -\infty$\;
  $y_2(v)\leftarrow -\infty$\;
}

\For{$i=1$ to $k$}{
  \ForEach{vertex $v$}{
    $\displaystyle z_1(v)\leftarrow\max_{u:uv\in E\setminus M}\set{y_2(u)+w_{uv}}$\;
    $\displaystyle z_2(v)\leftarrow\max_{u:uv\in M}\set{y_1(u)-w_{uv}}$\;
  }
  \ForEach{vertex $v$}{
    $y_1(v)\leftarrow\max(y_1(v),z_1(v))$\;
    $y_2(v)\leftarrow\max(y_2(v),z_2(v))$\;
  }
}

\Return{$y_1,y_2$}
\caption{Optimal valid $M$-alternating $sv$-walks of length at most $k$}
\label{alg:optwalk}
\end{algorithm}

In the algorithm, we take the maximum of the empty set to be $-\infty$. For the analysis, we will use $y^i_1(v)$ and $y^i_2(v)$ to denote the value of $y_1(v)$ and $y_2(v)$ respectively at iteration $i$ for all $i< k$. Note that $y^0_1(v)$ and $y^0_2(v)$ refer to the initial value received by $y_1(v)$ and $y_2(v)$ before the main ``for'' loop. The following lemmas verify our intuition.

\begin{lemma}\label{lem:covered}
Let $v$ be an $M$-covered vertex. If there is no valid $M$-alternating $sv$-walk of length at most $k$, then $y_2(v)=-\infty$. Otherwise, there exists an optimal valid $M$-alternating $sv$-walk $P$ of length at most $k$, and $y_2(v)=w(P\setminus M)-w(P\cap M)$.
\end{lemma}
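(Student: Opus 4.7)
The plan is to prove the lemma by induction on the iteration count $i$ of Algorithm \ref{alg:optwalk}. Write $y_j^i(v)$ for the value of $y_j(v)$ after the $i$-th pass through the main for-loop, with $i=0$ denoting the initialization. For each vertex $v$, define $W_2^i(v)$ to be the set of valid $M$-alternating $sv$-walks of length at most $i$ that either have length zero (which forces $v=s$ to be $M$-exposed) or end with an edge in $M$; define $W_1^i(v)$ symmetrically, permitting the length-zero walk at $v=s$ regardless of the matching status of $s$ and otherwise requiring the last edge to lie in $E\setminus M$. The central invariant I would establish is
\[
y_j^i(v) \;=\; \max\set{w(P\setminus M)-w(P\cap M):P\in W_j^i(v)}, \qquad j\in\set{1,2},
\]
with the convention that the maximum over an empty set is $-\infty$.

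First I would verify the base case $i=0$ directly against the initialization. For the inductive step, I would observe that any walk in $W_2^i(v)$ of length at least one ends with a matched edge $uv\in M$, and deleting this edge produces a walk in $W_1^{i-1}(u)$ whose weight differs by exactly $w_{uv}$. This correspondence shows that $z_2(v)=\max_{uv\in M}\set{y_1^{i-1}(u)-w_{uv}}$ captures precisely the optimum over length-exactly-$i$ walks in $W_2^i(v)$; combining it with $y_2^{i-1}(v)$ in the final $\max$ step extends the invariant to iteration $i$. The argument for $y_1$ is symmetric, hinging on the dual recurrence $z_1(v)=\max_{uv\in E\setminus M}\set{y_2^{i-1}(u)+w_{uv}}$ and the fact that a length-$i$ walk in $W_1^i(v)$ must end with an unmatched edge $uv$ preceded by a walk to $u$ in $W_2^{i-1}(u)$.

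To conclude, set $i=k$. An $M$-covered vertex $v$ cannot be the endpoint of a valid walk that ends with an unmatched edge (validity demands a matched final edge whenever $v$ is covered) nor of a length-zero walk at $v=s$ (validity would require $s$ to be $M$-exposed). So the valid $M$-alternating $sv$-walks of length at most $k$ are exactly the members of $W_2^k(v)$, and the invariant for $y_2^k(v)$ immediately yields the two cases in the lemma: $y_2(v)=-\infty$ when $W_2^k(v)=\emptyset$, and $y_2(v)=w(P\setminus M)-w(P\cap M)$ for an optimizing walk $P$ otherwise.

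The main obstacle will be pinning down the precise semantics of $y_1$ versus $y_2$ --- in particular, explaining why $y_1(s)=0$ is initialized even when $s$ is $M$-covered (so that the zero-length ``state'' at $s$ can seed the recurrence by pairing with a first matched edge), while $y_2(s)=-\infty$ correctly reflects the absence of any valid length-zero walk in that case. Once the asymmetric initialization is absorbed into the definitions of $W_1^0$ and $W_2^0$, the induction and the passage from the invariant to the lemma are both mechanical.
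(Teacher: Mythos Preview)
Your overall strategy---maintaining a joint invariant for $y_1^i$ and $y_2^i$ in terms of explicit walk-sets $W_1^i(v)$ and $W_2^i(v)$---is sound and in fact cleaner than the paper's approach, which proves the $y_2$-statement and the companion $y_1$-statement (Lemma~\ref{lem:exposed}) as two separate inductions, each tracing update chains backwards. Your formulation packages both directions of the lemma into one invariant and handles all vertices uniformly.

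There is, however, a real flaw in your definition of $W_1^i(v)$. You define it ``symmetrically'' to $W_2^i(v)$, i.e.\ as \emph{valid} $M$-alternating $sv$-walks whose last edge lies in $E\setminus M$. But in the paper's terminology, validity constrains \emph{both} endpoints: a valid walk must end either at an $M$-exposed vertex or with a matched edge. Consequently, for any $M$-covered vertex $u\neq s$ your set $W_1^i(u)$ is empty, forcing $y_1^i(u)=-\infty$ under your invariant. This is false: if $s$ is exposed and $su\notin M$, the algorithm sets $y_1^1(u)=w_{su}$. More importantly, your inductive step for $y_2$ needs $y_1^{i-1}(u)$ at the \emph{covered} vertex $u$ matched to $v$, and with your definition that value is stuck at $-\infty$, so the recurrence collapses.

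The fix is the one you are already gesturing at with your special case for the length-zero walk at $s$: drop end-validity from $W_1$. Define $W_1^i(v)$ as the set of $M$-alternating $sv$-walks of length at most $i$ that \emph{start} validly (either $s$ is exposed, or the first edge lies in $M$) and whose last edge lies in $E\setminus M$ (or that have length zero at $s$, regardless of whether $s$ is covered). With this adjustment the prefix obtained by deleting the last matched edge of a walk in $W_2^i(v)$ genuinely lies in $W_1^{i-1}(u)$, and your bijective decomposition goes through. A minor remark: $z_2(v)$ does not capture ``length-exactly-$i$'' walks but rather all walks of positive length up to $i$; this overlap with $y_2^{i-1}(v)$ is harmless since you are taking a maximum.
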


\begin{proof}
We start by proving the contrapositive of the first statement. Let $v$ be an $M$-covered vertex where $y_2(v)$ is finite. We proceed by induction on $k$. We look at two base cases. When $k=1$, $y_2(v)=y^0_1(s)-w_{sv}$ where $sv\in M$. So $(s,v)$ is our desired walk. When $k=2$, if $y_2(v)$ was updated in iteration 1, then this reduces to the previous case. Otherwise, $y_2(v)=y^0_2(s)+w_{su}-w_{uv}$ for some $uv\in M$. Since $y^0_2(s)$ is finite, $s$ is $M$-exposed and $(s,u,v)$ is our desired walk. For the inductive hypothesis, assume the statement holds for some $k\geq 2$. Consider the case $k+1$. Let $j$ be the last iteration in which $y_2(v)$ was updated, i.e. $y_2(v)=y_1^{j-1}(u)-w_{uv}$ for some $uv\in M$. We may assume $j>2$, otherwise we are back at the base cases. Since the update of $y_2(v)$ was triggered by the update of $y_1(u)$, we know that $y_1(u)$ was updated at iteration $j-1$, i.e. $y_1^{j-1}(u)=y^{j-2}_2(t)+w_{tu}$ for some $tu\in E\setminus M$. Similarly, since the update of $y_1(u)$ was triggered by the update of $y_2(t)$,  we know that $y_2(t)$ was updated at iteration $j-2>0$. This implies that $t$ is $M$-covered. As $y^{j-2}_2(t)$ is finite and $j-2<k$, by the inductive hypothesis there exists a valid $M$-alternating $st$-walk of length at most $j-2$. Appending $(t,u,v)$ to this walk yields a valid $M$-alternating $sv$-walk of length at most $j\leq k+1$.

Next, we prove the second statement. Let $v$ be an $M$-covered vertex where a valid $M$-alternating $sv$-walk of length at most $k$ exists. Since the number of such walks is finite, there exists an optimal one. Among all such optimal walks, let $P$ be the shortest one in terms of number of edges. We proceed by induction on $k$. We look at two base cases. When $k=1$, $P=(s,v)$ and $y_2(v)=-w_{sv}=w(P\setminus M)-w(P\cap M)$. When $k=2$, if $\size{E(P)}=1$ then this reduces to the previous case. Otherwise, $P=(s,u,v)$ for some $uv\in M$ and $y_2(v)=w_{su}-w_{uv}=w(P\setminus M)-w(P\cap M)$. For the inductive hypothesis, assume the statement holds for some $k\geq 2$. Consider the case $k+1$. We may assume $P$ has length exactly $k+1$, otherwise by the inductive hypothesis we are done. Denote $P=(v_0,v_1,\dots,v_{k+1})$ where $v_0=s$ and $v_{k+1}=v$. Then, $P'=(v_0,v_1,\dots,v_{k-1})$ is an optimal valid $M$-alternating $sv_{k-1}$-walk of length at most $k-1$. Since $v_{k-1}$ is $M$-covered, by the inductive hypothesis we have $y^{k-1}_2(v_{k-1})=w(P'\setminus M)-w(P'\cap M)$. We also know that $y^{k-1}_2(v_{k-1})+w_{v_{k-1}v_k}\geq y^{k-1}_2(u)+w_{uv_k}$ for all $uv_k\in E\setminus M$ because $P$ is optimal. We claim that $y^{k-1}_2(v_{k-1})+w_{v_{k-1}v_k}>y^{k-1}_1(v_k)$. For the purpose of contradiction, suppose otherwise. Then, 
\begin{align*}
  y_2^k(v) &\geq y_1^{k-1}(v_k)-w_{v_kv} \\
           &\geq y^{k-1}_2(v_{k-1})+w_{v_{k-1}v_k}-w_{v_kv} \\
           &= w(P'\setminus M) - w(P'\cap M) +w_{v_{k-1}v_k}-w_{v_kv} \\
           &= w(P\setminus M) - w(P\cap M).
\end{align*}
Since $y_2^k(v)$ is finite, by the first part of the lemma there exists a valid $M$-alternating $sv$-walk of length at most $k$. So by the inductive hypothesis, $y_2^k(v)=w(Q\setminus M)-w(Q\cap M)$ where $Q$ is an optimal valid $M$-alternating $sv$-walk of length at most $k$. Note that $w(Q\setminus M)-w(Q\cap M)=w(P\setminus M)-w(P\cap M)$ because $P$ is optimal. However, $Q$ is shorter than $P$, which is a contradiction. Thus, we obtain $y_1^k(v_k)=y_2^{k-1}(v_{k-1})+w_{v_{k-1}v_k}$ and 
\[y_2(v) = y_1^k(v_k)-w_{v_kv} = y_2^{k-1}(v_{k-1})+w_{v_{k-1}v_k}-w_{v_kv} = w(P\setminus M) - w(P\cap M).\]
\end{proof}

\begin{lemma}\label{lem:exposed}
Let $v$ be an $M$-exposed vertex. If there is no valid $M$-alternating $sv$-walk of length at most $k$, then $y_1(v)=-\infty$. Otherwise, there exists an optimal valid $M$-alternating $sv$-walk $P$ of length at most $k$, and $y_1(v)=w(P\setminus M)-w(P\cap M)$.
\end{lemma}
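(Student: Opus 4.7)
The plan is to mirror the proof of Lemma~\ref{lem:covered} essentially verbatim, swapping the roles of $y_1$ and $y_2$ and, correspondingly, the roles of unmatched and matched edges. The interchange is justified by the symmetry of the update rules in Algorithm~\ref{alg:optwalk}: $y_1$ is updated across edges in $E \setminus M$ from $y_2$, while $y_2$ is updated across edges in $M$ from $y_1$. Since an $M$-exposed vertex must be entered by an unmatched edge in any valid $M$-alternating walk, the relevant quantity for $M$-exposed $v$ is indeed $y_1(v)$, just as $y_2(v)$ was the relevant quantity for $M$-covered $v$.

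For the first statement, I would prove the contrapositive by induction on $k$. The base cases $k = 0, 1, 2$ follow directly from the initialization: finite $y_1(v)$ at $k = 0$ forces $v = s$ with $s$ being $M$-exposed, giving the trivial walk $(s)$; at $k=1,2$ the only ways to make $y_1(v)$ finite are either $v = s$ itself or an update $y_1(v) = y_2^0(s) + w_{sv}$ for some $sv \in E \setminus M$, the latter requiring $s$ to be $M$-exposed and yielding the walk $(s,v)$ (and similarly for $k=2$ through a matched edge out of $s$). For the inductive step, I would trace the last iteration $j$ in which $y_1(v)$ was updated: $y_1(v) = y_2^{j-1}(u) + w_{uv}$ for some $uv \in E \setminus M$, and if $j > 2$ this was triggered by the simultaneous update $y_2^{j-1}(u) = y_1^{j-2}(t) - w_{tu}$ for some $tu \in M$. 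Because $y_1^{j-2}(t)$ was itself updated at iteration $j-2 > 0$, $t$ must be $M$-exposed (the algorithm only ever raises $y_1(t)$ via an unmatched edge, which is valid for $M$-exposed $t$). The inductive hypothesis then delivers a valid $M$-alternating $st$-walk of length at most $j - 2$, and appending $(t, u, v)$ to it produces the desired walk of length at most $j \le k$.

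For the second statement, I would again follow the same outline as Lemma~\ref{lem:covered}. Pick an optimal valid $M$-alternating $sv$-walk $P$ of length at most $k$ that is as short as possible, and induct on $k$. The only interesting case is $|E(P)| = k+1$; write $P = (v_0, \ldots, v_{k+1})$. Because $v = v_{k+1}$ is $M$-exposed, the last edge $v_k v_{k+1}$ lies in $E \setminus M$, forcing $v_{k-1}v_k \in M$ and hence $v_k$ to be $M$-covered. The subwalk $(v_0, \ldots, v_k)$ is then an optimal valid $M$-alternating $sv_k$-walk of length at most $k-1$, so Lemma~\ref{lem:covered} supplies $y_2^{k-1}(v_k) = w(P' \setminus M) - w(P' \cap M)$ (where $P'$ denotes this subwalk). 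Optimality of $P$ gives $y_2^{k-1}(v_k) + w_{v_{k-1}v_k}$... here I would rerun the strict-inequality argument of Lemma~\ref{lem:covered} to show $y_2^{k-1}(v_k) + w_{v_kv} > y_1^{k-1}(v)$, since otherwise the existence of a length-$k$ walk with the same objective value would contradict $P$ being the shortest optimal walk. Thus the update in iteration $k$ indeed takes $y_1(v) = y_2^{k-1}(v_k) + w_{v_k v} = w(P \setminus M) - w(P \cap M)$, completing the proof. The main subtlety, rather than obstacle, is simply the parity bookkeeping ensuring $v_k$ is $M$-covered so that Lemma~\ref{lem:covered} applies cleanly.
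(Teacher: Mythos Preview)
Your proposal contains a genuine gap in the first (contrapositive) statement. You trace back two steps to a vertex $t$ and assert that ``because $y_1^{j-2}(t)$ was itself updated at iteration $j-2 > 0$, $t$ must be $M$-exposed.'' This is false: Algorithm~\ref{alg:optwalk} updates $y_1(t)$ for \emph{every} vertex $t$ via unmatched edges, regardless of whether $t$ is $M$-exposed or $M$-covered. For a concrete counterexample, take $s$ $M$-exposed with $st \in E\setminus M$ and $t$ $M$-covered; then $y_1(t)$ is updated in iteration~1. Since your inductive hypothesis (the statement of Lemma~\ref{lem:exposed}) only concerns $M$-exposed vertices, it cannot be applied to such a $t$, and the argument breaks down.

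The paper's proof is \emph{not} a mirror of the proof of Lemma~\ref{lem:covered}; rather, it relies on Lemma~\ref{lem:covered} as a black box. It goes back only \emph{one} step to the vertex $u$ with $uv \in E\setminus M$, observes that $y_2(u)$ must have been updated at iteration $j-1$ (since otherwise $y_1(v)$ would not have changed at iteration $j$), and that any update of $y_2(u)$ goes through a matched edge, forcing $u$ to be $M$-covered. Lemma~\ref{lem:covered} then supplies a valid $su$-walk of length at most $j-1$, and appending the single edge $uv$ finishes. Your second-statement argument is essentially on the right track (you do invoke Lemma~\ref{lem:covered} there), modulo an off-by-one: the subwalk $(v_0,\dots,v_k)$ has length $k$, not $k-1$, so you obtain $y_2^k(v_k)$ rather than $y_2^{k-1}(v_k)$, and the strict-inequality comparison should be with $y_1^k(v)$.
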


\begin{proof}
We start by proving the contrapositive of the first statement. Let $v$ be an $M$-exposed vertex where $y_1(v)$ is finite. We proceed by induction on $k$. We look at two base cases. When $k=0$, we have $v=s$ and the empty path $(s)$ is our desired walk. When $k=1$, if $y_1(v)$ was never updated, then $v=s$ and so this reduces to the previous case. Otherwise, $y_1(v)=y^0_2(s)+w_{sv}$. Since $y^0_2(s)$ is finite, $s$ is $M$-exposed and $(s,v)$ is our desired walk. For the inductive hypothesis, assume the statement holds for some $k\geq 1$. Consider the case $k+1$. Let $j$ be the last iteration in which $y_1(v)$ was updated, i.e. $y_1(v)=y_2^{j-1}(u)+w_{uv}$ for some $uv\in E\setminus M$. We may assume $j>1$, otherwise we are back at the base cases. Since the update of $y_1(v)$ was triggered by the update of $y_2(u)$, we know that $y_2(u)$ was updated at iteration $j-1>0$. This implies that $u$ is $M$-covered. As $y^{j-1}_2(u)$ is finite and $j-1\leq k$, by Lemma \ref{lem:covered} there exists a valid $M$-alternating $su$-walk of length at most $j-1$. Appending $(u,v)$ to this walk yields a valid $M$-alternating $sv$-walk of length at most $j\leq k+1$.

Next, we prove the second statement. Let $v$ be an $M$-exposed vertex where a valid $M$-alternating $sv$-walk of length at most $k$ exists. Since the number of such walks is finite, there exists an optimal one. Among all such optimal walks, let $P$ be the shortest one in terms of number of edges. We proceed by induction on $k$. We look at two base cases. When $k=0$, $P=(s)$ and $y_1(v)=0=w(P\setminus M)-w(P\cap M)$. When $k=1$, if $\size{E(P)}=0$ then this reduces to the previous case. Otherwise, $P=(s,v)$ and $y_1(v)=w_{sv}=w(P\setminus M)-w(P\cap M)$. For the inductive hypothesis, assume the statement holds for some $k\geq 1$. Consider the case $k+1$. We may assume $P$ has length exactly $k+1$, otherwise by the inductive hypothesis we are done. Denote $P=(v_0,v_1,\dots,v_{k+1})$ where $v_0=s$ and $v_{k+1}=v$. Then, $P'=(v_0,v_1,\dots,v_k)$ is an optimal valid $M$-alternating $sv_k$-walk of length at most $k$. Since $v_k$ is $M$-covered, by Lemma \ref{lem:covered} we have $y^k_2(v_k)=w(P'\setminus M)+w(P'\cap M)$. We also know $y^k_2(v_k)+w_{v_kv}\geq y^k_2(u)+w_{uv}$ for all $uv\in E\setminus M$ because $P$ is optimal. We claim that $y^k_2(v_k)+w_{v_kv}>y^k_1(v)$. For the purpose of contradiction, suppose otherwise. Then 
\[y^k_1(v)\geq y^k_2(v_k)+w_{v_kv}=w(P'\setminus M)-w(P'\cap M) + w_{v_kv} = w(P\setminus M)-w(P\cap M).\]
Since $y_1^k(v)$ is finite, by the first part of the lemma there exists a valid $M$-alternating $sv$-walk of length at most $k$. So by the inductive hypothesis, $y_1^k(v)=w(Q\setminus M)-w(Q\cap M)$ where $Q$ is an optimal valid $M$-alternating $sv$-walk of length at most $k$. Note that $w(Q\setminus M)-w(Q\cap M)=w(P\setminus M)-w(P\cap M)$ because $P$ is optimal. However, $Q$ is shorter than $P$, which is a contradiction. Thus, we obtain $y_1(v)=y_2^k(v_k)+w_{v_kv}=w(P\setminus M)-w(P\cap M)$.
\end{proof}

\subsection{The algorithm}
The reason we look for augmenting walks is because of the following:

\begin{lemma}\label{lem:augwalk}
An augmenting $uv$-walk contains an augmenting $uv$-path, an augmenting cycle, an augmenting flower rooted at $u$ or $v$, or an augmenting bi-cycle. 
\end{lemma}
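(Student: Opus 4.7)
The plan is to proceed by strong induction on the length $k$ of $W=(v_0,v_1,\dots,v_k)$, with $v_0=u$ and $v_k=v$. If $W$ is a simple path, then $W$ is already the desired augmenting $uv$-path. Otherwise, I would pick the earliest vertex-repetition: let $j$ be the smallest index such that $v_j\in\{v_0,\dots,v_{j-1}\}$, and let $i<j$ with $v_i=v_j$. By the minimality of $j$ the intermediate vertices $v_{i+1},\dots,v_{j-1}$ are pairwise distinct (and distinct from $v_i$), so $C:=(v_i,v_{i+1},\dots,v_j)$ is a simple cycle of length $\ell:=j-i\geq 2$. Writing $\Delta(X):=w(X\setminus M)-w(X\cap M)$, the case analysis is driven by the parity of $\ell$ and the location of the base $x=v_i=v_j$ in $W$.

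If $\ell$ is even, $C$ is an $M$-alternating cycle. If $\Delta(C)>0$, then $C$ is an augmenting cycle contained in $W$ and we are done. Otherwise the spliced walk $W':=(v_0,\dots,v_i,v_{j+1},\dots,v_k)$ is a valid $M$-alternating $uv$-walk---a parity argument shows that $v_{i-1}v_i$ and $v_jv_{j+1}$ have opposite matching-types (so alternation is preserved at the glue vertex, and they cannot both be matched), while validity at the endpoints is inherited from $W$---and $\Delta(W')=\Delta(W)-\Delta(C)>0$, so we apply induction to the shorter walk $W'$. If $\ell$ is odd, $C$ is an $M$-blossom with base $x$, and by alternation both edges $v_iv_{i+1}$ and $v_{j-1}v_j$ are unmatched. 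When $x$ equals an endpoint of $W$, say $x=u$ (so $i=0$), then $W=C\cdot W_2$ with $W_2=(v_j,\dots,v_k)$ starting at $u$ with a matched edge; either $\Delta(W_2)>0$ and we apply induction to $W_2$, or $\Delta(W_2)\leq 0$, which forces $\Delta(C)>0$, and the flower $F:=C\cup W_2$ rooted at $v$ satisfies $\Delta(F)=\Delta(C)+2\Delta(W_2)=\Delta(W)+\Delta(W_2)$, which we argue is positive (yielding the augmenting flower) or else further decompose $W_2$. If $x$ lies strictly inside $W$, the alternation constraint forces the matched edge $xx'$ at the base to appear on both sides of $C$, so $W$ takes the form $W_1'\cdot(x',x)\cdot C\cdot(x,x')\cdot W_2'$; here I would look for a second blossom in $W$ to pair with $C$ via the odd $M$-alternating sub-walk between their bases, producing an augmenting bi-cycle, or else contract the back-and-forth around $x'$ to obtain a strictly shorter augmenting walk to which induction applies.

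The main obstacle is the interior-blossom case: naively deleting $C$ destroys alternation because both edges of $W$ adjacent to $x$ are matched, so one cannot simply shorten as in the even case. A clean way to organize this is to fix $W$ at the outset as a \emph{shortest} augmenting $uv$-walk and derive, through the case analysis, that any vertex-repetition must either expose one of the four listed substructures as a subgraph of $W$, or yield a strictly shorter augmenting $uv$-walk (contradicting minimality). The technical crux is the careful pairing of an interior blossom with its base's matched edge---and, when necessary, with a second interior blossom further along $W$---to extract an augmenting bi-cycle while certifying that the weight condition in Definition \ref{def:bicycle} follows from $\Delta(W)>0$.
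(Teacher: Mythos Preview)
Your plan handles the even-cycle case correctly, but the interior-blossom case---which you rightly flag as the crux---is not resolved, and both fixes you suggest fail. If the first repetition yields a blossom $C$ with base $x=v_i=v_j$ strictly inside $W$, you correctly note that $v_{i-1}v_i,\,v_jv_{j+1}\in M$ and hence $v_{i-1}=v_{j+1}=:x'$. But ``contracting the back-and-forth'' to $(v_0,\dots,v_{i-1},v_{j+2},\dots,v_k)$ places two \emph{unmatched} edges $v_{i-2}v_{i-1}$ and $v_{j+1}v_{j+2}$ consecutively at $x'$, so the result is not $M$-alternating (and when $i=1$ or $j+1=k$ it fails validity at the corresponding endpoint instead); there is no shorter valid augmenting $uv$-walk to recurse on. Your alternative---pairing $C$ with a second blossom to form a bi-cycle---is where the real work lies, and you give neither a mechanism to locate that second blossom nor the weight inequality showing some resulting flower or bi-cycle is augmenting. (A smaller slip: in the ``$x=u$, $i=0$'' blossom sub-case, validity of $W$ at $u$ together with $v_0v_1\notin M$ forces $u$ to be $M$-exposed, so $u$ has no matched edge and necessarily $j=k$; your nontrivial $W_2$ ``starting at $u$ with a matched edge'' cannot occur---in fact $W=C$ is already an augmenting flower with empty stem.)

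The paper avoids this difficulty by a global rather than local decomposition. It first factors the whole walk as $P_1P_2\cdots P_\ell$ with each $P_i$ a path, an even alternating cycle, or a blossom, and no two consecutive paths/blossoms. After discarding the even cycles (if one is augmenting we are done; otherwise dropping them keeps $\Delta>0$), the remaining $P^*_1\cdots P^*_\ell$ strictly alternates paths and blossoms, and the telescoping identity
\[
2\Delta(P^*)\;=\;\bigl[2\Delta(P^*_1)+\Delta(P^*_2)\bigr]\;+\;\sum_{i}\bigl[\Delta(P^*_{2i})+2\Delta(P^*_{2i+1})+\Delta(P^*_{2i+2})\bigr]\;+\;\bigl[\Delta(P^*_{\ell-1})+2\Delta(P^*_\ell)\bigr]
\]
writes $2\Delta(P^*)>0$ as a sum of flower and bi-cycle ``augmenting quantities'', so at least one term is positive. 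This sum is exactly the missing weight argument you allude to, and it requires having all the blossoms on the table simultaneously---a one-repetition-at-a-time induction does not naturally produce it.
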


\begin{proof}
We first prove the following claim:

\begin{claim}\label{clm:decompose}
  If $P$ is an alternating walk, then it can be decomposed into $P=P_1P_2\dots P_\ell$ such that: 
  \begin{enumerate}[noitemsep,topsep=0pt]
    \item[(i)] Every $P_i$ is an alternating path, an alternating cycle or a blossom.
    \item[(ii)] There is no $i$ such that $P_i$ and $P_{i+1}$ are both alternating paths or blossoms.
  \end{enumerate}
\end{claim}

\begin{proof}
Let $P=(v_1,v_2,\dots,v_t)$ be an $M$-alternating walk. We proceed by induction on $t$. For the base case $t=2$, $P$ is an alternating path of length 1 as there are no loops in $G$. Suppose the lemma is true for $t\leq k$ for some $k\geq 2$. Consider the case $t=k+1$. We may assume $P$ is not simple. Let $j$ be the smallest index such that $v_j=v_i$ for some $i<j$. Decompose $P$ into $P_1=(v_1,v_2,\dots,v_i)$, $P_2=(v_i,v_{i+1},\dots,v_j)$ and $P_3=(v_j,v_{j+1},\dots,v_t)$. $P_1$ is a (possibly empty) alternating path while $P_2$ is an alternating cycle or a blossom. Since $P_3$ is an $M$-alternating walk with fewer edges, by the inductive hypothesis it can be decomposed into $P_3=P'_1P'_2\dots,P'_\ell$ where every $P'_i$ is an alternating path, an alternating cycle or a blossom. Moreover, there are no consecutive paths or blossoms in this decomposition. Note that $P'_1$ is not a blossom because $P_3$ starts with an edge in $M$. Thus, $P=P_1P_2P'_1P'_2\dots P'_\ell$ is our desired decomposition.
\end{proof}

Let $P$ be an $M$-augmenting $uv$-walk. Using Claim \ref{clm:decompose}, decompose $P$ into $P=P_1P_2\dots P_k$. If $P_i$ is an augmenting cycle for some $i\in [k]$, then we are done. So we may assume that every alternating cycle in the decomposition is not augmenting. Note that $P_k$ is not an alternating cycle, otherwise $P$ is not valid because it ends with an unmatched edge whose endpoints are $M$-covered. Let $P'$ be the alternating $uv$-walk obtained by dropping all the alternating cycles in the decomposition. It is easy to see that $P'$ is still augmenting. Repeat this process until we are left with an augmenting $uv$-walk $P^*=P^*_1P^*_2\dots P^*_\ell$ such that every $P_i$ is an alternating path or a blossom. 

If $P^*_1$ is a blossom, then $u$ is $M$-exposed because the first edge of $P^*$ is not in $M$. Similarly, if $P^*_\ell$ is a blossom, then $v$ is $M$-exposed because the last edge of $P^*$ is not in $M$. In both cases, since $P^*$ does not have any $M$-exposed intermediate vertices, we get $u=v$ and $\ell=1$. This implies that $P^*$ is an augmenting blossom with base $u$, which is trivially an augmenting flower with root $u$. Thus, we may assume $P_1$ and $P_\ell$ are alternating paths. If $\ell=1$, then $P^*$ is an augmenting $uv$-path. Otherwise, from Claim \ref{clm:decompose} we know that $P_i$ is an alternating path for all odd $i$ while $P_i$ is a blossom for all even $i$. Observe that $P^*_1\cup P^*_2$ and $P^*_{\ell-1}\cup P^*_\ell$ form flowers rooted at $u$ and $v$ respectively, where the former is simple while the latter might not be simple. Moreover, $P^*_{2i}\cup P^*_{2i+1} \cup P^*_{2i+2}$ form bi-cycles for all $i\in \br{\frac{\ell-3}{2}}$. Since $2w(P^*\setminus M)>2w(P^*\cap M)$ and
\[2w(P^*)=2w(P^*_1)+w(P^*_2) + \sum_{i=1}^{(\ell-3)/2}\pr{w(P^*_{2i})+2w(P^*_{2i+1})+w(P^*_{2i+2})} + w(P^*_{\ell-1}) + 2w(P^*_\ell),\]
at least one of them is augmenting.
\end{proof}

We are now ready to present the algorithm for the $M$-vertex-stabilizer problem:

\begin{algorithm}[H]
Initialize $S\leftarrow\emptyset$ \;
\ForEach{$M$-exposed vertex $u$}{
  Search for $M$-augmenting $uv$-walks of length at most $3n$ using Algorithm \ref{alg:optwalk}\;
  \uIf{$\exists$ an $M$-augmenting $uu$-walk or $uv$-walk for some $M$-covered vertex $v$}{
    $S\leftarrow S\cup\set{u}$\;
    $G\leftarrow G\setminus u$\;
  }
}
\ForEach{$M$-exposed vertex $u$}{
  Search for $M$-augmenting $uv$-walks of length at most $n$ using Algorithm \ref{alg:optwalk}\;
  \uIf{$\exists$ an $M$-augmenting $uv$-walk for some $M$-exposed vertex $v$}{
    $S\leftarrow S\cup\set{u,v}$\;
    $G\leftarrow G\setminus\set{u,v}$\;
  }
}
\uIf{$w(M)<\nu_f(G)$}{
    \Return ``INFEASIBLE'' \;
  }
  \uElse{
    \Return $S$\;

  }

\caption{$M$-vertex-stabilizer}
\label{alg:mvert}
\end{algorithm}

Let $S_1$ denote the set of $M$-exposed vertices in $G$ which are roots of augmenting flowers or endpoints of augmenting paths whose other endpoint is $M$-covered. Note that given a feasible instance, every $M$-vertex-stabilizer contains $S_1$. We prove a stronger statement than Theorem \ref{thm:mvert_stabilizer}:

\begin{theorem}\label{thm:mvert_stabilizer_2}
The $M$-vertex-stabilizer problem admits an efficient 2-approximation algorithm. Furthermore, if $M$ is a maximum-weight matching in $G\setminus S_1$, then it is polynomial-time solvable.
\end{theorem}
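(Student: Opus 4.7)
The plan is to analyze Algorithm \ref{alg:mvert} in three stages matching its structure: the first loop removes a lower-bound set $S_1$, the second loop deals with the remaining augmenting paths at a factor $2$, and the closing test $w(M)<\nu_f(G)$ separates feasible from infeasible instances.

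I would start by showing that for any $M$-exposed source $u$ and any bound $k$, Algorithm \ref{alg:optwalk} correctly decides whether an $M$-augmenting valid $uv$-walk of length at most $k$ exists for each $v$; this is immediate from Lemmas \ref{lem:covered} and \ref{lem:exposed}. Combined with Lemma \ref{lem:augwalk}, the existence of an $M$-augmenting $uu$-walk or an $M$-augmenting $uv$-walk with $v$ $M$-covered of length at most $3n$ implies that $G$ contains at least one of: an $M$-augmenting flower rooted at $u$; an $M$-augmenting path from $u$ to an $M$-covered vertex; an $M$-augmenting cycle; an $M$-augmenting bi-cycle; an $M$-augmenting flower rooted at an $M$-covered vertex; or an $M$-augmenting path between two $M$-covered vertices. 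The first two place $u$ in $S_1$; each of the last four certifies that the instance has no $M$-vertex-stabilizer at all. The length bound $3n$ is sufficient because a flower can be realized as a $uv_1$-walk of length at most $2n$ (path to the base plus one traversal of the blossom) or as a $uu$-walk of length at most $3n$ (path, blossom, reversed path), while an augmenting path has at most $n$ edges. Since deleting an $M$-exposed vertex cannot create new augmenting structures, the order in which the first loop processes vertices is immaterial, and in a feasible instance it outputs exactly $S_1$.

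For the second stage I would observe that after the first loop, by Theorem \ref{thm:stable} together with the requirement that $M$ remain a maximum-weight matching, the only possible obstructions are $M$-augmenting cycles, bi-cycles, flowers with $M$-covered root, paths with two $M$-covered endpoints, or paths between two $M$-exposed vertices; the first four force infeasibility, so in a feasible instance only exposed-exposed augmenting paths remain. The second loop enumerates $M$-exposed sources and, whenever Algorithm \ref{alg:optwalk} with bound $n$ (sufficient because such a path is simple) returns an $M$-exposed target $v$, deletes both $u$ and $v$. Every $M$-vertex-stabilizer must contain at least one endpoint of each such path, so the vertices added by the second loop form the endpoint-set of a matching in the ``augmenting-paths-on-exposed-vertices'' graph and have size at most $2(|\mathrm{OPT}|-|S_1|)$ by the standard matching/cover inequality. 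Thus $|S|\le |S_1|+2(|\mathrm{OPT}|-|S_1|)\le 2|\mathrm{OPT}|$, giving the $2$-approximation. When $M$ is already a maximum-weight matching in $G\setminus S_1$, no $M$-augmenting path survives the first loop, the second loop adds nothing, and $S=S_1$ is optimal, proving the second assertion of the theorem.

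The final feasibility check is then immediate: after the two loops either the current graph is stable with $M$ a maximum-weight matching (certified by $w(M)=\nu_f(G)$), or some unfixable augmenting structure from the list above persists and $w(M)<\nu_f(G)$. The main obstacle I anticipate is the case-analysis in the first stage, where I must verify that in a feasible instance an $M$-augmenting $uu$- or $uv$-walk (with $v$ $M$-covered) of length at most $3n$ returned by Algorithm \ref{alg:optwalk} cannot secretly embed an augmenting cycle, bi-cycle, or covered-rooted flower attached to $u$ via tight sub-walks; the key point is exactly that feasibility excludes these structures from $G$ in the first place, so the detected walk must witness either a flower rooted at $u$ or a path from $u$ to a covered vertex, placing $u$ in $S_1$ as required. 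A small induction on the iterations of the first loop, maintaining the invariant that the current graph retains no augmenting cycle, bi-cycle, covered-rooted flower, or covered-endpointed path, will complete this verification.
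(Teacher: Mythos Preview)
Your proposal is correct and follows essentially the same approach as the paper: handle infeasibility via the ``unfixable'' structures (augmenting cycles, bi-cycles, covered-rooted flowers, paths between covered vertices), use Lemma~\ref{lem:augwalk} together with Algorithm~\ref{alg:optwalk} to show the first loop collects exactly $S_1$ in a feasible instance, argue $S_1\subseteq S^*$, and then bound $|S_2|$ by observing every minimum $M$-vertex-stabilizer hits at least one endpoint of each augmenting path detected in the second loop. The paper presents the infeasibility case first and then states three explicit equivalence claims (flower $\Leftrightarrow$ $uu$-walk of length $\le 3n$; covered-endpoint path $\Leftrightarrow$ $uv$-walk of length $\le 3n$; exposed-exposed path $\Leftrightarrow$ $uv$-walk of length $\le n$), whereas you weave feasibility through the analysis and appeal directly to Lemma~\ref{lem:augwalk}; these are cosmetic differences.

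One small inaccuracy: your aside that ``a flower can be realized as a $uv_1$-walk of length at most $2n$ (path to the base plus one traversal of the blossom)'' is not quite right, because such a walk would end with an unmatched edge at the $M$-covered base $v_1$ and is therefore not \emph{valid} in the sense used by Algorithm~\ref{alg:optwalk}. This does not affect your argument, since you correctly rely on the $uu$-walk of length at most $3n$ (path, blossom, reversed path), which is valid and augmenting; but you should drop the $2n$ alternative.
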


\begin{proof}
Let $G$ be the input graph and $M$ be a matching in $G$. Let $R$ be the set of $M$-exposed vertices in $G$, and let $R'$ be any subset of $R$. If $G$ contains an augmenting path whose endpoints are $M$-covered or an augmenting cycle, then it is also present in $G\setminus R'$. Since $M$ is not a maximum-weight matching in $G\setminus R'$, $R'$ is not an $M$-vertex-stabilizer. Similarly, if $G$ contains an augmenting flower whose root is $M$-covered or an augmenting bi-cycle, then it is also present in $G\setminus R'$. By Theorem \ref{thm:stable}, $G\setminus R'$ is not stable. In these two cases, there is no $M$-vertex-stabilizer. Since $S\subseteq R$ and $w(M)<v_f(G\setminus S)$, the algorithm will return ``INFEASIBLE''. 

Thus, we may assume $G$ does not contain any of the aforementioned structures. The only structure which can make $G$ unstable is an augmenting flower whose root is $M$-exposed. In addition, the only structure which can prevent $M$ from being a maximum-weight matching in $G$ is an augmenting path with at least one $M$-exposed endpoint. 

\begin{claim}
Let $u$ be an $M$-exposed vertex. Then, $u$ is the root of an augmenting flower if and only if there exists an augmenting $uu$-walk of length at most $3n$.
\end{claim}

\begin{proof}
Let $C\cup P$ be a augmenting flower rooted at $u$ where $C=(v_1,v_2,\dots,v_j,v_1)$ is the blossom and $P=(u_1,u_2,\dots,u_k)$ is the valid alternating path. Assume $u_1=v_1$ and $u_k=u$. Let $P^{-1}=(u_k,u_{k-1},\dots,u_1)$ denote the reverse of path $P$. Then, $Q=P^{-1}CP$ is a valid alternating $uu$-walk, and its length is at most $3n$. Moreover, since
\[w(Q\setminus M) =  w(C\setminus M) + 2w(P\setminus M) > w(C\cap M) + 2w(P\cap M) = w(Q\cap M),\]
it is augmenting. For the converse, let $P$ be an augmenting $uu$-walk of length at most $3n$. By Lemma \ref{lem:augwalk}, $P$ contains an augmenting flower rooted at $u$.
\end{proof}

\begin{claim}
Let $u$ be an $M$-exposed vertex and $v$ be an $M$-covered vertex. If there is no augmenting flower rooted at $u$, then there exists an augmenting $uv$-path if and only if there exists an augmenting $uv$-walk of length at most $3n$.
\end{claim}

\begin{proof}
A $uv$-path is trivially a $uv$-walk. For the converse, let $P$ be an augmenting $uv$-walk of length at most $3n$. By Lemma \ref{lem:augwalk}, $P$ contains an augmenting $uv$-path.
\end{proof}

By the two claims above, the set of vertices collected in the first ``for'' loop of the algorithm is exactly $S_1$. Let $S^*$ be a minimum $M$-vertex-stabilizer. Then, $S_1\subseteq S^*$. Now, the only structure which can prevent $M$ from being a maximum-weight matching in $G\setminus S_1$ is an augmenting path whose endpoints are both $M$-exposed.

\begin{claim}
Let $u$ and $v$ be $M$-exposed vertices. There exists an augmenting $uv$-path in $G\setminus S_1$ if and only if there exists an augmenting $uv$-walk of length at most $n$ in $G\setminus S_1$.
\end{claim}

\begin{proof}
A $uv$-path is trivially a $uv$-walk. For the converse, let $P$ be an augmenting $uv$-walk of length at most $n$. By Lemma \ref{lem:augwalk}, $P$ contains an augmenting $uv$-path.
\end{proof}

Let $S_2$ be the set of vertices collected in the second ``for'' loop of the algorithm. At every iteration, a pair of vertices were added to $S_2$ because they are the endpoints of an augmenting path. Note that at least one of them is in $S^*$, otherwise this augmenting path is present in $G\setminus S^*$. Thus, we have $\size{S^*}\geq \size{S_1}+\frac{1}{2}\size{S_2}\geq \frac{1}{2}\size{S}$. The matching $M$ is maximum-weight in $G\setminus(S_1\cup S_2)$ because there are no augmenting paths or cycles. Moreover, $G\setminus(S_1\cup S_2)$ is stable because it does not contain any augmenting flowers or bi-cycles. Thus, $S=S_1\cup S_2$ is an $M$-vertex-stabilizer. Finally, if $M$ is a maximum-weight matching in $G\setminus S_1$, then $S_2=\emptyset$. We get $\size{S}=\size{S_1}\leq \size{S^*}$ implying that $S$ is optimal.
\end{proof}

\bibliographystyle{abbrv}
\bibliography{References}

\section*{Appendix I: Omitted examples}
\subsection*{Minimum stabilizers do not preserve $\nu(G)$}
Here we demonstrate that there are graphs where the removal of any minimum edge- or vertex-stabilizer does not preserve the value of a maximum-weight matching. We first look at edge-stabilizers. Let $G$ denote the graph in Figure \ref{fig:preserve_edge}. It is not stable because $\nu(G)=8<9=\nu_f(G)$, where the maximum-weight fractional matching is given by
\[x_e = \begin{cases} \frac12, &\text{ if } e\in\set{pq,qr,pr}\\ 1, &\text{ if } e=st\\ 0, &\text{ otherwise}. \end{cases}\]

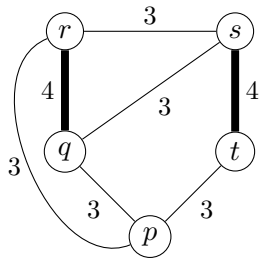
\begin{figure}[ht]
\centering
\def\dist{1.6}
\begin{tikzpicture}[node distance=\dist cm, inner sep=2.5pt, minimum size=2.5pt, auto]
  \node [vertex] (v1) {$p$};
  \node [vertex] (v2) [above left of=v1] {$q$};
  \node [vertex] (v3) [above right of=v1] {$t$};
  \node [vertex] (v4) [above of=v2] {$r$};
  \node [vertex] (v5) [above of=v3] {$s$};

  \path [matched edge] (v2) -- node [weight, left] {$4$} (v4);
  \path [matched edge] (v3) -- node [weight, right] {$4$} (v5);
  \path (v1) edge node [weight, below left] {$3$} (v2);  
  \path (v1) edge node [weight, below right] {$3$} (v3);
  \path (v4) edge node [weight, above] {$3$} (v5);
  \path (v2) edge node [weight, below right] {$3$} (v5);
  \path (v1) edge[bend left=90] node [weight, left] {$3$} (v4);
\end{tikzpicture}
\caption{An example showing that the removal of any minimum edge-stabilizer does not preserve $\nu(G)$. Bold edges indicate the maximum-weight matching.}
\label{fig:preserve_edge}
\end{figure}

The minimum edge-stabilizer is $\set{qr}$. Observe that $M=\set{pq,st}$ is a maximum-weight matching in $G\setminus\set{qr}$ of weight 7. We can verify the stability of $G\setminus\set{qr}$ by constructing a fractional $w$-vertex cover $y$ of the same weight:
\[y_v = \begin{cases} 3, &\text{ if }v\in \set{p,s}\\ 1, &\text{ if }v=t\\ 0, &\text{ otherwise}.\end{cases}\]
It is left to show that if we delete any edge other than $qr$, the graph is still unstable. If edge $rs$, $qs$ or $pt$ is removed, the maximum-weight fractional matching remains the same. If edge $pr$ is removed, we get $\nu_f(G\setminus\set{pr})=8.5$ by assigning $x_e=1/2$ for all $e\in \set{pq,qr,rs,st,pt}$. If edge $pq$ is removed, we get $\nu_f(G\setminus\set{pq})=8.5$ by assigning $x_e=1/2$ for all $e\in \set{pr,rq,qs,st,pt}$. If edge $st$ is removed, we get $\nu(G\setminus\set{st})=7$ and $\nu_f(G\setminus\set{st})=8$ where the maximum-weight matching is $\set{qr,pt}$ and the maximum-weight fractional matching is
\[x_e = \begin{cases} \frac12, &\text{ if } e\in\set{qr,rs,qs}\\ 1, &\text{ if } e=pt\\ 0, &\text{ otherwise}. \end{cases}\]

The same negative result also holds for vertex-stabilizers. Consider the graph given in Figure \ref{fig:preserve_vert}. It is not stable because $\nu(G)=5<6=\nu_f(G)$, where the maximum-weight fractional matching is given by
\[y_v = \begin{cases} \frac{1}{2}, &\text{ if }e\in \set{pq,qr,pr}\\ 0, &\text{ otherwise}.\end{cases}\]
The minimum vertex-stabilizers of this graph are $\set{p},\set{q}$ and $\set{r}$. However, $\nu(G\setminus p)=\nu(G\setminus q)=\nu(G\setminus r)=4$.

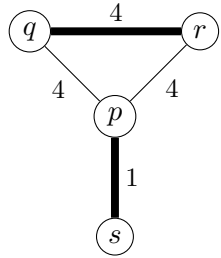
\begin{figure}[ht]
\centering
\def\dist{1.6}
\begin{tikzpicture}[node distance=\dist cm, inner sep=2.5pt, minimum size=2.5pt, auto]
  \node [vertex] (v1) {$p$};
  \node [vertex] (v2) [above left of=v1] {$q$};
  \node [vertex] (v3) [above right of=v1] {$r$};
  \node [vertex] (v4) [below of=v1] {$s$};

  \path [matched edge] (v2) -- node [weight, above] {$4$} (v3);
  \path (v1) edge node [weight, below left] {$4$} (v2);
  \path [matched edge] (v1) -- node [weight, right] {$1$} (v4);  
  \path (v1) edge node [weight, below right] {$4$} (v3);
\end{tikzpicture}
\caption{An example showing that the removal of any minimum vertex-stabilizer does not preserve $\nu(G)$. Bold edges indicate the maximum-weight matching. }
\label{fig:preserve_vert}
\end{figure}

\end{document}